\documentclass[AMA,Times2COL]{WileyNJDv5}
\usepackage{mathrsfs}
\newcommand{\preceqop}{\mathrel{\preccurlyeq}}

\usepackage{physics}

\usepackage{newunicodechar}
\newunicodechar{∞}{$\infty$}
\usepackage{tikz}
\usetikzlibrary{arrows.meta, positioning}
\makeatletter

\newcommand{\ipX}[3][]{%
  \mathinner{\left\langle #2, #3 \right\rangle_{\!#1}}%
}

\providecommand{\@history@dates}{}
\providecommand{\@DOI@text}{}
\providecommand{\@doi}{}

\providecommand{\oddfoot@titlepage@info}{}
\providecommand{\evenfoot@titlepage@info}{}
\providecommand{\oddhead@titlepage@info}{}
\providecommand{\evenhead@titlepage@info}{}

\makeatother
\articletype{Original Article}%

\received{}
\revised{}
\accepted{}

\begin{document}

\title{Operator-Theoretic Stability and Observer Synthesis for Parameter-Dependent Vlasov--Maxwell Dynamics}

\author[1]{Amadou Cissé}

\author[1,2]{Mohamed Boutayeb}


\authormark{A. Cissé and M. Boutayeb}


\address[1]{CRAN-UMR-CNRS-7039 University of Lorraine, F-57000, France.}

\address[2]{Université Internationale de Rabat - TicLab, Maroc.}

\corres{A. Cissé PhD, CRAN, Cosnes-et-Romain, 54400, France. \email{cissea87@yahoo.fr; amadou.cisse@univ-lorraine.fr}}



\abstract[Abstract]{
An operator--theoretic formulation is developed for the synthesis of parameter-dependent controllers and observers for the Vlasov--Maxwell system.
The linearized dynamics are modeled as a non-autonomous evolution system whose generators depend on measurable plasma quantities.
Well-posedness of the associated evolution family is established together with uniform growth bounds.
Parameter-dependent Lyapunov operators yield operator differential LMIs ensuring uniform exponential stability and observer convergence.
An $H_\infty$ extension provides disturbance attenuation conditions consistent with the intrinsic energy structure of the coupled Vlasov--Maxwell equations.
Galerkin projections lead to finite-dimensional LMIs consistent with the operator inequalities, enabling reliable numerical synthesis while
preserving the analytical structure of the original model.
Numerical results on a reduced Vlasov--Maxwell benchmark confirm the predicted convergence properties.}

\keywords{Vlasov--Maxwell system, parameter-dependent systems, operator differential
LMI, observer design, $H_\infty$ control, kinetic PDEs, non-autonomous evolution systems, hyperbolic systems.}


\maketitle

\renewcommand\thefootnote{}

\section{Introduction}
\label{sec:intro}

The Vlasov--Maxwell (VM) system provides a kinetic description of collisionless plasmas by coupling the evolution of the particle distribution function $f(t,x,v)$ on phase space $(x,v)\in\Omega\times\mathbb{R}^3$ with the electromagnetic fields $(E,B)$ governed by Maxwell’s equations.
This formulation captures the interaction between microscopic particle transport and electromagnetic wave propagation, which is central to plasma physics, astrophysics, and fusion research.

From a mathematical standpoint, the VM equations form a nonlinear hyperbolic kinetic--electromagnetic system combining phase-space transport, Maxwell propagation, nonlocal field interactions, and conservation laws.
While global weak solutions are known to exist under suitable assumptions, stabilization, state estimation, and feedback design remain challenging due to the infinite-dimensional nature of the dynamics and the strong coupling
between kinetic and electromagnetic components.

Control-oriented studies of kinetic models have progressed in recent years.
For the VM system, Glass and Han-Kwan~\cite{GlassHanKwan2012} established local exact controllability in a relativistic two-dimensional setting, while Weber~\cite{Weber2021} investigated optimal control
formulations together with associated sensitivity analysis.
More advanced observer and stabilization results are available for the Vlasov--Poisson subsystem, including distributed observers and feedback laws based on density or potential measurements~\cite{cisse2020observers,cisse2024software,cisse2024state}.
Related optimal control studies with external actuation were also reported in~\cite{knopf2020optimal}. More recently, a particle-in-cell Monte Carlo framework for controlling a Vlasov--Poisson plasma was developed
in~\cite{bartsch2024controlling}, extending earlier control strategies with an effective numerical implementation.
These works demonstrate the relevance of modern control techniques for kinetic equations, although often under reduced models or specific geometrical assumptions.

In parallel, parameter-dependent system formulations have become an effective tool for stability, robustness, and gain-scheduling analysis in time-varying systems~\cite{chesi2005polynomially,a2019computational,
jetto2010efficient,wu2001lpv}.
Operator-valued extensions have been developed for parabolic, hyperbolic, and transport-type partial differential equations~\cite{apkarian2000parameterized,coron2007strict,
mironchenko2017characterizations}.
These approaches make it possible to represent infinite-dimensional dynamics through operator families depending on measurable operating quantities while preserving the analytical structure of the underlying PDE model.
Despite these advances, no parameter-dependent operator formulation has been proposed for the VM equations, even though macroscopic plasma quantities naturally evolve in time and influence the linearized dynamics.

Such a parameter-dependent representation is physically  motivated by the temporal variability of macroscopic plasma indicators, including total kinetic energy, electromagnetic field energy, and current-density moments.
These quantities are known to fluctuate even near equilibrium and affect dispersion relations and stability thresholds in kinetic regimes~\cite{pezzi2018velocity,kamaletdinov2024nonlinear,
ghizzo2024collisionless}.
They enter the linearized VM operator through effective plasma frequency, temperature, or anisotropic pressure
terms~\cite{brizard2007foundations,arro2022spectral}, making them natural scheduling variables for a time-varying representation.

The contribution of this paper is not a direct reuse of existing parameter-dependent synthesis methods, but their extension to the coupled Vlasov--Maxwell setting. The resulting formulation combines phase-space transport, current-induced electromagnetic coupling, and parameter-varying field dynamics within a unified operator-theoretic framework. Particular attention is devoted to the functional setting associated with the unbounded velocity domain, the skew-adjoint structure of the Maxwell operator, and the consistency between operator-level inequalities and finite-dimensional Galerkin approximations.

Well-posedness of the non-autonomous evolution equation is established, together with uniform bounds on the associated evolution family.
Parameter-dependent Lyapunov operators yield operator differential LMIs ensuring uniform exponential stability, while a dual formulation provides parameter-dependent observers with guaranteed convergence.
Robustness with respect to disturbances is addressed through an $H_\infty$ extension, and the consistency of Galerkin approximations is proven, ensuring that finite-dimensional LMIs provide reliable numerical surrogates of the operator inequalities.
Numerical simulations illustrate the effectiveness of the proposed synthesis, validate the convergence properties predicted by the theory, and demonstrate applicability to discretized VM models of practical size.

The paper is organized as follows.
Section~\ref{sec:model} presents the controlled VM system and its parameter-dependent formulation.
Section~\ref{sec:wellposed} establishes well-posedness of the parameter-varying dynamics.
Sections~\ref{sec:lyapunov} and~\ref{sec:observer} develop the Lyapunov analysis and the construction of the observer operator. Robust $H_\infty$ performance and joint synthesis are addressed in Sections~\ref{sec:Hinf} and~\ref{sec:joint-Hinf}. Section~\ref{sec:numerical} reports numerical results, and Section~\ref{sec:conclusion} concludes the paper.

\section{Model and Parameter-Dependent Formulation}
\label{sec:model}

A collisionless plasma is described through the relativistic Vlasov--Maxwell equations on a spatial domain $\Omega$, taken in the analysis below as a periodic domain identified with the flat torus $\mathbb T^3$, with velocity variable $v\in\mathbb{R}^3$.
When boundary traces are recalled, $\Omega$ may equivalently be viewed as a smooth bounded domain with boundary $\partial\Omega$.
The state variables are the particle distribution function $f(t,x,v)$, the electric field $E(t,x)$, and the magnetic field $B(t,x)$.

\subsection{Controlled Vlasov--Maxwell equations}

The controlled dynamics take the form
\begin{equation}\label{eq:VM-control}
\left\{
\begin{aligned}
&\partial_t f + v\!\cdot\nabla_x f 
  + \big(E + v\times B\big)\!\cdot\nabla_v f = 0,\\[0.4em]
&\partial_t E - c^2\,\nabla_x\times B = - J + u_E,\\[0.4em]
&\partial_t B + \nabla_x\times E = 0,\\[0.4em]
&\nabla_x\!\cdot E = \varrho - \varrho_0,\qquad 
 \nabla_x\!\cdot B = 0,
\end{aligned}
\right.
\end{equation}
where the macroscopic quantities
\[
\varrho(t,x)= \int_{\mathbb{R}^3} f(t,x,v)\,dv,
\qquad
J(t,x)= \int_{\mathbb{R}^3} v\, f(t,x,v)\,dv
\]
denote the charge and current densities.

The input $u_E$ denotes an external actuation channel entering the electromagnetic subsystem through the electric-field equation. It represents an
equivalent source term generated by boundary devices, coils, antennas, or distributed current actuation. This modeling choice preserves the intrinsic
Maxwell coupling, while the magnetic field evolves through the standard electromagnetic dynamics.

\medskip
Actuation in Vlasov--Maxwell control models is commonly provided through external electromagnetic devices such as radio-frequency antennas, magnetic coils, or electrostatic actuators.
These devices modify the Maxwell subsystem through controlled currents or boundary fields, and their effect on the kinetic dynamics propagates through the Lorentz force; see~\cite{svidzinski2024full,zaar2025enhanced}.

\medskip
The initial configuration is prescribed through
\[
f(0,x,v)=f_0(x,v),\qquad
E(0,x)=E_0(x),\qquad
B(0,x)=B_0(x),
\]
together with boundary conditions ensuring compatibility with Maxwell constraints.

Throughout the paper, periodic boundary conditions are assumed on the spatial domain $\Omega$, which is identified with a torus. This setting avoids boundary
compatibility issues and yields standard semigroup generation properties for the transport and Maxwell operators.

The present analysis is therefore restricted to the periodic framework and does not address reflecting or absorbing boundary conditions. In kinetic models with physical boundaries, additional phenomena such as boundary-induced singularity formation or concentration effects may arise, even for linearized Vlasov--Maxwell dynamics. The periodic setting adopted here excludes such boundary mechanisms and provides a mathematically consistent framework for the operator-theoretic analysis developed below.

\subsection{Functional Setting}
Because the velocity domain is unbounded, the kinetic component is not taken in a plain unweighted space when velocity moments are involved. We introduce the weighted kinetic space
\[
L^2_m(\Omega\times\mathbb{R}^3)
:=
\left\{
f:\int_{\Omega}\int_{\mathbb{R}^3}
|f(x,v)|^2m(v)\,dv\,dx<\infty
\right\},
\]
where
\[
m(v)=(1+|v|^2)^s,
\qquad s>\frac52.
\]
The exponent $s>5/2$ ensures that both the charge density and the current density associated with $f$ are well defined in $L^2(\Omega)$ and $L^2(\Omega;\mathbb R^3)$, respectively. Indeed, by Cauchy's inequality,
\[
\left\|\int_{\mathbb R^3} f(\cdot,v)\,dv\right\|_{L^2_x}
\le
\left(\int_{\mathbb R^3}\frac{1}{m(v)}\,dv\right)^{1/2}
\|f\|_{L^2_m},
\]
and
\[
\left\|\int_{\mathbb R^3} v f(\cdot,v)\,dv\right\|_{L^2_x}
\le
\left(\int_{\mathbb R^3}\frac{|v|^2}{m(v)}\,dv\right)^{1/2}
\|f\|_{L^2_m}.
\]

Let
\[
X_f = L^2_m(\Omega\times\mathbb{R}^3),\qquad 
X_E = L^2(\Omega;\mathbb{R}^3),\qquad 
X_B = L^2(\Omega;\mathbb{R}^3),
\]
and define the product state space
\[
\mathscr{X} := X_f \times X_E \times X_B.
\]
It is equipped with the natural Vlasov--Maxwell energy inner product
\[
\langle x_1 , x_2 \rangle_{\mathscr{X}}
= 
\langle f_1 , f_2 \rangle_{L^2_m}
+
\langle E_1 , E_2 \rangle_{L^2_x}
+
c^2\langle B_1 , B_2 \rangle_{L^2_x},
\]
and norm
\[
\|x\|_{\mathscr{X}}^2
=
\|f\|_{L^2_m}^2
+
\|E\|_{L^2_x}^2
+
c^2\|B\|_{L^2_x}^2 .
\]
This norm is equivalent to the standard product $L^2$ norm and is the one used below when adjoints and skew-adjointness are considered.

\paragraph*{Transport domain}
The kinetic transport operator is defined on the graph domain
\begin{equation}\label{eq:dom}
D(T)
=
\left\{
f\in X_f:\ v\!\cdot\nabla_x f\in X_f
\right\}.
\end{equation}
The corresponding transport--Maxwell operator used in the well-posedness analysis is introduced in Section~\ref{sec:wellposed}.
\paragraph*{The space $H(\mathrm{curl};\Omega)$ and tangential traces}
Consider
\[
H(\mathrm{curl};\Omega)=
\big\{\,u\in L^2(\Omega;\mathbb{R}^3):\
\nabla_x\times u\in L^2(\Omega;\mathbb{R}^3)\,\big\},
\]
with norm
\[
\|u\|_{H(\mathrm{curl})}^2
=
\|u\|_{L^2}^2+\|\nabla\times u\|_{L^2}^2 .
\]
For bounded Lipschitz domains, the tangential trace
$\gamma_t(u)=n\times u|_{\partial\Omega}$ defines a continuous mapping into
$H^{-1/2}(\mathrm{div}_{\Gamma},\partial\Omega)$.
Green's formula in this setting reads
\begin{equation}\label{eq:greens-curl}
\langle \nabla\times u , v \rangle_{L^2}
-
\langle u , \nabla\times v \rangle_{L^2}
=
\int_{\partial\Omega} (n\times u)\cdot v\, \mathrm{d}S ,
\quad
u,v\in H(\mathrm{curl};\Omega).
\end{equation}

\paragraph*{Maxwell operator}
Define the associated electromagnetic operator
\begin{equation}\label{eq:A_M}
\mathcal{A}_M
\begin{bmatrix} E \\[0.2em] B \end{bmatrix}
=
\begin{bmatrix}
0 & c^{2}\,\nabla_x \!\times \\[0.3em]
-\,\nabla_x \!\times & 0
\end{bmatrix}
\begin{bmatrix} E \\[0.2em] B \end{bmatrix}
=
\begin{bmatrix}
c^{2}\,\nabla_x \!\times B \\[0.3em]
-\,\nabla_x \!\times E
\end{bmatrix},
\end{equation}
$(E,B)\in \mathcal{D}(\mathcal{A}_M)$, with \(
\mathcal D(\mathcal A_M)
=
H(\mathrm{curl};\Omega)
\times
H(\mathrm{curl};\Omega).
\) 
The operator $\mathcal A(\rho)$ is defined on the common dense domain
\[
\mathcal D(\mathcal A)=
D(T)\times H(\mathrm{curl};\Omega)\times H(\mathrm{curl};\Omega),
\]
supplemented with the periodic boundary conditions specified above.

\begin{definition}
Given a densely defined linear operator
$T:\mathcal{D}(T)\subset \mathscr{H}\to \mathscr{H}$ on a Hilbert space
$(\mathscr{H},\langle\cdot,\cdot\rangle)$, its adjoint $T^{\star}$ is defined by
\[
\mathcal{D}(T^{\star})
:=
\Big\{y\in \mathscr{H}:\exists\,z\in\mathscr{H}\ \text{s.t.}\ 
\langle Tx,y\rangle=\langle x,z\rangle\ \ \forall x\in\mathcal{D}(T)\Big\},
\]
$T^{\star}y:=z.$
\end{definition}
In what follows, $T^{\star}$ is always understood with respect to the chosen energy inner product.

The following result identifies the skew-adjoint structure of the Maxwell operator.

\begin{proposition}
\label{prop:AM-skew}
Let $X_E$, $X_B$, and equip $X_E\times X_B$ with the electromagnetic energy inner product
\[
\langle (E,B),(E',B')\rangle_{\mathcal E}
=
\langle E,E'\rangle_{L^2}
+
c^2\langle B,B'\rangle_{L^2}.
\]
Assume boundary conditions on $(E,B)$ that make the boundary terms in Green's identity for $\nabla\times$ vanish, such as periodic or perfectly conducting boundaries. Then
\[
\mathcal{A}_M^{\star}=-\,\mathcal{A}_M .
\]
Consequently, $\mathcal{A}_M$ generates a unitary $C_0$-group on $X_E\times X_B$.
\end{proposition}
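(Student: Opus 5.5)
The plan is to prove skew-adjointness in two steps, first the inclusion $-\mathcal A_M\subset\mathcal A_M^\star$ (skew-symmetry) and then the reverse domain inclusion, and finally to deduce the unitary group from Stone's theorem. We work in the periodic setting $\Omega=\mathbb T^3$, where the boundary term in \eqref{eq:greens-curl} is absent. The perfectly conducting case is handled the same way once one imposes $n\times E=0$ on $\partial\Omega$ in the domain of $\mathcal A_M$.

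\emph{Step 1 (skew-symmetry).} Take $(E,B),(E',B')\in H(\mathrm{curl};\Omega)^2$. With the energy inner product,
\[
\langle \mathcal A_M(E,B),(E',B')\rangle_{\mathcal E}
= c^2\langle \nabla\times B,E'\rangle_{L^2} - c^2\langle \nabla\times E,B'\rangle_{L^2}.
\]
The weight $c^2$ on the magnetic component is exactly what cancels the factor $c^2$ in the first row of \eqref{eq:A_M}. Applying \eqref{eq:greens-curl} with vanishing boundary terms to both terms moves the curls onto $E'$ and $B'$, and gives
\[
\langle \mathcal A_M(E,B),(E',B')\rangle_{\mathcal E} = -\langle (E,B),\mathcal A_M(E',B')\rangle_{\mathcal E}.
\]
This shows $\mathcal D(\mathcal A_M)\subset\mathcal D(\mathcal A_M^\star)$ and $\mathcal A_M^\star=-\mathcal A_M$ on $\mathcal D(\mathcal A_M)$. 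On the torus the Green identity itself follows by density of smooth periodic fields in $H(\mathrm{curl})$.

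\emph{Step 2 (domain equality, the main obstacle).} Let $(E',B')\in\mathcal D(\mathcal A_M^\star)$ with $\mathcal A_M^\star(E',B')=(F,G)$. Test the defining identity with $(\varphi,0)$ and then with $(0,\psi)$, where $\varphi,\psi$ are smooth periodic fields:
\[
-c^2\langle \nabla\times\varphi,B'\rangle=\langle\varphi,F\rangle,\qquad c^2\langle\nabla\times\psi,E'\rangle=c^2\langle\psi,G\rangle.
\]
These say that the distributional curls satisfy $\nabla\times B'=-c^{-2}F\in L^2$ and $\nabla\times E'=G\in L^2$. Hence $E',B'\in H(\mathrm{curl};\Omega)$, so $\mathcal D(\mathcal A_M^\star)=\mathcal D(\mathcal A_M)$ and $\mathcal A_M^\star=-\mathcal A_M$.

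On a bounded domain the same test functions give $H(\mathrm{curl})$ regularity, but the boundary condition must be recovered separately. One does this by testing with fields that are nonzero on the boundary and using the trace formula to show that $n\times E'=0$. Here one must also take the domain to be $H_0(\mathrm{curl})\times H(\mathrm{curl})$ rather than the full product; otherwise the operator is not skew-adjoint. This asymmetry is the delicate point, and the periodic assumption removes it.

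\emph{Step 3 (generation).} Since $\mathcal A_M$ is densely defined (it contains smooth periodic fields) and skew-adjoint, $i\mathcal A_M$ is self-adjoint. Stone's theorem then yields a strongly continuous unitary group $e^{t\mathcal A_M}$, $t\in\mathbb R$, on $(X_E\times X_B,\langle\cdot,\cdot\rangle_{\mathcal E})$. Equivalently, Lumer--Phillips applies to both $\pm\mathcal A_M$: each is dissipative with dissipative adjoint, and closedness follows because adjoints are closed.

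If needed, the whole argument can be checked concretely in Fourier variables on $\mathbb T^3$. There $\mathcal A_M$ acts by the matrices $\begin{bmatrix}0& ic^2k\times\\ -ik\times&0\end{bmatrix}$, which are skew-Hermitian with respect to the weight $\mathrm{diag}(1,c^2)$, and $H(\mathrm{curl})$ is characterized by $\sum|k\times\hat u_k|^2<\infty$.
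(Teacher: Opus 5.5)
Your proposal is correct and follows the same route as the paper: skew-symmetry from Green's identity with vanishing boundary terms, then maximality, then Stone's theorem. The one difference is that the paper only invokes ``classical Maxwell theory'' for the domain equality $\mathcal D(\mathcal A_M^\star)=\mathcal D(\mathcal A_M)$. Your Step 2 actually proves it on $\mathbb T^3$ by testing with $(\varphi,0)$ and $(0,\psi)$, which gives $\nabla\times B'=-c^{-2}F\in L^2$ and $\nabla\times E'=G\in L^2$. You also rightly note that in the perfectly conducting case the domain must be $H_0(\mathrm{curl})\times H(\mathrm{curl})$ rather than the full product $H(\mathrm{curl})\times H(\mathrm{curl})$.
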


\begin{proof}
Take $(E,B),(E',B')\in\mathcal{D}(\mathcal{A}_M)$.  
Using~\eqref{eq:A_M} and the electromagnetic energy inner product gives
\begin{align*}
\big\langle \mathcal{A}_M(E,B),(E',B')\big\rangle_{\mathcal E}
&=
\langle c^{2}\nabla_x\times B, E'\rangle_{L^{2}}
+
c^2\langle -\nabla_x\times E, B'\rangle_{L^{2}}\\
&=
c^{2}\,\langle \nabla_x\times B, E'\rangle_{L^{2}}
-
c^2\langle \nabla_x\times E, B'\rangle_{L^{2}} .
\end{align*}
Green's identity yields
\[
\langle \nabla_x\times B , E' \rangle_{L^{2}}
=
\langle B , \nabla_x\times E' \rangle_{L^{2}}
+
\int_{\partial\Omega} (n\times B)\!\cdot E'\, dS,
\]
and
\[
\langle \nabla_x\times E , B' \rangle_{L^{2}}
=
\langle E , \nabla_x\times B' \rangle_{L^{2}}
+
\int_{\partial\Omega} (n\times E)\!\cdot B'\, dS .
\]
Hence
\begin{align*}
\big\langle \mathcal{A}_M(E,B),(E',B')\big\rangle_{\mathcal E}
&=
c^{2}\,\langle B,\nabla_x\times E' \rangle_{L^{2}}
-
c^2\langle E,\nabla_x\times B' \rangle_{L^{2}}\\ 
&+
c^{2}\!\int_{\partial\Omega} (n\times B)\!\cdot E'\,dS\\
&-
c^2\!\int_{\partial\Omega} (n\times E)\!\cdot B'\,dS .
\end{align*}
For perfectly conducting boundaries,
\(
n\times E =0,\quad n\cdot B=0
\quad\text{on }\partial\Omega,
\)
and the boundary terms vanish. In the periodic setting, opposite faces cancel as well. Thus
\[
\big\langle \mathcal{A}_M(E,B),(E',B')\big\rangle_{\mathcal E}
=
c^{2}\,\langle B,\nabla_x\times E' \rangle_{L^{2}}
-
c^2\langle E,\nabla_x\times B' \rangle_{L^{2}} .
\]
On the other hand,
\begin{align*}
\big\langle (E,B), \mathcal{A}_M(E',B') \big\rangle_{\mathcal E}
&=
\langle E,c^2\nabla_x\times B' \rangle_{L^{2}}
+
c^2\langle B,-\nabla_x\times E' \rangle_{L^{2}}\\
&=
c^2\langle E,\nabla_x\times B' \rangle_{L^{2}}
-
c^2\langle B,\nabla_x\times E' \rangle_{L^{2}} .
\end{align*}
Therefore,
\[
\big\langle \mathcal{A}_M(E,B),(E',B')\big\rangle_{\mathcal E}
=
-\big\langle (E,B),\mathcal{A}_M(E',B')\big\rangle_{\mathcal E},
\]
so $\mathcal{A}_M\subset -\mathcal{A}_M^\star$.
Classical Maxwell theory on PEC or periodic domains gives the maximality condition, hence $\mathcal{A}_M^\star=-\mathcal{A}_M$.
The unitary group property follows from Stone's theorem.
\end{proof}

\begin{remark}
The electromagnetic energy inner product may equivalently be written as
\[
\langle (E,B),(E',B')\rangle_{\mathcal{E}}
=
\varepsilon_0\,\langle E,E'\rangle_{L^2}
+
\mu_0^{-1}\,\langle B,B'\rangle_{L^2},
\quad
c^{2}=(\varepsilon_0\mu_0)^{-1}.
\]
This inner product is equivalent to the standard $L^2$ one, with a diagonal positive scaling on the $(E,B)$ components.
Since this scaling is bounded, self-adjoint, and coercive, the skew-adjoint structure is preserved in the corresponding energy space.
\end{remark}

\subsection{Nominal trajectory and perturbation variables}

Let $(\bar f,\bar E,\bar B)$ denote a sufficiently regular nominal solution of \eqref{eq:VM-control}, associated with prescribed controls
$\bar u_E$.
Define the perturbation variables
\[
\delta f = f - \bar f,\qquad
\delta E = E - \bar E,\qquad
\delta B = B - \bar B,
\]
and introduce the charge and current variations
\[
\delta\varrho(t,x)
=
\int_{\mathbb{R}^3}\delta f(t,x,v)\,dv,
\qquad
\delta J(t,x)
=
\int_{\mathbb{R}^3} v\,\delta f(t,x,v)\,dv.
\]
The control is decomposed as
\[
u_E = \bar u_E + \delta u_E.
\]
Linearization of~\eqref{eq:VM-control} around $(\bar f,\bar E,\bar B)$ yields
\begin{equation}\label{eq:VM-linearized}
\left\{
\begin{aligned}
&\partial_t \delta f
+ v\cdot\nabla_x \delta f
+ (\bar E + v\times \bar B)\cdot\nabla_v \delta f
+ (\delta E + v\times \delta B)\cdot\nabla_v \bar f = 0,\\[0.3em]
&\partial_t \delta E - c^2\nabla_x\times \delta B
= -\,\delta J + \delta u_E,\\[0.3em]
&\partial_t \delta B + \nabla_x\times \delta E
= 0,\\[0.3em]
&\nabla_x\!\cdot \delta E = \delta\varrho,
\qquad
\nabla_x\!\cdot \delta B = 0.
\end{aligned}
\right.
\end{equation}

System~\eqref{eq:VM-linearized} couples phase-space transport with the linearized Maxwell equations.
The terms
$(\bar E + v\times\bar B)\cdot\nabla_v \delta f$ and
$(\delta E + v\times\delta B)\cdot\nabla_v \bar f$
represent, respectively, transport along the Lorentz force induced by the
nominal fields and sensitivity of the particle distribution to electromagnetic
perturbations.
The constraints $\nabla_x\!\cdot\delta E = \delta\varrho$ and $\nabla_x\!\cdot\delta B = 0$ enforce Gauss' laws at the linearized level.

\subsection{Parameter scheduling and parameter-dependent operator structure}


The coefficients in~\eqref{eq:VM-linearized} depend on a nominal operating state parameterized by a physically measurable parameter.
Assume that the parameter trajectory
$\rho:[0,\infty)\to\mathbb P$ is piecewise $\mathcal C^1$, where $\mathbb P\subset\mathbb R^p$ is a compact admissible parameter set with nonempty interior. Typical components of $\rho(t)$ include averaged kinetic energy, plasma frequency, or electromagnetic field magnitudes.

The nominal fields
\(
(\bar f(\rho),\bar E(\rho),\bar B(\rho))
\)
are interpreted as a family of frozen operating regimes parameterized by \(\rho\in\mathbb P\), rather than as arbitrary explicitly time-dependent background trajectories. Consequently, for each fixed parameter value
\(\rho\), the associated operator \(\mathcal A(\rho)\) is autonomous.
The non-autonomous character of the evolution is induced solely through the time variation of the parameter trajectory \(\rho(\cdot)\), which continuously connects the corresponding operating regimes.

\smallskip
Define the perturbation state and control variables as
\[
X(t)=
\begin{bmatrix}
\delta f(t)\\[0.2em]
\delta E(t)\\[0.2em]
\delta B(t)
\end{bmatrix},
\qquad
u(t)=
\begin{bmatrix}
\delta u_E(t)\\[0.2em]
0
\end{bmatrix}.
\]
Then the variational system~\eqref{eq:VM-linearized} admits the abstract parameter-dependent representation on the state space $\mathscr{X}$
\begin{equation}\label{eq:LPV-abstract}
\dot X(t)=\mathcal{A}(\rho(t))X(t)+\mathcal{B}(\rho(t))u(t),
\qquad
y(t)=\mathcal{C}(\rho(t))X(t).
\end{equation}

The linear operators $\mathcal{A}(\rho)$, $\mathcal{B}(\rho)$, and $\mathcal{C}(\rho)$ encode the transport dynamics, Lorentz coupling, and Maxwell equations.
Their block representations are
\begin{align*}
\mathcal{A}(\rho)=
\begin{bmatrix}
-\,v\!\cdot\nabla_x
-(\bar E(\rho)+v\times\bar B(\rho))\!\cdot\nabla_v
&
-\nabla_v \bar f(\rho)\!\cdot
&
0\\[0.4em]
-\,J[\cdot]
&
0
&
c^2\nabla_x\times\\[0.4em]
0
&
-\nabla_x\times
&
0
\end{bmatrix},
\end{align*}
\(\mathcal{B}(\rho)=
\begin{bmatrix}
0\\[0.2em]
I\\[0.2em]
0
\end{bmatrix}.\)

The current functional is
\[
J[\delta f](x)=\int_{\mathbb{R}^3} v\,\delta f(x,v)\,dv.
\]
By the choice $X_f=L^2_m(\Omega\times\mathbb R^3)$ with $s>5/2$, this functional is bounded from $X_f$ into $L^2(\Omega;\mathbb R^3)$.

\smallskip
The observation operator $\mathcal{C}(\rho)$ represents the measurement process and maps the state perturbation
$X=(\delta f,\delta E,\delta B)$ to observable macroscopic quantities.
Depending on the sensing configuration, the output space is taken as $Y=L^2(\Omega;\mathbb{R}^3)$, and
$\mathcal{C}(\rho)\in\mathcal{L}(\mathscr{X},Y)$ denotes a bounded linear projection or restriction operator extracting, for instance, components of the electric field or charge-related quantities.
The control space is chosen as
\[
U=L^2(\Omega;\mathbb{R}^3)
\]
for electromagnetic actuation.

\smallskip
The operator $\mathcal{A}(\rho)$ acts on the common dense domain $\mathcal{D}(\mathcal{A})$
independent of $\rho$.
The Gauss constraints
\[
\nabla_x\!\cdot\delta E=\int_{\mathbb{R}^3}\delta f\,dv,
\qquad
\nabla_x\!\cdot\delta B=0
\]
are imposed as compatibility conditions on admissible initial data.

\smallskip
Smoothness of $(\bar f,\bar E,\bar B)$ ensures that the mapping
\[
\rho\mapsto \mathcal{A}(\rho)
\]
is continuous from $\mathbb{P}$ into
$\mathcal{L}(\mathcal{D}(\mathcal{A}),\mathscr{X})$, and that each operator $\mathcal{A}(\rho)$ generates a strongly continuous semigroup on $\mathscr{X}$.
Moreover, the corresponding flow preserves the Gauss constraints, so that the associated subspace $\mathscr{X}_{\mathrm G}\subset\mathscr{X}$ is invariant
under the dynamics.

\smallskip
The next section establishes well-posedness of the non-autonomous system
\eqref{eq:LPV-abstract}.

\section{Well-posedness of the Parameter-Dependent Vlasov--Maxwell System}
\label{sec:wellposed}

The operator family $\mathcal{A}(\rho)$ introduced in~\eqref{eq:LPV-abstract} depends on a measurable scheduling parameter and describes the linearized
kinetic--electromagnetic dynamics along a nominal trajectory.  
This section establishes existence, uniqueness, and regularity of the corresponding non-autonomous evolution.

\subsection{Assumptions on the operator family}

The following hypotheses specify the structural and regularity properties required for the parameter-dependent operator family $\{\mathcal{A}(\rho)\}_{\rho\in\mathbb{P}}$ to generate a well-posed non-autonomous evolution system.

\begin{assumption}[Structural properties]\label{ass:A}
For every $\rho\in\mathbb{P}$, the operator $\mathcal{A}(\rho):\mathcal{D}(\mathcal{A})\subset\mathscr{X}\to\mathscr{X}$
satisfies:
\begin{enumerate}
    \item[\rm(i)] $\mathcal{D}(\mathcal{A})$ is dense in $\mathscr{X}$ and does not depend on $\rho$;
    \item[\rm(ii)] each frozen operator $\mathcal{A}(\rho)$ generates a $\mathcal{C}_0$-semigroup on $\mathscr{X}$;
    \item[\rm(iii)] the family $\{\mathcal{A}(\rho)\}_{\rho\in\mathbb P}$ satisfies the Kato stability conditions uniformly on $\mathbb P$;
    \item[\rm(iv)] the mapping
    $\rho\mapsto\mathcal{A}(\rho)$ is continuous from $\mathbb P$ into
    $\mathcal{L}(\mathcal{D}(\mathcal{A}),\mathscr{X})$.
\end{enumerate}
\end{assumption}

\begin{assumption}[Parameter trajectories]\label{ass:rho}
The scheduling map $\rho:[0,\infty)\to\mathbb{P}$ is piecewise $C^{1}$ and
satisfies 
\[
\dot\rho(t)\in\mathcal{V},
\qquad 
\|\dot\rho(t)\|\le \nu_{\max}
\quad\text{for a.e. } t\ge 0 ,
\]
where $\mathcal{V}$ is a compact set of admissible parameter rates, and $\nu_{\max}>0$ denotes its maximal radius, i.e.,
\[
\nu_{\max}:=\max_{\nu\in\mathcal{V}}\|\nu\|.
\]
\end{assumption}

\begin{assumption}[Regularity of bounded families]
\label{ass:bounded-families}
For every $\rho\in\mathbb{P}$, the observation operator
$\mathcal{C}(\rho)$ and the Lyapunov-operator derivatives
$\partial_{\rho_i}\mathcal{P}_o(\rho)$, $i=1,\dots,p$, are bounded on $\mathscr{X}$ and depend continuously on $\rho$ in the operator norm.
Moreover,
\[
\sup_{\rho\in\mathbb{P}}
\Big(
\|\mathcal{C}(\rho)\|_{\mathcal{L}(\mathscr{X},Y)}
+
\sum_{i=1}^{p}
\|\partial_{\rho_i}\mathcal{P}_o(\rho)\|_{\mathcal{L}(\mathscr{X})}
\Big)<\infty .
\]
\end{assumption}

\begin{assumption}[Initial conditions]\label{ass:init}
The control input $u(\cdot)$ lies in $L^2_{\rm loc}([0,\infty);U)$ and the initial condition satisfies $X(0)=X_0\in\mathscr{X}$.
\end{assumption}

\subsection{Semigroup generation}

The unforced kinetic--electromagnetic dynamics are represented by the operator family
\begin{equation}\label{eq:opA_0}
\mathcal{A}_0(\rho)=
\begin{bmatrix}
-\,v\!\cdot\!\nabla_x
-
(\bar E(\rho)+v\times\bar B(\rho))\!\cdot\!\nabla_v

& 0 & 0\\[0.2em]
0 & 0 & c^2\nabla_x\times\\[0.2em]
0 & -\nabla_x\times & 0
\end{bmatrix},
\end{equation}
acting on the Hilbert space
\(
\mathscr X
\)
equipped with the energy inner product introduced in
Section~\ref{sec:model}.

For each fixed parameter \(\rho\in\mathbb P\), the nominal fields
\(
(\bar f(\rho),\bar E(\rho),\bar B(\rho))
\)
are assumed independent of time. Hence the non-autonomous character of the evolution originates exclusively from the parameter trajectory \(\rho(\cdot)\), and not from an explicitly time-dependent background state.
This setting is compatible with the evolution-family framework adopted in Theorem~\ref{thm:wellposed}.

The domain is defined by
\[
\mathcal D(\mathcal A_0)
=
D(T)\times H(\mathrm{curl};\Omega)\times H(\mathrm{curl};\Omega),
\]
where the transport domain \(D(T)\) is defined in~\eqref{eq:dom}.

The transport operator therefore contains the full first-order kinetic transport structure
\[
T_\rho f
=
-\,v\!\cdot\!\nabla_x f
-
(\bar E(\rho)+v\times\bar B(\rho))\!\cdot\!\nabla_v f .
\]
This point is essential since the operator
\(
(\bar E(\rho)+v\times\bar B(\rho))\cdot\nabla_v
\)
is itself unbounded on \(X_f\), and therefore cannot be treated as a bounded perturbation of
\(
-v\cdot\nabla_x
\).

In the periodic setting considered throughout this work, no boundary condition other than spatial periodicity is required.

Notice that the condition
\(
f\in H^1_{x,v}(\Omega\times\mathbb R^3)
\)
does not in general imply
\(
v\!\cdot\!\nabla_x f\in L^2_{x,v},
\)
due to the unboundedness of the velocity variable. For this reason, the transport operator is defined on its natural graph domain \(D(T)\).

Assume moreover that
\(
\bar E(\rho),\bar B(\rho)
\in
W^{1,\infty}(\Omega;\mathbb R^3),
\;
\forall \rho\in\mathbb P,
\)
and that the mapping
\(
\rho\mapsto (\bar E(\rho),\bar B(\rho))
\)
is continuous on \(\mathbb P\).
Under these assumptions, \(T_\rho\) defines a closed transport operator on \(X_f\) with common dense domain \(D(T)\).

The skew-adjointness of the Maxwell block is a key ingredient for the semigroup generation result.

\begin{proposition}
\label{prop:A0-skew}
For each fixed parameter \(\rho\in\mathbb P\), the operator
\(\mathcal A_0(\rho)\) defined in~\eqref{eq:opA_0} generates a strongly continuous semigroup on \(\mathscr X\). Moreover, the Maxwell block is skew-adjoint and generates a unitary \(C_0\)-group on the electromagnetic
energy space.
\end{proposition}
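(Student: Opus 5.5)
Since $\mathcal A_0(\rho)$ in~\eqref{eq:opA_0} is block diagonal, $\mathcal A_0(\rho)=\mathrm{diag}(T_\rho,\mathcal A_M)$ on $\mathscr X = X_f\times(X_E\times X_B)$, and the energy inner product is the orthogonal sum of the $L^2_m$ product and the electromagnetic product $\langle\cdot,\cdot\rangle_{\mathcal E}$, I would prove generation separately for each block. The direct sum of two $C_0$-semigroups on orthogonal factors is again a $C_0$-semigroup whose generator is the diagonal operator on the product domain $D(T)\times\mathcal D(\mathcal A_M)$. The Maxwell part is exactly Proposition~\ref{prop:AM-skew}: with periodic boundary conditions the boundary terms in~\eqref{eq:greens-curl} cancel, so $\mathcal A_M^\star=-\mathcal A_M$, and Stone's theorem gives a unitary $C_0$-group. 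This settles the second assertion, and what remains is the transport block $T_\rho$ on $X_f=L^2_m$.

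For $T_\rho$ I would use the method of characteristics together with the Lumer--Phillips theorem in a shifted form. The vector field $a_\rho(x,v)=(v,\ \bar E(\rho)+v\times\bar B(\rho))$ is globally Lipschitz in $x$ and affine in $v$ with $W^{1,\infty}$ coefficients, so it grows at most linearly. Its flow $Z_\rho(t;x,v)$ is therefore globally defined on $\mathbb T^3\times\mathbb R^3$. It also preserves Lebesgue measure, because $\operatorname{div}_{x,v}a_\rho=\nabla_v\cdot(v\times\bar B)=0$. I would set $(S_\rho(t)f)(x,v)=f(Z_\rho(-t;x,v))$. In unweighted $L^2$ this family is a unitary group.

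In $L^2_m$ the weight changes along characteristics. From $\tfrac{d}{dt}|v|^2=2v\cdot\bar E$ one gets
\[
\Big|\tfrac{d}{dt}\log m(v(t))\Big|\le 2s\,\|\bar E(\rho)\|_{L^\infty}(1+|v|^2)^{-1/2}|v|\le 2s\|\bar E(\rho)\|_{L^\infty}.
\]
Hence $m(V(-t))\le e^{2s\|\bar E\|_\infty |t|}m(v)$, which gives $\|S_\rho(t)\|_{\mathcal L(L^2_m)}\le e^{\omega_\rho |t|}$ with $\omega_\rho=s\|\bar E(\rho)\|_{L^\infty}$. Strong continuity follows from density of $C_c^\infty$ in $L^2_m$, continuity of the flow, and this uniform bound.

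The main obstacle is showing that the generator of $S_\rho$ is exactly the closed operator $T_\rho$ on the common domain $D(T)$, rather than some other closed extension. This matters because the $\nabla_v$-term is unbounded and cannot be handled perturbatively. I would first check on $C_c^\infty$ that the generator restricts to $T_\rho$. Then I would show that $C_c^\infty$ is a core, being dense and invariant under $S_\rho(t)$ since the flow maps compact sets to compact sets. By Nelson's core theorem the generator is then the closure of $T_\rho|_{C_c^\infty}$. Equivalently, in Lumer--Phillips form, the computation
\[
\langle T_\rho f,f\rangle_{L^2_m}=\tfrac12\int f^2\,a_\rho\cdot\nabla m=s\int f^2 (v\cdot\bar E)(1+|v|^2)^{-1}m\le \omega_\rho\|f\|^2_{L^2_m}
\]
shows that $T_\rho-\omega_\rho$ is dissipative, and the range condition comes from solving the resolvent equation along characteristics. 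Identifying this closure with the domain $D(T)$ of~\eqref{eq:dom} requires a Friedrichs mollification argument, using that $a_\rho$ is Lipschitz so the commutator $[a_\rho\cdot\nabla,\ \text{mollifier}]$ stays bounded on $L^2_m$ for functions localized in $v$. This is the step where I expect most of the technical care, and I would treat the graph domain as the maximal domain of $T_\rho$. Finally, I would combine the two blocks to obtain the strongly continuous semigroup $\mathrm{diag}(S_\rho(t),e^{t\mathcal A_M})$ with bound $e^{\omega_\rho t}$.
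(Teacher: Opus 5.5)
Your proposal is correct and follows the paper's route. You split $\mathcal A_0(\rho)=\mathrm{diag}(T_\rho,\mathcal A_M)$ along the orthogonal energy decomposition, take the Maxwell block from Proposition~\ref{prop:AM-skew} with Stone's theorem, and generate the transport block from the divergence-free field $(v,\bar E(\rho)+v\times\bar B(\rho))$. Where the paper only cites ``standard transport theory'', you carry the argument out. In particular you give the growth bound $e^{s\|\bar E(\rho)\|_{L^\infty}|t|}$, which is needed because $\bar E$ changes $|v|$, so divergence-freeness alone does not make the flow isometric on $L^2_m$. You also read $D(T)$ as the maximal domain of $T_\rho$, which is necessary since the set in \eqref{eq:dom} only controls $v\cdot\nabla_x f$.

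There is one slip in the core step. With $\bar E(\rho),\bar B(\rho)$ only in $W^{1,\infty}$, the characteristic flow is merely locally bi-Lipschitz, so $C_c^\infty$ is not invariant under $S_\rho(t)$. Use compactly supported Lipschitz functions as the invariant core instead, since they are dense, invariant, and lie in the generator's domain; alternatively, rely on your Lumer--Phillips variant.
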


\begin{proof}
Fix \(\rho\in\mathbb P\).
The kinetic operator
\(
T_\rho
=
-\,v\!\cdot\!\nabla_x
-
(\bar E(\rho)+v\times\bar B(\rho))\!\cdot\!\nabla_v
\)
is a first-order transport operator with bounded coefficients on the periodic phase space.

Furthermore,
\(
\nabla_v\!\cdot(\bar E(\rho)+v\times\bar B(\rho))=0,
\)
since \(\bar E(\rho)\) is independent of \(v\) and
\(
\nabla_v\!\cdot(v\times\bar B(\rho))=0.
\)
Hence the transport field is divergence free in the velocity variable, which preserves the conservative structure of the Vlasov dynamics.

Standard transport theory on periodic domains therefore yields generation of a strongly continuous semigroup on
\(
X_f
\).
The Maxwell operator has already been shown to be skew-adjoint in Proposition~\ref{prop:AM-skew}. Hence the block
\[
\begin{bmatrix}
0 & c^{2}\nabla_x\times\\
-\nabla_x\times & 0
\end{bmatrix}
\]
is skew-adjoint on
\(
H(\mathrm{curl};\Omega)\times H(\mathrm{curl};\Omega)
\),
with respect to the electromagnetic energy inner product.

The kinetic and electromagnetic blocks act on orthogonal components of \(\mathscr X\), and their domains form a direct product. Consequently, \(\mathcal A_0(\rho)\) generates a strongly continuous semigroup on \(\mathscr X\).
\end{proof}

The remaining coupling terms of the linearized dynamics define lower-order perturbations. More precisely, define
\[
\mathcal G(\rho)
:=
\mathcal A(\rho)-\mathcal A_0(\rho).
\]
Then the only remaining kinetic coupling contribution is
\(
(\delta E+v\times\delta B)\cdot\nabla_v\bar f(\rho),
\)
which acts multiplicatively on the perturbation variables.
Assuming
\(
\nabla_v\bar f(\rho)
\in
L^\infty(\Omega\times\mathbb R^3),
\)
with sufficient decay in the velocity variable, this term defines a bounded operator on \(\mathscr X\).

Moreover, the weighted kinetic space
\(X_f\), with \(s>\frac52\), ensures that the current functional
\[
J[\delta f](x)=\int_{\mathbb R^3}v\,\delta f(x,v)\,dv
\]
is well defined as an element of \(L^2(\Omega;\mathbb R^3)\). Consequently,
\(\mathcal G(\rho)\) acts as a bounded perturbation on the common domain
\(
\mathcal D(\mathcal A)
=
\mathcal D(\mathcal A_0).
\)
Therefore, for each fixed \(\rho\in\mathbb P\), the operator \(\mathcal A(\rho)\) generates a $\mathcal C_0$-semigroup \((T_\rho(t))_{t\ge0}\).

Furthermore, by continuity of the coefficients with respect to \(\rho\) and compactness of \(\mathbb P\), there exist constants \(M\ge1\) and \(\omega\in\mathbb R\), independent of \(\rho\), such that
\[
\|T_\rho(t)\|_{\mathcal L(\mathscr X)}
\le
M e^{\omega t},
\qquad t\ge0.
\]

\subsection{Non-autonomous evolution for time-varying parameters}

The parameter-dependent system
\begin{equation}
\dot X(t)=\mathcal{A}(\rho(t))X(t)+\mathcal{B}(\rho(t))u(t),
\qquad X(0)=X_0,
\end{equation}
admits a unique mild solution constructed through a non-autonomous evolution family.

The following theorem establishes well-posedness of the parameter-dependent evolution system.

\begin{theorem}
\label{thm:wellposed}
Under Assumptions~\ref{ass:A}--\ref{ass:init}, for every admissible parameter trajectory $\rho(\cdot)$ and every input $u\in L^2_{\mathrm{loc}}([0,\infty);U)$, there exists a unique mild solution
\(
X(\cdot)\in C([0,\infty);\mathscr{X})
\)
of the parameter-dependent system, given for all $t\ge s\ge0$ by
\begin{equation}\label{eq:evolution-family}
X(t)=\mathcal{S}_{\rho}(t,s)X(s)
+\int_s^t
\mathcal{S}_\rho(t,\tau)\mathcal{B}(\rho(\tau))u(\tau)\,d\tau,
\end{equation}
where $(\mathcal{S}_{\rho})_{t\ge s}$ is a strongly continuous evolution family
satisfying
\[
\|\mathcal{S}_{\rho}(t,s)\|_{\mathcal{L}(\mathscr{X})}
\le M e^{\omega(t-s)}, \qquad t\ge s\ge0.
\]
\end{theorem}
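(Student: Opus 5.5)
The plan is to build the evolution family with Kato's theory for hyperbolic evolution equations, applied to the family $t\mapsto\mathcal A(\rho(t))$. Once the family $\mathcal S_\rho(t,s)$ exists, the inhomogeneous term is handled by the variation-of-constants formula~\eqref{eq:evolution-family}.

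\textbf{Step 1: checking Kato's hypotheses along the trajectory.} Fix an admissible trajectory $\rho(\cdot)$ and a horizon $[0,T]$, and set $A(t):=\mathcal A(\rho(t))$.
\begin{itemize}
\item By Assumption~\ref{ass:A}(i) the domain $\mathcal D(A(t))=\mathcal D(\mathcal A)$ does not depend on $t$.
\item By Assumption~\ref{ass:A}(ii) each $A(t)$ generates a $C_0$-semigroup. By Assumption~\ref{ass:A}(iii) the family $\{A(t)\}$ is Kato-stable with constants $(M,\omega)$ independent of $t$, since $\rho(t)\in\mathbb P$ for all $t$. These are the same uniform constants obtained just before the theorem from compactness of $\mathbb P$, giving $\|T_\rho(t)\|\le Me^{\omega t}$.
\item For regularity in time, Assumption~\ref{ass:A}(iv) together with Assumption~\ref{ass:rho} shows that $t\mapsto A(t)\in\mathcal L(\mathcal D(\mathcal A),\mathscr X)$ is continuous. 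On each interval where $\rho$ is $C^1$, this map is strongly $C^1$ as well. That statement needs $\rho\mapsto\mathcal A(\rho)$ to be $C^1$, which holds when $\bar f,\bar E,\bar B$ depend smoothly on $\rho$. Otherwise I would use the version of Kato's theorem that only asks for bounded variation or Lipschitz dependence in $\mathcal L(\mathcal D,\mathscr X)$, and the rate bound $\nu_{\max}$ supplies that Lipschitz constant.
\end{itemize}
Kato's theorem (Pazy, Thm.~5.4.8) then yields a unique strongly continuous evolution family $\mathcal S_\rho(t,s)$ on each $C^1$ piece, with $\|\mathcal S_\rho(t,s)\|\le Me^{\omega(t-s)}$.

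\textbf{Step 2: concatenation across switching times.} Let $0=t_0<t_1<\dots$ be the break points of $\rho$, assumed locally finite. Define $\mathcal S_\rho(t,s)$ for arbitrary $s\le t$ as the product of the piecewise families. For this product to keep the bound $Me^{\omega(t-s)}$ without picking up a factor $M^k$, I will use the Kato-stability constants of Assumption~\ref{ass:A}(iii). By definition these bound arbitrary products of frozen resolvents and semigroups, which is exactly the case at hand. Alternatively, I would renorm $\mathscr X$ in a stability-adapted way so that $M=1$ on the whole of $\mathbb P$. Strong continuity and the cocycle property $\mathcal S(t,r)\mathcal S(r,s)=\mathcal S(t,s)$ carry over from the pieces.

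\textbf{Step 3: inhomogeneous term and uniqueness.}
\begin{itemize}
\item \emph{The forcing term.} $\mathcal B(\rho)$ is the constant bounded injection $U\to X_E$, so $\tau\mapsto \mathcal S_\rho(t,\tau)\mathcal B u(\tau)$ is strongly measurable. Its norm is bounded by $Me^{\omega(t-\tau)}\|u(\tau)\|$, which lies in $L^1_{\rm loc}$ because $L^2_{\rm loc}\subset L^1_{\rm loc}$. Hence the integral in~\eqref{eq:evolution-family} is a well-defined Bochner integral.
\item \emph{Continuity.} Continuity in $t$ follows from strong continuity of $\mathcal S_\rho$ and dominated convergence, so $X\in C([0,\infty);\mathscr X)$.
\item \emph{Semigroup-type identity.} The formula with general $s$ follows from the case $s=0$ and the cocycle property.
\item \emph{Uniqueness.} Take two mild solutions with the same data. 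Their difference $Z$ satisfies $Z(t)=\mathcal S_\rho(t,0)\,0=0$. For a more robust argument against weak solutions, I would pair the difference with $\mathcal S_\rho(t,\cdot)^\star$ applied to test vectors in $\mathcal D(\mathcal A^\star)$.
\end{itemize}

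\textbf{Main obstacle.} I expect the hardest part to be Step 1/2: obtaining \emph{uniform} Kato stability with constants independent of the trajectory. A frozen bound $\|T_\rho(t)\|\le Me^{\omega t}$ with $M>1$ does not by itself control products over many switches. My first attempt is to exploit the structure of the operator. The transport part is divergence-free in $v$, but it is not isometric on $L^2_m$: the weight $m(v)$ combined with the magnetic term $v\times\bar B$, which is tangential, and with the $\bar E\cdot\nabla_v$ term, which is not, produces a commutator $(\bar E\cdot\nabla_v m)/m$. This commutator is bounded by $C\|\bar E\|_\infty$. The Maxwell block is skew-adjoint by Proposition~\ref{prop:A0-skew}, and $\mathcal G(\rho)$ is bounded uniformly on $\mathbb P$. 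Combining these shows that every $\mathcal A(\rho)-\omega I$ is dissipative in the fixed energy inner product, with $\omega=C\sup_\rho\|\bar E(\rho)\|_\infty+\sup_\rho\|\mathcal G(\rho)\|$. This gives $M=1$ and makes the family Kato-stable with constants $(1,\omega)$.
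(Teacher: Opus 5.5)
Your route is the paper's: verify Kato's hypotheses for $t\mapsto\mathcal A(\rho(t))$, obtain $\mathcal S_\rho(t,s)$ with $\|\mathcal S_\rho(t,s)\|\le Me^{\omega(t-s)}$, and define the solution by variation of constants. Step~3 matches the paper and is fine.

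Two of your additions improve on the paper.
\begin{itemize}
\item The concatenation at break points, using stability along arbitrary time-ordered parameter sequences, also handles possible jumps of $\rho$. At such jumps the paper's claim that $t\mapsto\mathcal A(\rho(t))x$ is continuous would fail.
\item Your weighted-energy computation is correct. Since $\nabla_v\cdot(\bar E+v\times\bar B)=0$ and $(v\times\bar B)\cdot v=0$, one gets $\langle T_\rho f,f\rangle_{L^2_m}=s\int\!\!\int f^2\,(\bar E\cdot v)(1+|v|^2)^{s-1}\,dv\,dx\le\frac{s}{2}\|\bar E\|_\infty\|f\|_{L^2_m}^2$. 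Together with the skew-adjoint Maxwell block and bounded $\mathcal G(\rho)$, every $\mathcal A(\rho)$ is $\omega$-dissipative in the fixed energy norm. Generation plus Lumer--Phillips then gives Kato stability with $M=1$, which the paper simply postulates in Assumption~\ref{ass:A}(iii).
\end{itemize}
Your Step~1 remark that the frozen bound $\|T_\rho(t)\|\le Me^{\omega t}$ supplies the Kato constants is wrong for $M>1$, as you note yourself later. It is harmless given (iii) or your $M=1$ argument.

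The step that fails as written is the time regularity in Step~1.
\begin{itemize}
\item The theorem you quote (Pazy, Thm.~5.4.8) needs $t\mapsto\mathcal A(\rho(t))x$ to be continuously differentiable. Other versions need bounded variation in $\mathcal L(\mathcal D(\mathcal A),\mathscr X)$, or norm continuity plus admissibility and stability of the parts in $\mathcal D(\mathcal A)$ (Pazy, Thm.~5.3.1).
\item Assumptions~\ref{ass:A}--\ref{ass:init} give only continuity of $\rho\mapsto\mathcal A(\rho)$ into $\mathcal L(\mathcal D(\mathcal A),\mathscr X)$.
\item Your fallback does not close this gap. $\nu_{\max}$ makes the path $\rho(\cdot)$ Lipschitz, but an estimate $\|\mathcal A(\rho(t))-\mathcal A(\rho(t'))\|_{\mathcal L(\mathcal D(\mathcal A),\mathscr X)}\le \mathrm{Lip}(\mathcal A)\,\nu_{\max}|t-t'|$ requires $\rho\mapsto\mathcal A(\rho)$ itself to be Lipschitz. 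A merely continuous map composed with a Lipschitz path need not be Lipschitz or even of bounded variation.
\item Bounded-perturbation theory, where continuity would suffice, is not available either. The $\rho$-dependence sits in the unbounded part $(\bar E(\rho)+v\times\bar B(\rho))\cdot\nabla_v$.
\end{itemize}
So you must state an extra hypothesis, for example that $\rho\mapsto\mathcal A(\rho)$ is $C^1$ (or Lipschitz) into $\mathcal L(\mathcal D(\mathcal A),\mathscr X)$. The paper's remark after the theorem assumes exactly this for strong solutions, and with it your argument is complete.

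The paper avoids the issue only because its appendix form of Kato's theorem (Theorem~\ref{thm:Kato}, hypothesis (iii)) asks for mere continuity. That is weaker than what the cited results actually require. Your caution is justified, but the repair has to be an explicit assumption, not the rate bound $\nu_{\max}$.
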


\begin{proof}
By Assumption~\ref{ass:A}, the family
$\{\mathcal{A}(\rho)\}_{\rho\in\mathbb P}$ has a common dense domain, each frozen operator generates a $\mathcal C_0$-semigroup, and the Kato stability bounds are uniform on $\mathbb P$.
Moreover, the continuity of
\[
\rho\mapsto\mathcal A(\rho)
\]
from $\mathbb P$ into
$\mathcal L(\mathcal D(\mathcal A),\mathscr X)$, together with the piecewise
$C^1$ regularity of $\rho(\cdot)$, implies that
$t\mapsto\mathcal A(\rho(t))x$ is continuous for every
$x\in\mathcal D(\mathcal A)$.

Kato's non-autonomous evolution theorem~\ref{thm:Kato} then yields a unique strongly continuous evolution family
$(\mathcal S_{\rho}(t,s))_{t\ge s\ge0}$ associated with
$\mathcal A(\rho(t))$, satisfying the composition rule, identity property, and uniform exponential bound.

For any admissible input $u$, the integral term in
\eqref{eq:evolution-family} is well defined in $\mathscr{X}$, and the variation-of-constants formula provides a mild solution. Uniqueness follows
from linearity and the exponential bound on $\mathcal{S}_{\rho}$, while continuity of $X(\cdot)$ follows from the strong continuity of the evolution
family.
\end{proof}

\begin{remark}
If $\rho(\cdot)$ is $C^1$ and
$\rho\mapsto\mathcal A(\rho)$ is $C^1$ from $\mathbb P$ into $\mathcal L(\mathcal D(\mathcal A),\mathscr X)$, then solutions with initial data in $\mathcal D(\mathcal A)$ are strong solutions on intervals where $\rho$ is $C^1$.
\end{remark}

\begin{remark}
The constants $(M,\omega)$ bounding the evolution family are uniform in $\rho$ by the Kato stability assumption and compactness of $\mathbb P$.
This uniformity is crucial for the Lyapunov and $H_\infty$ estimates developed in later sections.
\end{remark}

The parameter-dependent variational model of the Vlasov--Maxwell system generates a uniformly bounded non-autonomous evolution for all admissible parameter trajectories.
This establishes an analytical foundation for the stability, observer, and $H_\infty$ synthesis results derived in the following sections.

\section{Parameter-Dependent Lyapunov Functionals and Operator Differential LMIs}
\label{sec:lyapunov}

Let $\mathcal{P}:\mathbb{P}\to\mathcal{L}(\mathscr{X})$ denote a family of bounded, self-adjoint, coercive operators.  
There exist constants $0<\underline{p}\le \overline{p}$ such that
\begin{equation}\label{eq:P-bounds}
\underline{p}\,\|X\|_{\mathscr{X}}^{2}
\le
\langle X,\mathcal{P}(\rho)X\rangle
\le
\overline{p}\,\|X\|_{\mathscr{X}}^{2},
\qquad
\forall\,X\in\mathscr{X},\;\rho\in\mathbb{P}.
\end{equation}
The parameter-dependent Lyapunov functional is
\begin{equation}\label{eq:Lyap-func}
V_\rho(X) := \langle X,\mathcal{P}(\rho)X\rangle .
\end{equation}
The mapping $\rho\mapsto\mathcal{P}(\rho)$ is assumed strongly differentiable, with bounded derivatives
$\partial_{\rho_i}\mathcal{P}(\rho)\in\mathcal{L}(\mathscr{X})$.

\subsection{Derivative along parameter-dependent trajectories}

Consider the homogeneous parameter-dependent system
\begin{equation}\label{eq:LPV-homog}
\dot X(t)=\mathcal{A}(\rho(t))X(t),
\qquad X(0)=X_0,
\end{equation}
where $\rho(\cdot)$ satisfies Assumption~\ref{ass:rho}.  
For each $\rho\in\mathbb{P}$, the operator $\mathcal{P}(\rho)$ is bounded, self-adjoint, coercive, and continuously differentiable with respect to~$\rho$.

Evaluating $V_\rho$ along a trajectory $t\mapsto X(t)$ yields
\[
V(t)=\langle X(t),\mathcal{P}(\rho(t))X(t)\rangle .
\]
Applying the chain rule gives
\[
\dot V(t)
=
\langle \dot X(t),\mathcal{P}(\rho(t))X(t)\rangle
+
\langle X(t),\mathcal{P}(\rho(t))\dot X(t)\rangle
+
\langle X(t),\dot{\mathcal{P}}(\rho(t))X(t)\rangle .
\]
Substituting $\dot X(t)=\mathcal{A}(\rho(t))X(t)$ gives
\[
\langle \dot X(t),\mathcal{P}(\rho(t))X(t)\rangle
=
\langle X(t),\mathcal{A}(\rho(t))^{\star}
\mathcal{P}(\rho(t))X(t)\rangle,
\]
and
\[
\langle X(t),\mathcal{P}(\rho(t))\dot X(t)\rangle
=
\langle X(t),\mathcal{P}(\rho(t))
\mathcal{A}(\rho(t))X(t)\rangle .
\]
Since $\rho\mapsto\mathcal{P}(\rho)$ is $C^1$ in the strong operator topology,
\[
\dot{\mathcal{P}}(\rho(t))
=
\sum_{i=1}^{p}\dot\rho_i(t)\,
\partial_{\rho_i}\mathcal{P}(\rho(t)),
\]
and therefore
\[
\langle X(t),\dot{\mathcal{P}}(\rho(t))X(t)\rangle
=
\sum_{i=1}^{p}\dot\rho_i(t)\,
\langle X(t),\partial_{\rho_i}\mathcal{P}(\rho(t))X(t)\rangle .
\]
Collecting all terms and omitting explicit time dependence gives
\begin{equation}\label{eq:Lyap-derivative}
\dot V_\rho(X)
=
\big\langle X,\,
\big(
\mathcal{A}(\rho)^{\star}\mathcal{P}(\rho)
+
\mathcal{P}(\rho)\mathcal{A}(\rho)
\big)X\big\rangle
+
\sum_{i=1}^{p}\dot\rho_i\,
\big\langle X,\partial_{\rho_i}\mathcal{P}(\rho)X\big\rangle .
\end{equation}
The first contribution reflects the frozen-parameter dynamics, whereas the second accounts for variations of the scheduling parameter.

\subsection{Operator differential LMI}

Uniform exponential stability is characterized by an operator inequality that extends the classical Lyapunov condition.

\begin{definition}[Operator differential LMI]\label{def:ODLMI}
A pair $(\mathcal{P},\mathcal{K})$ is said to satisfy the operator differential LMI if there exists $\alpha>0$ such that, for every $\rho\in\mathbb{P}$ and every admissible $\nu\in\mathcal{V}$,
\begin{equation}\label{eq:ODLMI}
\mathcal{A}_K(\rho)^{\star}\mathcal{P}(\rho)
+
\mathcal{P}(\rho)\mathcal{A}_K(\rho)
+
\sum_{i=1}^{p}\nu_i\,\partial_{\rho_i}\mathcal{P}(\rho)
\preceq
-2\alpha\, \mathcal{P}(\rho),
\end{equation}
where
\(
\mathcal{A}_K(\rho)
=
\mathcal{A}(\rho)+\mathcal{B}(\rho)\mathcal{K}(\rho),
\)
and $\preceq$ denotes the operator inequality on $\mathscr{X}$.
\end{definition}

\begin{remark}
Inequality~\eqref{eq:ODLMI} incorporates the parameter variations through the term $\sum_i\nu_i\,\partial_{\rho_i}\mathcal{P}(\rho)$.
If $\mathcal{P}(\rho)$ is constant, one recovers the standard operator Lyapunov inequality
\[
\mathcal{A}_K^{\star}\mathcal{P}
+
\mathcal{P}\mathcal{A}_K
\preceqop
-2\alpha \mathcal{P}.
\]
\end{remark}

\subsection{Exponential stability}

The following theorem establishes uniform exponential stability for the parameter-dependent evolution system.

\begin{theorem}
\label{thm:exp-stab}
If $(\mathcal{P},\mathcal{K})$ satisfies~\eqref{eq:ODLMI}, then every solution of $\dot X=\mathcal{A}_K(\rho(t))X$ obeys
\begin{equation}\label{eq:exp-stab}
\|X(t)\|_{\mathscr{X}}^{2}
\le
\frac{\overline{p}}{\underline{p}}\,
e^{-2\alpha t}\,
\|X(0)\|_{\mathscr{X}}^{2},
\qquad t\ge 0,
\end{equation}
for all admissible trajectories $\rho(\cdot)$.
\end{theorem}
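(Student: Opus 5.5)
The plan is a standard Lyapunov argument for the closed-loop evolution family, carried out first for classical solutions and then extended to all mild solutions by density.

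\textbf{Step 1: classical solutions.} Since $\mathcal{B}(\rho)\mathcal{K}(\rho)$ is a bounded perturbation (assuming $\mathcal K(\rho)$ bounded and continuous in $\rho$), $\mathcal{A}_K(\rho)$ satisfies the hypotheses of Assumption~\ref{ass:A} on the same domain $\mathcal D(\mathcal A)$. Theorem~\ref{thm:wellposed} therefore gives an evolution family $\mathcal S^K_\rho(t,s)$ for the closed loop. Take $X(0)\in\mathcal D(\mathcal A)$ and restrict to an interval on which $\rho$ is $C^1$. There $X(\cdot)$ is a strong solution with $X(t)\in\mathcal D(\mathcal A)$, by the remark following Theorem~\ref{thm:wellposed}. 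The computation leading to \eqref{eq:Lyap-derivative}, with $\mathcal A$ replaced by $\mathcal A_K$, is then justified pointwise. It gives
\[
\dot V(t)=\big\langle X,\big(\mathcal A_K^\star\mathcal P+\mathcal P\mathcal A_K+\textstyle\sum_i\dot\rho_i\,\partial_{\rho_i}\mathcal P\big)X\big\rangle .
\]
The term $\langle \mathcal A_K X,\mathcal P X\rangle+\langle X,\mathcal P\mathcal A_K X\rangle = 2\,\mathrm{Re}\langle X,\mathcal P\mathcal A_K X\rangle$ only requires $X\in\mathcal D(\mathcal A)$. This is how the operator inequality \eqref{eq:ODLMI} is to be read, as a quadratic-form inequality on $\mathcal D(\mathcal A)$.

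\textbf{Step 2: differential inequality.} Since $\dot\rho(t)\in\mathcal V$ for a.e.\ $t$ by Assumption~\ref{ass:rho}, apply \eqref{eq:ODLMI} with $\nu=\dot\rho(t)$. This yields $\dot V(t)\le -2\alpha V(t)$ for a.e.\ $t$. The function $V$ is continuous and piecewise $C^1$, with continuity across the switching points of $\rho$ coming from continuity of $X$ and of $\rho\mapsto\mathcal P(\rho)$. Gr\"onwall's lemma applied to $e^{2\alpha t}V(t)$ then gives $V(t)\le e^{-2\alpha t}V(0)$ for all $t\ge 0$.

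\textbf{Step 3: from $V$ to the norm.} Combine Step~2 with the coercivity bounds \eqref{eq:P-bounds}:
\[
\underline p\|X(t)\|^2\le V(t)\le e^{-2\alpha t}V(0)\le \overline p\, e^{-2\alpha t}\|X(0)\|^2 .
\]
This is \eqref{eq:exp-stab} for data in $\mathcal D(\mathcal A)$.

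\textbf{Step 4: mild solutions, the main obstacle.} The remaining task is to pass to arbitrary $X(0)\in\mathscr X$, where the derivative computation is not available. I would use density of $\mathcal D(\mathcal A)$ together with boundedness of $\mathcal S^K_\rho(t,0)$, which holds by the uniform bound $Me^{\omega t}$ of Theorem~\ref{thm:wellposed}. For fixed $t$, the map $X_0\mapsto \|\mathcal S^K_\rho(t,0)X_0\|^2$ is continuous, so the estimate from Step~3 passes to the limit along an approximating sequence.

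The delicate point is that Kato's construction keeps $\mathcal D(\mathcal A)$ invariant and differentiable only under regularity of $t\mapsto\mathcal A_K(\rho(t))$. On each $C^1$ piece I would invoke the remark after Theorem~\ref{thm:wellposed}. If that is unavailable, I would fall back on integrating $V$ through Yosida approximations $\mathcal A_{K,n}(\rho)$. Their quadratic forms converge on $\mathcal D(\mathcal A)$, which gives the integral form $V(t)-V(s)\le-2\alpha\int_s^t V$ in the limit.
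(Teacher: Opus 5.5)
Your proposal is correct and follows the paper's argument: evaluate the OD-LMI at $\nu=\dot\rho(t)$ to get $\dot V\le -2\alpha V$, apply Gr\"onwall, and sandwich with the coercivity bounds. Your Steps 1 and 4 add rigor that the paper's formal proof omits: you read the OD-LMI as a quadratic form on $\mathcal D(\mathcal A)$, work with strong solutions on the $C^1$ pieces of $\rho$, and extend to mild solutions by density and the uniform bound on the evolution family. Note, however, that the strong-solution remark you invoke needs $C^1$ dependence of $\rho\mapsto\mathcal A(\rho)$ and of $\mathcal K$, which goes beyond Assumption~\ref{ass:A}(iv), as you partly anticipate.
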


\begin{proof}
Combining~\eqref{eq:Lyap-derivative} and~\eqref{eq:ODLMI} yields
\[
\dot V_\rho(X)\le -2\alpha V_\rho(X).
\]
Grönwall's inequality gives
\[
V_\rho(X(t))
\le
V_\rho(X(0))\,e^{-2\alpha t}.
\]
Using~\eqref{eq:P-bounds}, we obtain
\[
\underline p\|X(t)\|_{\mathscr X}^2
\le
V_\rho(X(t))
\le
V_\rho(X(0))e^{-2\alpha t}
\le
\overline p\|X(0)\|_{\mathscr X}^2 e^{-2\alpha t},
\]
which gives~\eqref{eq:exp-stab}.
\end{proof}

\subsection{Galerkin discretization}\label{sec:Galerkin-setting}

The finite-dimensional LMIs obtained below are the Galerkin
projections of the operator inequalities established in the previous sections. Hence the numerical synthesis preserves the structural coupling between transport, Maxwell dynamics, and parameter dependence inherited
from the underlying infinite-dimensional dynamics. Let
$\{\Psi_i\}_{i=1}^N$ be an orthonormal basis of the Galerkin subspace
$\mathscr X_N\subset\mathscr X$.
Let $\Pi_N:\mathscr X\to\mathscr X_N$ denote the orthogonal projection. Define
\[
A(\rho)=\Pi_N\mathcal{A}(\rho)\Pi_N^{\star},\quad
B(\rho)=\Pi_N\mathcal{B}(\rho),\quad
P(\rho)=\Pi_N\mathcal{P}(\rho)\Pi_N^{\star},
\]
and the finite-dimensional state $x=\Pi_N X$.  
The projected dynamics take the form
\[
\dot{x} = A(\rho)x + B(\rho)u .
\]
For synthesis, introduce the change of variables
\[
Q(\rho)=P(\rho)^{-1},
\qquad
Y(\rho)=K(\rho)Q(\rho).
\]
Then \(K(\rho)=Y(\rho)Q(\rho)^{-1}\), and the projected OD-LMI is written in the convex form
\begin{align}\label{eq:LMI-matrix}
A(\rho)Q(\rho)
+
Q(\rho)A(\rho)^{\top}
&+
B(\rho)Y(\rho)
+
Y(\rho)^{\top}B(\rho)^{\top}\nonumber\\
&-
\sum_{i=1}^{p}\nu_i\,\partial_{\rho_i}Q(\rho)
+
2\alpha Q(\rho)
\preceq
0 .
\end{align}
The feedback gain is recovered from
\(
K(\rho)=Y(\rho)Q(\rho)^{-1}.
\)
If the operator data and decision variables
$A(\rho),B(\rho),Q(\rho),Y(\rho)$ depend affinely on $\rho$, and $\mathbb P$ is a convex polytope, then the left-hand side of \eqref{eq:LMI-matrix} depends affinely on $\rho$.
Hence feasibility needs only be checked at the vertices of $\mathbb P$.

\begin{remark}
Inequality~\eqref{eq:LMI-matrix} is the Galerkin counterpart of the operator differential LMI~\eqref{eq:ODLMI}. In the continuous limit, it recovers the corresponding operator inequality on $\mathscr X$.
\end{remark}

\begin{remark}
To avoid confusion between the spatial variable $x$ appearing in the distribution function $f(t,x,v)$ and the Galerkin coefficients used in the finite-dimensional approximation, we denote by $x\in\mathbb{R}^{N}$ the vector
of reduced coefficients associated with $X\in\mathscr{X}$.
This notational choice prevents ambiguity between physical coordinates and algebraic state vectors.
\end{remark}

\subsection{Consistency of Galerkin Approximations}

We establish here the consistency of the Galerkin approximations with the operator-level inequalities as the dimension $N$ of the projection space increases. The discussion relies on the strong convergence $\Pi_N\to I$ on
$\mathscr{X}$, uniform coercivity bounds for the Lyapunov operators, and weak-$\star$ compactness of bounded operator families.

\begin{theorem}[Consistency of Galerkin LMIs]\label{thm:limit}
Let $(P_N,K_N)$ satisfy~\eqref{eq:LMI-matrix} with constants $\underline{p},\overline{p},\alpha$ independent of $N$.
Assume that $\Pi_N\to I$ strongly on $\mathscr{X}$ and that the extended gains are uniformly bounded. Then there exist operators $\mathcal{P}$ and $\mathcal{K}$ satisfying~\eqref{eq:ODLMI}.  
The associated infinite-dimensional closed loop satisfies~\eqref{eq:exp-stab}.
\end{theorem}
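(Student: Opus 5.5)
The plan is a compactness-and-passage-to-the-limit argument in the weak operator topology, after lifting the finite-dimensional objects to $\mathscr X$.

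\textbf{Step 1: lift to $\mathscr X$.} From a feasible pair $(Q_N,Y_N)$ of \eqref{eq:LMI-matrix} set $P_N=Q_N^{-1}$ and $K_N=Y_NQ_N^{-1}$. Define the extended operators
\[
\mathcal P_N(\rho)=\Pi_N^{\star}P_N(\rho)\Pi_N+\underline p\,(I-\Pi_N^{\star}\Pi_N),
\qquad
\mathcal K_N(\rho)=K_N(\rho)\Pi_N .
\]
Since the coercivity constants are independent of $N$, $\underline p\le\mathcal P_N(\rho)\le\overline p$ holds uniformly in $N$ and $\rho$. The gains $\mathcal K_N$ are bounded uniformly by hypothesis.

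\textbf{Step 2: rewrite the LMI and extract limits.} Multiply \eqref{eq:LMI-matrix} on both sides by $P_N(\rho)$. Using $\partial_{\rho_i}P_N=-P_N(\partial_{\rho_i}Q_N)P_N$, this gives the projected form of \eqref{eq:ODLMI}:
\[
\langle A_K^N(\rho)x,P_Nx\rangle+\langle P_Nx,A_K^N(\rho)x\rangle+\sum_i\nu_i\langle x,\partial_{\rho_i}P_Nx\rangle\le-2\alpha\langle x,P_Nx\rangle
\]
for all $x\in\mathscr X_N$, where $A_K^N=\Pi_N(\mathcal A+\mathcal B\mathcal K_N)\Pi_N^{\star}$. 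Bounded sets of $\mathcal L(\mathscr X)$ are weak-operator sequentially compact on a separable Hilbert space. Apply this at each $\rho$ in a countable dense subset of $\mathbb P$ and use a diagonal argument to extract limits $\mathcal P(\rho)$ and $\mathcal K(\rho)$. To extend these limits to all of $\mathbb P$, I need equicontinuity in $\rho$. I would take it from the uniform bound on $\partial_{\rho_i}\mathcal P_N$ (Assumption~\ref{ass:bounded-families}, applied to the approximants) together with an analogous Lipschitz bound on the gains, which I would add as part of the meaning of ``uniformly bounded extended gains''. Weak limits preserve self-adjointness and the bounds \eqref{eq:P-bounds}. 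Derivatives pass to the limit in the sense of distributions in $\rho$.

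\textbf{Step 3: pass to the limit in the quadratic form (main obstacle).} Fix $X\in\mathcal D(\mathcal A)$ and test the projected inequality with $x=\Pi_NX$. The difficulty is the term $\langle \mathcal A(\rho)\Pi_N X,\mathcal P_N\Pi_NX\rangle$: it is a product of two factors, and weak convergence of $\mathcal P_N$ alone does not allow passing to the limit in such a product. I would first try the following.
\begin{enumerate}
\item Choose the Galerkin spaces so that $\Pi_N X\to X$ in the graph norm of $\mathcal D(\mathcal A)$; for example, take Fourier modes in $x$ with a weighted Hermite-type basis in $v$, so that $\Pi_N$ commutes with the leading transport and curl parts. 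Then $\mathcal A(\rho)\Pi_NX\to\mathcal A(\rho)X$ strongly.
\item Strong convergence of $\mathcal A(\rho)\Pi_NX$ combined with weak convergence of $\mathcal P_N\Pi_NX$ gives convergence of the inner product.
\end{enumerate}
The feedback term $\mathcal B\mathcal K_N$ is a product of weak limits, which is problematic in the same way. To handle it I would pass to the limit in the convex variables $(\mathcal Q_N,\mathcal Y_N)$ instead. The constraint \eqref{eq:LMI-matrix} is jointly convex in these variables, so its solution set is weakly closed. It is also linear in $(\mathcal Q,\mathcal Y)$ once tested against a fixed vector, which gives weak lower semicontinuity. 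This yields the dual inequality for $\mathcal Q=\mathcal P^{-1}$ and $\mathcal K=\mathcal Y\mathcal Q^{-1}$. Inverting is legitimate because $\mathcal Q$ is uniformly coercive, and inverting converts the dual inequality back into \eqref{eq:ODLMI} on $\mathcal D(\mathcal A)$. By density of $\mathcal D(\mathcal A)$, the inequality holds in form sense, which is how $\preceq$ is understood for unbounded $\mathcal A_K$.

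\textbf{Step 4: conclude.} Once \eqref{eq:ODLMI} holds with the same $\alpha,\underline p,\overline p$, Theorem~\ref{thm:exp-stab} applies to the closed loop, whose evolution family exists by Theorem~\ref{thm:wellposed} since $\mathcal B\mathcal K$ is a bounded perturbation. This gives \eqref{eq:exp-stab}.
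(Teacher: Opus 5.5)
\paragraph{How your route differs from the paper's.}
The paper's sketch extends $P_N$ by zero, takes weak-$\star$ limits of $P_N$ and $K_N$ (Banach--Alaoglu), tests with $\Pi_N x$, and asserts that weak-$\star$ convergence handles ``all mixed terms''. You instead pass to the limit in the convex variables $(\mathcal Q_N,\mathcal Y_N)$, where \eqref{eq:LMI-matrix} is affine, and only afterwards set $\mathcal P=\mathcal Q^{-1}$ and $\mathcal K=\mathcal Y\mathcal Q^{-1}$. Your objection to the paper's shortcut is correct. The term $\langle \mathcal B\mathcal K_N X,\mathcal P_N X\rangle$ is bilinear in two merely weakly convergent sequences, and the $\mathcal A$-term needs the graph-norm consistency the paper hides in ``consistent with $\mathcal A(\rho)$''. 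Your switch of variables is the right repair. Two side remarks:
\begin{itemize}
\item Inversion is not weakly continuous. The $\mathcal P$ extracted in Step~2 must therefore be discarded, and $\mathcal P$ must be \emph{defined} as $\mathcal Q^{-1}$.
\item Fourier--Hermite truncation does not commute with $v\cdot\nabla_x$, since multiplication by $v$ is tridiagonal in the Hermite basis. It also does not commute with multiplication by $\bar E(\rho)(x)$. What you must assume is $\Pi_N x\to x$ in the graph norm of $\mathcal A(\rho)^\star$, because in the $(\mathcal Q,\mathcal Y)$ form it is $\mathcal A^\star$ that acts on the test vector.
\end{itemize}

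\paragraph{First gap: the inversion step.}
The sentence ``inverting converts the dual inequality back into \eqref{eq:ODLMI}'' fails if read as a congruence. The limit inequality holds only for $x\in\mathcal D(\mathcal A^\star)$. Substituting $x=\mathcal P X$ with $X\in\mathcal D(\mathcal A)$ would require $\mathcal P\,\mathcal D(\mathcal A)\subseteq\mathcal D(\mathcal A^\star)$, and nothing gives you that for a weak limit of Galerkin matrices.

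It can be repaired without any domain invariance. Work at fixed $(\rho,\nu)$ in the equivalent inner product $\langle x,\mathcal Q y\rangle$. There the limit inequality says that $\mathcal A_K^\star+\alpha-\frac12\mathcal Q^{-1}\sum_i\nu_i\partial_{\rho_i}\mathcal Q$ is dissipative. Being a bounded perturbation of a generator, it is m-dissipative. Hence its adjoint for that inner product is m-dissipative as well; this adjoint is $\mathcal Q^{-1}\mathcal A_K\mathcal Q+\alpha-\frac12\mathcal Q^{-1}\sum_i\nu_i\partial_{\rho_i}\mathcal Q$ on $\mathcal Q^{-1}\mathcal D(\mathcal A)$. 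Writing $X=\mathcal Q y$, this is exactly the form version of \eqref{eq:ODLMI} for $\mathcal P=\mathcal Q^{-1}$, once $\partial_{\rho_i}\mathcal P=-\mathcal P(\partial_{\rho_i}\mathcal Q)\mathcal P$ is known.

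\paragraph{Second gap: the derivative term.}
This is the more serious one. ``Derivatives pass to the limit in the sense of distributions in $\rho$'' yields the inequality only for a.e.\ $\rho$. It does not even show that $\mathcal Q$ is differentiable. But \eqref{eq:ODLMI} is needed at every $\rho$. Theorem~\ref{thm:exp-stab} uses it along $\rho(t)$, which for $p\ge 2$ is a Lebesgue-null curve in $\mathbb P$, so an a.e.\ statement says nothing there. Uniform Lipschitz bounds do not fix this, since uniform limits of uniformly Lipschitz families can have kinks.

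What works is to assume a uniform operator-norm bound and a uniform modulus of continuity for $\partial_{\rho_i}Q_N$. The argument then runs as follows:
\begin{itemize}
\item Apply Arzel\`a--Ascoli to $\rho\mapsto\langle x,Q_N(\rho)y\rangle$ for $x,y$ in a countable dense set, with a diagonal extraction. This gives $\partial_{\rho_i}Q_N(\rho)\to\partial_{\rho_i}\mathcal Q(\rho)$ weakly at \emph{every} $\rho$.
\item The affine inequality then passes to the limit pointwise in $(\rho,\nu)$.
\item The norm-Lipschitz bound on $Q_N$ survives the weak limit, by lower semicontinuity of the norm. This justifies $\partial_{\rho_i}\mathcal P=-\mathcal P(\partial_{\rho_i}\mathcal Q)\mathcal P$.
\end{itemize}
The paper's sketch skips these $\rho$-issues entirely: its subsequence is extracted at fixed $\rho$, and convergence of $\partial_{\rho_i}P_N$ is never addressed. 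So this is a hypothesis both arguments need, but your plan as written does not supply it.
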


\begin{proof}[Sketch of proof]
The matrices $(P_N,K_N)$ solving~\eqref{eq:LMI-matrix} satisfy the uniform coercivity bounds
\[
\underline{p}I\preceq P_N\preceq \overline{p}I .
\]
Extend $P_N$ to an operator on $\mathscr X$ by zero on
$\mathscr X_N^\perp$, and extend $K_N$ as a uniformly bounded operator in $\mathcal L(\mathscr X,U)$.

These extensions remain uniformly bounded in
$\mathcal{L}(\mathscr{X})$ and in $\mathcal L(\mathscr X,U)$. By Banach--Alaoglu~\ref{thm:banach-alaoglu}, they are relatively compact for the weak-$\star$ topology. Hence one may extract subsequences, not relabeled, such
that
\[
P_N \rightharpoonup^\star \mathcal{P}
\quad\text{in }\mathcal{L}(\mathscr{X}),
\qquad
K_N \rightharpoonup^\star \mathcal{K}
\quad\text{in }\mathcal L(\mathscr X,U).
\]

To pass the LMI to the limit, test~\eqref{eq:LMI-matrix} with $x_N=\Pi_N x\in \mathscr X_N$. Since $\Pi_N\to I$ strongly and the projected operators are consistent with $\mathcal A(\rho)$ and $\mathcal B(\rho)$, the associated quadratic forms converge for every $x\in\mathcal D(\mathcal A)$.
The weak-$\star$ convergence of $P_N$ and $K_N$ allows passage to the limit in all mixed terms. Therefore, for all $x\in\mathcal D(\mathcal A)$,
\[
\big\langle x,
\big(
\mathcal A_{\mathcal K}(\rho)^\star\mathcal P
+
\mathcal P\mathcal A_{\mathcal K}(\rho)
+
\sum_i\nu_i\,\partial_{\rho_i}\mathcal P
\big)x
\big\rangle
\le
-2\alpha\langle x,\mathcal P x\rangle ,
\]
which is the quadratic-form version of~\eqref{eq:ODLMI}.

Coercivity of $P_N$ passes to the limit, giving
\[
\underline{p}I\preceq \mathcal{P}\preceq \overline{p}I .
\]
The exponential stability estimate then follows from
Theorem~\ref{thm:exp-stab}.
\end{proof}

The operator differential LMIs~\eqref{eq:ODLMI} provide a condition for uniform exponential stability of the parameter-dependent Vlasov--Maxwell dynamics.
Their Galerkin projections yield matrix LMIs suitable for numerical optimization, while the consistency result preserves the link with the infinite-dimensional model.

\section{Parameter-dependent observer design}
\label{sec:observer}

Consider again the parameter-dependent linearized model~\eqref{eq:LPV-abstract}:
\[
\dot X = \mathcal{A}(\rho) X + \mathcal{B}(\rho) u,
\qquad y = \mathcal{C}(\rho) X,
\]
where $X(t)\in \mathscr{X}$, $u(t)\in U$, and $y(t)\in Y$ are the state, control, and measurement signals, respectively.
The parameter $\rho(t)\in\mathbb{P}$ evolves according to
Assumption~\ref{ass:rho}.

We design a parameter-dependent Luenberger observer of the form
\begin{equation}\label{eq:observer}
\dot{\hat X}
= \mathcal{A}(\rho)\hat X + \mathcal{B}(\rho)u
+ \mathcal{L}(\rho)\big(y-\mathcal{C}(\rho)\hat X\big),
\end{equation}
where $\mathcal{L}(\rho):Y\to \mathscr{X}$ is a gain operator depending continuously on~$\rho$.

Define the estimation error $e=X-\hat X$.
Substituting~\eqref{eq:observer} gives
\begin{equation}\label{eq:error}
\dot e =
\big(\mathcal{A}(\rho)-\mathcal{L}(\rho)\mathcal{C}(\rho)\big)e .
\end{equation}

\begin{definition}[Dual operator differential LMI]\label{def:dualODLMI}
A pair $(\mathcal{P}_o,\mathcal{L})$ satisfies the \emph{dual operator
differential LMI} if there exists $\beta>0$ such that, for all
$\rho\in\mathbb{P}$ and $\nu\in\mathcal{V}$,
\begin{align}\label{eq:dualODLMI}
\mathcal{A}(\rho)^{\star}\mathcal{P}_o(\rho)
+
\mathcal{P}_o(\rho)\mathcal{A}(\rho)
&-
\mathrm{Sym}\!\big(
\mathcal{P}_o(\rho)\mathcal{L}(\rho)\mathcal{C}(\rho)
\big)\nonumber\\
&+
\sum_{i=1}^p \nu_i\,\partial_{\rho_i}\mathcal{P}_o(\rho)
\preceq
-2\beta\,\mathcal P_o(\rho),
\end{align}
where $\mathrm{Sym}(M)=M+M^{\star}$.
\end{definition}

\begin{remark}
The primal OD-LMI~\eqref{eq:ODLMI} and the dual OD-LMI
~\eqref{eq:dualODLMI} have the same Lyapunov structure, with the feedback term $\mathcal{B}(\rho)\mathcal{K}(\rho)$ replaced by the injection term $\mathcal{L}(\rho)\mathcal{C}(\rho)$. This reflects the classical
controller--observer duality extended to the operator setting.
\end{remark}

\subsection{Exponential convergence of the observer}

The following theorem guarantees uniform exponential convergence of the estimation error.

\begin{theorem}
\label{thm:obsconv}
Assume that $(\mathcal{P}_o,\mathcal{L})$ satisfies~\eqref{eq:dualODLMI} and
that
\[
\underline p_o\|e\|_{\mathscr X}^2
\le
\langle e,\mathcal P_o(\rho)e\rangle
\le
\overline p_o\|e\|_{\mathscr X}^2 .
\]
Then, for any admissible $\rho(\cdot)$ with
$\dot\rho(t)\in\mathcal V$ a.e., the estimation error~\eqref{eq:error}
admits a unique mild solution and
\begin{equation}\label{eq:obsconv}
\norm{e(t)}^2_{\mathscr{X}}
\le
\frac{\overline{p}_o}{\underline{p}_o}
e^{-2\beta t}
\norm{e(0)}^2_{\mathscr{X}},
\qquad t\ge0 .
\end{equation}
\end{theorem}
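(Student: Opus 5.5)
The plan has two parts: well-posedness of the error equation, then a Lyapunov estimate mirroring the proof of Theorem~\ref{thm:exp-stab}.

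\textbf{Well-posedness.} Write $\mathcal A_L(\rho):=\mathcal A(\rho)-\mathcal L(\rho)\mathcal C(\rho)$. By Assumption~\ref{ass:bounded-families}, $\mathcal C(\rho)$ is bounded uniformly in $\rho$. Assuming $\mathcal L(\rho)$ is bounded and continuous in $\rho$, the injection term $\mathcal L(\rho)\mathcal C(\rho)$ is a uniformly bounded perturbation that is continuous in operator norm. The perturbed family therefore keeps the common domain $\mathcal D(\mathcal A)$. The bounded-perturbation theorem shows that each frozen $\mathcal A_L(\rho)$ generates a $C_0$-semigroup, and Kato stability is preserved with modified constants $M$ and $\omega+M\sup_\rho\|\mathcal L\mathcal C\|$. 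Continuity of $\rho\mapsto \mathcal A_L(\rho)$ in $\mathcal L(\mathcal D(\mathcal A),\mathscr X)$ also holds. Theorem~\ref{thm:wellposed}, with $u\equiv0$, then gives a unique mild solution $e(t)=\mathcal S^L_\rho(t,0)e(0)$.

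\textbf{Lyapunov estimate for regular data.} Take $e(0)\in\mathcal D(\mathcal A)$ and restrict to an interval on which $\rho$ is $C^1$. There the solution is strong, by the remark following Theorem~\ref{thm:wellposed}. Set $V(t)=\langle e(t),\mathcal P_o(\rho(t))e(t)\rangle$. The computation leading to~\eqref{eq:Lyap-derivative}, with $\mathcal A$ replaced by $\mathcal A_L$, gives
\[
\dot V=\big\langle e,\big(\mathcal A^\star\mathcal P_o+\mathcal P_o\mathcal A-\mathrm{Sym}(\mathcal P_o\mathcal L\mathcal C)+\textstyle\sum_i\dot\rho_i\,\partial_{\rho_i}\mathcal P_o\big)e\big\rangle .
\]
Here I use that $(\mathcal L\mathcal C)^\star\mathcal P_o+\mathcal P_o\mathcal L\mathcal C=\mathrm{Sym}(\mathcal P_o\mathcal L\mathcal C)$, which holds because $\mathcal P_o$ is self-adjoint. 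The quadratic form is read as $2\,\mathrm{Re}\langle \mathcal A e,\mathcal P_o e\rangle$, so it only requires $e\in\mathcal D(\mathcal A)$. Since $\dot\rho(t)\in\mathcal V$ a.e., \eqref{eq:dualODLMI} with $\nu=\dot\rho(t)$ yields $\dot V\le-2\beta V$ a.e. The function $V$ is absolutely continuous across the breakpoints of $\rho$, because $\mathcal P_o$ is continuous in $\rho$ and $e$ is continuous. Grönwall's inequality therefore gives $V(t)\le e^{-2\beta t}V(0)$. The coercivity bounds then give~\eqref{eq:obsconv} exactly as in the proof of Theorem~\ref{thm:exp-stab}.

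\textbf{Density and main obstacle.} To pass to general $e(0)\in\mathscr X$, approximate $e(0)$ by data in $\mathcal D(\mathcal A)$. Strong continuity and boundedness of $\mathcal S^L_\rho(t,0)$ give pointwise convergence of $e(t)$, and the estimate passes to the limit. I expect this justification to be the main obstacle: the differentiability of $V$, the invariance of $\mathcal D(\mathcal A)$ under the evolution family, and the piecewise treatment of $\rho$. Uniform boundedness of $\partial_{\rho_i}\mathcal P_o$ (Assumption~\ref{ass:bounded-families}) handles the parameter term. If the invariance of $\mathcal D(\mathcal A)$ is unclear, my fallback is to use Yosida approximations of $\mathcal A(\rho)$, derive the inequality for the approximate problems, and let the approximation parameter go to infinity.
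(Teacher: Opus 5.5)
Your proposal is correct and follows the same route as the paper. Both arguments take $V=\langle e,\mathcal P_o(\rho)e\rangle$, evaluate the dual OD-LMI~\eqref{eq:dualODLMI} at $\nu=\dot\rho(t)$ to get $\dot V\le-2\beta V$, and conclude with Gr\"onwall and the coercivity bounds. The paper differentiates $V$ purely formally; your bounded-perturbation well-posedness step and your regular-data/density justification are additions it omits, except that the strong-solution remark you invoke needs $C^1$ (not merely continuous) dependence of $\rho\mapsto\mathcal A(\rho)-\mathcal L(\rho)\mathcal C(\rho)$ in $\mathcal L(\mathcal D(\mathcal A),\mathscr X)$, which is not among the stated hypotheses.
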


\begin{proof}
Let $V_\rho(e)=\ipX{e}{\mathcal{P}_o(\rho)e}$.
Differentiating along~\eqref{eq:error} and applying
\eqref{eq:dualODLMI} gives
\[
\dot V_\rho(e)
\le
-2\beta V_\rho(e).
\]
Grönwall's inequality yields
\[
V_\rho(e(t))
\le
V_\rho(e(0))e^{-2\beta t}.
\]
Using the coercivity bounds on $\mathcal P_o(\rho)$ gives
\[
\underline p_o\|e(t)\|_{\mathscr X}^2
\le
V_\rho(e(t))
\le
V_\rho(e(0))e^{-2\beta t}
\le
\overline p_o\|e(0)\|_{\mathscr X}^2e^{-2\beta t},
\]
which proves~\eqref{eq:obsconv}.
\end{proof}

\subsection{Galerkin projection and LMIs}

In this section, the Galerkin subspaces $\mathscr X_N\subset\mathscr{X}$ and the orthogonal projections $\Pi_N:\mathscr{X}\to \mathscr X_N$ are those
introduced in Section~\ref{sec:Galerkin-setting}, and the same notation is kept throughout. Define
\[
A(\rho)=\Pi_N\mathcal{A}(\rho)\Pi_N^{\star},\quad
C(\rho)=\mathcal{C}(\rho)\Pi_N^{\star},\quad
P_o(\rho)=\Pi_N\mathcal P_o(\rho)\Pi_N^\star .
\]

For observer synthesis, introduce
\[
Q_o(\rho)=P_o(\rho)^{-1},
\qquad
Y_o(\rho)=Q_o(\rho)L(\rho).
\]
Then \(L(\rho)=Q_o(\rho)^{-1}Y_o(\rho)\), and the projected dual LMI becomes

\begin{equation}\label{eq:matrix-dualLMI}
\begin{aligned}
Q_o(\rho)A(\rho)^\top + A(\rho)Q_o(\rho)
&-Y_o(\rho)C(\rho)-C(\rho)^\top Y_o(\rho)^\top 
\\
&
-\sum_{i=1}^p \nu_i\,\partial_{\rho_i}Q_o(\rho)
+2\beta Q_o(\rho)
\preceq 0 .
\end{aligned}
\end{equation}

The projected error dynamics read
\[
\dot e_N = (A(\rho)-L(\rho)C(\rho))e_N .
\]
If $A,C,Q_o,Y_o$ depend affinely on $\rho$,
the inequality needs only be verified at the vertices of
$\mathbb P\times\mathcal V$.

\begin{remark}[Reduced-order observers]
In practice, the observer can be designed on a reduced kinetic subspace while retaining the electromagnetic components. This structure preserves convergence
provided the neglected velocity modes are sufficiently damped or weakly coupled, and it yields a computationally efficient estimator for high-dimensional plasma
models.
\end{remark}

\subsection{Convergence properties}

A consistency property between the infinite-dimensional observer inequality and its Galerkin counterparts can be established as follows.

\begin{theorem}
\label{thm:limitobs}
Let $\{(Q_{o,N},Y_{o,N})\}_{N\ge1}$ be a sequence satisfying the convex projected dual LMI~\eqref{eq:matrix-dualLMI} on $\mathscr X_N$, with constants $\underline q_o,\overline q_o,\beta>0$ independent of $N$ and $\rho$, i.e.,
\[
\underline q_o I \preceq Q_{o,N}(\rho)\preceq \overline q_o I .
\]
Assume moreover that the recovered gains
\[
L_N(\rho)=Q_{o,N}(\rho)^{-1}Y_{o,N}(\rho)
\]
are uniformly bounded. Then there exist operators
$(\mathcal P_o(\rho),\mathcal L(\rho))$ satisfying the dual operator differential LMI~\eqref{eq:dualODLMI}. Consequently, the infinite-dimensional observer~\eqref{eq:observer} satisfies the uniform
exponential estimate~\eqref{eq:obsconv}.
\end{theorem}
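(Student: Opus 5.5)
The plan mirrors the proof of Theorem~\ref{thm:limit}, with the controller-side variables replaced by their observer counterparts. The first step is to go back from the convex variables $(Q_{o,N},Y_{o,N})$ to the Lyapunov form.

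Set $P_{o,N}(\rho)=Q_{o,N}(\rho)^{-1}$. The bounds $\underline q_o I\preceq Q_{o,N}\preceq\overline q_o I$ then give
\[
\overline q_o^{-1}I\preceq P_{o,N}(\rho)\preceq \underline q_o^{-1}I,
\]
uniformly in $N$ and $\rho$. Multiply \eqref{eq:matrix-dualLMI} on both sides by $P_{o,N}$ (a congruence, so the inequality is preserved). Use $Y_{o,N}=Q_{o,N}L_N$ and the identity $\partial_{\rho_i}P_{o,N}=-P_{o,N}(\partial_{\rho_i}Q_{o,N})P_{o,N}$. This should produce the projected analogue of \eqref{eq:dualODLMI} for the pair $(P_{o,N},L_N)$ on $\mathscr X_N$, with the same $\beta$.

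The index conventions in \eqref{eq:matrix-dualLMI} are written in "dual" form, so I would carefully check the transposition placement before congruence. If needed, I would interpret the LMI through its transpose, which is legitimate because both sides are symmetric. This algebraic step is routine but must be done first, so that everything afterwards is stated for $P_{o,N}$, exactly as in Theorem~\ref{thm:limit}.

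Next, extend $P_{o,N}$ by $\overline q_o^{-1}I$ (or by zero) on $\mathscr X_N^\perp$, and extend $L_N$ as $\Pi_N^\star L_N\in\mathcal L(Y,\mathscr X)$. Both families are uniformly bounded by hypothesis. For each fixed $\rho$, the Banach--Alaoglu theorem (Theorem~\ref{thm:banach-alaoglu}), applied in the weak operator topology on bounded sets, gives subsequences with
\[
P_{o,N}(\rho)\rightharpoonup\mathcal P_o(\rho),\qquad L_N(\rho)\rightharpoonup\mathcal L(\rho).
\]
To get limits defined for all $\rho$ simultaneously, I would extract along a countable dense subset of $\mathbb P$ by a diagonal argument, then extend using the equicontinuity in $\rho$ supplied by the affine dependence. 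The $\rho$-derivatives are handled in the same way, via Assumption~\ref{ass:bounded-families}, so that $\partial_{\rho_i}\mathcal P_o$ is the weak limit of $\partial_{\rho_i}P_{o,N}$. Weak limits preserve self-adjointness and quadratic-form bounds, so $\underline p_o:=\overline q_o^{-1}$ and $\overline p_o:=\underline q_o^{-1}$ give the required coercivity.

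To pass to the limit in the inequality, test the projected inequality with $\Pi_N x$ for $x\in\mathcal D(\mathcal A)$ and write every term as a quadratic form. Consider the term $\langle \mathcal A(\rho)\Pi_Nx,P_{o,N}\Pi_Nx\rangle$. Its first factor converges strongly, since $\mathcal A(\rho)\Pi_Nx\to\mathcal A(\rho)x$ by Galerkin consistency on $\mathcal D(\mathcal A)$. Its second factor, however, only converges weakly. The same strong-times-weak structure appears in the injection term $\langle P_{o,N}L_NC\Pi_Nx,\Pi_Nx\rangle$.

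This is the main obstacle: a product of two merely weakly convergent factors does not converge in general. The first thing I would try is to assume that the Galerkin data are compatible, meaning $\mathcal A(\rho)\Pi_N x\to\mathcal A(\rho)x$ strongly for $x\in\mathcal D(\mathcal A)$. I would also use that the injection term can be rewritten as $\langle L_N C\Pi_Nx,P_{o,N}\Pi_Nx\rangle$, and then upgrade to strong convergence of $P_{o,N}\Pi_Nx$. One route is to choose $P_{o,N}=\Pi_N\mathcal P\Pi_N^\star$ from a fixed operator family. The alternative route is to accept the inequality only for the product limit $\mathcal P_o\mathcal L$, defined directly as the weak limit of $P_{o,N}L_N$, which is what the dual LMI actually involves. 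I would take this second route and define the gain through the product limit.

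Once these limits are established, one obtains
\[
\langle x,(\mathcal A^\star\mathcal P_o+\mathcal P_o\mathcal A-\mathrm{Sym}(\mathcal P_o\mathcal L\mathcal C)+\textstyle\sum_i\nu_i\partial_{\rho_i}\mathcal P_o)x\rangle\le-2\beta\langle x,\mathcal P_ox\rangle
\]
on $\mathcal D(\mathcal A)$, which is dense in $\mathscr X$. Theorem~\ref{thm:obsconv} then yields \eqref{eq:obsconv}. Well-posedness of the error system follows from Theorem~\ref{thm:wellposed}, since $\mathcal L\mathcal C$ is a bounded perturbation.
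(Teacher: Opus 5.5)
Your skeleton matches the paper's: invert $Q_{o,N}$, extend by $\Pi_N^\star(\cdot)\Pi_N$, extract weak limits, test with $\Pi_Nx$, and invoke Theorem~\ref{thm:obsconv}. You are more careful than the paper's sketch in two places. First, you require the subsequence to be common to all $\rho$. Second, you see that a product of two weakly convergent factors need not converge, so the injection term does not pass to the limit, contrary to the paper's claim about ``all mixed terms''. Your remedy is sound: take the weak limit $\mathcal M(\rho)$ of $\widetilde P_{o,N}\widetilde L_N$ and set $\mathcal L:=\mathcal P_o^{-1}\mathcal M$, which is bounded since $\mathcal P_o\succeq\overline q_o^{-1}I$.

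\textbf{The congruence step fails.} The genuine break is the step you called routine. With $Y_{o,N}=Q_{o,N}L_N$, congruence of \eqref{eq:matrix-dualLMI} by $P_{o,N}$ gives
\[
A^\top P_{o,N}+P_{o,N}A-\mathrm{Sym}(L_NCP_{o,N})+\textstyle\sum_i\nu_i\,\partial_{\rho_i}P_{o,N}+2\beta P_{o,N}\preceq0 .
\]
The cross term of $\tfrac{d}{dt}\langle e,P_{o,N}e\rangle$ along $\dot e=(A-L_NC)e$ is instead $-\mathrm{Sym}(P_{o,N}L_NC)$. Transposing changes nothing, because the expression is already symmetric. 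What the printed LMI certifies is stability of $A-Q_{o,N}L_NCQ_{o,N}^{-1}$, not of $A-L_NC$, and this already fails at fixed $N$. Take $A=\begin{bmatrix}-1&0\\10&1\end{bmatrix}$, $C=\begin{bmatrix}1&0\end{bmatrix}$, constant $Q_o=\begin{bmatrix}2&-1\\-1&1\end{bmatrix}$ and $L=(20,40)^\top$. Then $Y_o=(0,20)^\top$ and $Q_oA^\top+AQ_o-Y_oC-C^\top Y_o^\top=\mathrm{diag}(-4,-18)$, so \eqref{eq:matrix-dualLMI} holds for small $\beta>0$. Yet $A-LC$ has eigenvalue $1$; indeed $(A,C)$ is not detectable. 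The paper's sketch hides the same defect behind ``the recovered nonconvex Galerkin inequality''. The missing idea is the standard observer parametrization $Y_o=P_oL$, with the LMI
\[
A^\top P_o+P_oA-Y_oC-C^\top Y_o^\top+\textstyle\sum_i\nu_i\partial_{\rho_i}P_o+2\beta P_o\preceq0
\]
and bounds imposed on $P_{o,N}$ and $Y_{o,N}$. Then no congruence is needed, and your product limit is simply the weak limit of $Y_{o,N}$.

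\textbf{The derivative term is not justified.} Invoking Assumption~\ref{ass:bounded-families} is circular, since it constrains the limit $\mathcal P_o$ being constructed. No hypothesis bounds $\partial_{\rho_i}P_{o,N}=-P_{o,N}(\partial_{\rho_i}Q_{o,N})P_{o,N}$ uniformly in $N$. Even with such a bound, weak limits taken pointwise in $\rho$ do not commute with $\partial_\rho$. Uniform Lipschitz bounds on $\rho\mapsto\langle x,\widetilde P_{o,N}(\rho)x\rangle$ give only weak-$\star$ $L^\infty$ convergence of the derivatives. That yields the inequality for a.e.\ $\rho$, which says nothing along an admissible curve $\rho(\cdot)$, since such a curve is a null set when $p\ge2$.

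The affine dependence you appeal to is not a hypothesis and must be added. The clean version assumes $P_{o,N}$ and $Y_{o,N}$ affine in $\rho$ with uniformly bounded coefficients. Then one subsequence handles the finitely many coefficient operators, and $\partial_{\rho_i}\mathcal P_o$ is the weak limit of $\partial_{\rho_i}P_{o,N}$ for every $\rho$. Moreover, $\mathcal L(\rho)=\mathcal P_o(\rho)^{-1}\mathcal Y(\rho)$ is norm-continuous in $\rho$. Your final appeal to Theorem~\ref{thm:wellposed} also needs this continuity, through Assumption~\ref{ass:A}(iv). The paper's sketch does not address this term at all.
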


\begin{proof}[Sketch of proof]
Define
\[
P_{o,N}(\rho):=Q_{o,N}(\rho)^{-1},
\qquad
L_N(\rho):=Q_{o,N}(\rho)^{-1}Y_{o,N}(\rho).
\]
The uniform coercivity of $Q_{o,N}$ implies uniform coercivity and boundedness of $P_{o,N}$. Hence, after extension to $\mathscr X$,
\[
\widetilde P_{o,N}(\rho):=\Pi_N^\star P_{o,N}(\rho)\Pi_N,
\qquad
\widetilde L_N(\rho):=\Pi_N^\star L_N(\rho),
\]
the families remain uniformly bounded.

By Banach--Alaoglu~\ref{thm:banach-alaoglu}, one may extract subsequences such that
\[
\widetilde P_{o,N}(\rho)\rightharpoonup^\star \mathcal P_o(\rho)
\quad\text{in }\mathcal L(\mathscr X),
\quad
\widetilde L_N(\rho)\rightharpoonup^\star \mathcal L(\rho)
\quad\text{in }\mathcal L(Y,\mathscr X).
\]
The order bounds pass to the limit, so $\mathcal P_o(\rho)$ is bounded, self-adjoint, and uniformly coercive.

Testing the recovered nonconvex Galerkin inequality with
$x_N=\Pi_Nx$, $x\in\mathcal D(\mathcal A)$, and using the strong convergence $\Pi_N\to I$, the consistency of the Galerkin projections, and the weak-$\star$ convergence above yields
\[
\Big\langle x,\psi x
\Big\rangle
\le
-2\beta\langle x,\mathcal P_o(\rho)x\rangle ,
\]
with \[\psi= \mathcal A(\rho)^\star\mathcal P_o(\rho)
+
\mathcal P_o(\rho)\mathcal A(\rho)
-
\mathrm{Sym}\big(
\mathcal P_o(\rho)\mathcal L(\rho)\mathcal C(\rho)
\big)
+
\sum_{i=1}^p \nu_i\,\partial_{\rho_i}\mathcal P_o(\rho)
\]
This is the quadratic-form version of~\eqref{eq:dualODLMI}. Applying Theorem~\ref{thm:obsconv} gives~\eqref{eq:obsconv}.
\end{proof}

\begin{remark}
The observer gain $\mathcal L(\rho)$ provides a correction mechanism driven by the mismatch between measured and estimated outputs. Through the coupling structure of the Vlasov--Maxwell dynamics, this correction propagates to both the electromagnetic and kinetic components of the state. Its dependence on $\rho$ allows adaptation to variations of the plasma regime.
\end{remark}

\begin{remark}
The dual structure of~\eqref{eq:ODLMI} and~\eqref{eq:dualODLMI}
permits coupled controller--observer synthesis through convex Galerkin LMIs.
This connection is exploited in Section~\ref{sec:Hinf} to derive uniform $H_\infty$ guarantees.
\end{remark}

\section{\texorpdfstring{$H_\infty$}{H-infinity} Performance and Robust Parameter-Dependent Control}
\label{sec:Hinf}

Disturbance attenuation is critical in plasma control, where fluctuations in the distribution function or electromagnetic fields may amplify due to transport or wave--particle interactions. Let $w(t)\in W$ be an external
disturbance and $z(t)\in Z$ a performance output reflecting physically meaningful quantities such as electromagnetic energy, current deviations, or selected kinetic moments.

The parameter-varying system dynamics are given by
\begin{equation}\label{eq:LPV-dist}
\begin{aligned}
\dot{X} &= \mathcal{A}(\rho)X + \mathcal{B}_u(\rho)u
+ \mathcal{B}_w(\rho)w, \\
z &= \mathcal{C}_z(\rho)X + \mathcal{D}_{zu}(\rho)u
+ \mathcal{D}_{zw}(\rho)w, \\
y &= \mathcal{C}(\rho)X,
\end{aligned}
\end{equation}
where $\rho(t)\in\mathbb{P}\subset\mathbb{R}^p$ is a measurable scheduling signal, and where $X(t)\in\mathscr X$, $u(t)\in U$, $w(t)\in W$, $z(t)\in Z$, and $y(t)\in Y$.

The operator families in~\eqref{eq:LPV-dist} depend continuously on $\rho\in\mathbb{P}$ and are defined as follows:
\begin{itemize}
  \item $\mathcal{B}_u(\rho):U\to\mathscr X$ is the control input operator;
  \item $\mathcal{B}_w(\rho):W\to\mathscr X$ models the effect of external disturbances;
  \item $\mathcal{C}_z(\rho):\mathscr X\to Z$ maps the state to a performance output;
  \item $\mathcal{D}_{zu}(\rho):U\to Z$ and
  $\mathcal{D}_{zw}(\rho):W\to Z$ are direct feedthrough operators from control and disturbance to the performance output.
\end{itemize}

A parameter-dependent state-feedback law of the form
\[
u(t)=\mathcal{K}(\rho(t))X(t)
\]
is sought such that the closed-loop system satisfies the disturbance attenuation property
\begin{equation}\label{eq:Hinf-goal}
\int_{0}^{\infty} \|z(t)\|_{Z}^{2}\,dt
\le
\gamma^{2}\int_{0}^{\infty} \|w(t)\|_{W}^{2}\,dt,
\qquad
\forall\, w\in L^{2}(0,\infty;W),
\end{equation}
for a prescribed performance level $\gamma>0$ and zero initial condition. For nonzero initial data, the usual storage term is added to the right-hand side.

\subsection{Closed-loop operator and Lyapunov functional}

The feedback law $u=\mathcal{K}(\rho)X$ yields the closed-loop system
\[
\dot X = \mathcal{A}_K(\rho)X + \mathcal{B}_w(\rho)w,
\qquad
z = \mathcal{C}_z^K(\rho)X + \mathcal{D}_{zw}(\rho)w,
\]
where
\[
\mathcal{A}_K(\rho)=\mathcal{A}(\rho)+\mathcal{B}_u(\rho)\mathcal{K}(\rho),
\qquad
\mathcal{C}_z^K(\rho)=\mathcal{C}_z(\rho)+\mathcal{D}_{zu}(\rho)\mathcal{K}(\rho).
\]

Let $\mathcal{P}(\rho)$ be self-adjoint, coercive, and satisfy~\eqref{eq:P-bounds}.
The functional
\[
V_\rho(X)=\langle X,\mathcal{P}(\rho)X\rangle_{\mathscr{X}}
\]
serves to characterize dissipativity of the closed-loop dynamics.
Along trajectories one obtains
\begin{align}\label{eq:Hinf-derivative}
\dot V_\rho(X)
&=
\langle X,\mathcal{A}_K(\rho)^{\star}\mathcal{P}(\rho)X\rangle
+
\langle X,\mathcal{P}(\rho)\mathcal{A}_K(\rho)X\rangle \nonumber\\
&\quad
+2\,\Re\,\langle X,\mathcal{P}(\rho)\mathcal{B}_w(\rho)w\rangle
+
\sum_{i=1}^p \dot\rho_i\,
\langle X,\partial_{\rho_i}\mathcal{P}(\rho)X\rangle .
\end{align}
Here $\Re\langle X,\mathcal{P}(\rho)\mathcal{B}_w(\rho)w\rangle$ denotes the real part of the scalar  $\langle X,\mathcal{P}(\rho)\mathcal{B}_w(\rho)w\rangle$.

\subsection{Operator \texorpdfstring{$H_\infty$}{H-infinity} inequality}

The inequality~\eqref{eq:Hinf-goal} is ensured when there exist $\gamma>0$ and $\alpha>0$ such that, for all admissible $(\rho,\nu)\in\mathbb{P}\times\mathcal{V}$,
\begin{equation}\label{eq:ODLMI-Hinf}
\begin{bmatrix}
\Xi_K(\rho,\nu)+(\mathcal{C}_z^K(\rho))^{\star}\mathcal{C}_z^K(\rho)
&
\mathcal{P}(\rho)\mathcal{B}_w(\rho)+(\mathcal{C}_z^K(\rho))^{\star}\mathcal{D}_{zw}(\rho)
\\[0.3em]
\mathcal{B}_w(\rho)^{\star}\mathcal{P}(\rho)+\mathcal{D}_{zw}(\rho)^{\star}\mathcal{C}_z^K(\rho)
&
\mathcal{D}_{zw}(\rho)^{\star}\mathcal{D}_{zw}(\rho)-\gamma^{2}I_W
\end{bmatrix}
\preceqop 0,
\end{equation}
where
\[
\Xi_K(\rho,\nu)
=
\mathcal{A}_K(\rho)^{\star}\mathcal{P}(\rho)
+
\mathcal{P}(\rho)\mathcal{A}_K(\rho)
+
\sum_i \nu_i\,\partial_{\rho_i}\mathcal{P}(\rho)
+
2\alpha I_{\mathscr X}.
\]
This inequality expresses a differential energy balance for the closed-loop.

\begin{theorem}[$H_\infty$ performance]\label{thm:Hinf}
Assume that~\eqref{eq:ODLMI-Hinf} holds with $\mathcal{P}(\rho)\succ0$ and $\mathcal{K}(\rho)$. Then the closed-loop parameter-dependent Vlasov--Maxwell system is uniformly exponentially stable under the stated
coercivity assumptions for $w\equiv0$, and satisfies the disturbance attenuation property~\eqref{eq:Hinf-goal} for zero initial condition.
\end{theorem}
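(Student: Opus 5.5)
The proof is a standard dissipation argument built on the storage functional $V_\rho(X)=\langle X,\mathcal P(\rho)X\rangle$. We test the operator inequality~\eqref{eq:ODLMI-Hinf} against the pair $(X,w)$ with $X\in\mathcal D(\mathcal A)$, $w\in W$. Expanding the quadratic form and comparing with the derivative formula~\eqref{eq:Hinf-derivative}, with $\nu=\dot\rho(t)\in\mathcal V$ (Assumption~\ref{ass:rho}), gives the pointwise dissipation inequality
\[
\dot V_\rho(X)+2\alpha\|X\|_{\mathscr X}^2+\|z\|_Z^2-\gamma^2\|w\|_W^2\le 0 .
\]
The off-diagonal blocks produce exactly $2\Re\langle X,\mathcal P\mathcal B_w w\rangle+2\Re\langle \mathcal C_z^K X,\mathcal D_{zw}w\rangle$. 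Together with the diagonal terms $(\mathcal C_z^K)^\star\mathcal C_z^K$ and $\mathcal D_{zw}^\star\mathcal D_{zw}$, these assemble into $\|\mathcal C_z^KX+\mathcal D_{zw}w\|^2=\|z\|^2$. This inequality holds for a.e. $t$ along strong solutions.

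For stability, set $w\equiv 0$. The inequality gives $\dot V_\rho\le -2\alpha\|X\|^2\le -(2\alpha/\overline p)V_\rho$, using the upper bound in~\eqref{eq:P-bounds}. Grönwall and the lower bound then give $\|X(t)\|^2\le(\overline p/\underline p)e^{-2\alpha t/\overline p}\|X(0)\|^2$. This is the estimate of Theorem~\ref{thm:exp-stab} with rate $\alpha/\overline p$; the rate differs because the term $2\alpha I_{\mathscr X}$ replaces $2\alpha\mathcal P$. Equivalently, the $(1,1)$ block of~\eqref{eq:ODLMI-Hinf} implies~\eqref{eq:ODLMI} with $\alpha$ replaced by $\alpha/\overline p$, since the $(1,1)$ block of a nonpositive block operator is nonpositive and $\mathcal (C_z^K)^\star\mathcal C_z^K\succeq0$. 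Theorem~\ref{thm:exp-stab} then applies directly.

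For attenuation, take $X(0)=0$ and $w\in L^2(0,\infty;W)$. Integrating the dissipation inequality on $[0,T]$ gives
\[
V_\rho(X(T))+\int_0^T\|z\|^2\,dt\le\gamma^2\int_0^T\|w\|^2\,dt .
\]
Dropping $V_\rho(X(T))\ge0$ and letting $T\to\infty$ by monotone convergence yields~\eqref{eq:Hinf-goal}. For nonzero data, the same integration leaves the storage term $V_\rho(X(0))\le\overline p\|X(0)\|^2$ on the right-hand side.

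The main obstacle is justifying the differentiation of $V_\rho$ along trajectories. The operator $\mathcal A_K(\rho)$ is unbounded, $\rho$ is only piecewise $C^1$, and solutions are a priori only mild; the paper gives well-posedness via Theorem~\ref{thm:wellposed} with $\mathcal A$ replaced by $\mathcal A_K$, assuming $\mathcal K$ is bounded. We handle this by regularization. First, take $X(0)\in\mathcal D(\mathcal A)$ and $w$ smooth, say $C^1$ with compact support, so that by the remark following Theorem~\ref{thm:wellposed} the solution is strong on each interval where $\rho$ is $C^1$. On such an interval $V_\rho(X(t))$ is absolutely continuous and~\eqref{eq:Hinf-derivative} holds a.e. The integrated inequality is then summed over the finitely many $C^1$ pieces on $[0,T]$, which is possible because $V_\rho(X(t))$ is continuous across the switching times. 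Finally, the integrated inequality passes to general $X(0)\in\mathscr X$ and $w\in L^2$ by density. This uses the continuity of the mild-solution map in~\eqref{eq:evolution-family}, which follows from the bound $\|\mathcal S_\rho(t,s)\|\le Me^{\omega(t-s)}$, together with boundedness of $\mathcal C_z^K$ and $\mathcal D_{zw}$ on finite horizons.
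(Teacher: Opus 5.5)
Your proof is correct and follows the paper's own sketch: you test~\eqref{eq:ODLMI-Hinf} against $(X,w)$ to get the dissipation inequality $\dot V_\rho+2\alpha\|X\|_{\mathscr X}^2+\|z\|_Z^2\le\gamma^2\|w\|_W^2$, integrate it with zero initial data for the attenuation bound, and use coercivity of $\mathcal P(\rho)$ when $w\equiv0$. Your additions are sound refinements that the paper omits, namely the explicit decay rate $\alpha/\overline p$ (correctly traced to the term $2\alpha I_{\mathscr X}$ rather than $2\alpha\mathcal P$ in $\Xi_K$) and the regularization/density justification; one caveat is that your appeal to the strong-solution remark tacitly needs $C^1$ dependence on $\rho$ of $\mathcal A_K(\rho)$ (hence of $\mathcal K$) and of the forcing $\mathcal B_w(\rho(t))w(t)$, which the standing continuity assumptions do not provide, and the paper's sketch does not address this point either.
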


\begin{proof}[Sketch]
Combining~\eqref{eq:ODLMI-Hinf} with~\eqref{eq:Hinf-derivative} yields
\[
\dot V_\rho(X)+2\alpha\|X\|_{\mathscr X}^{2}+\|z\|_{Z}^{2}
\le
\gamma^{2}\|w\|_{W}^{2}.
\]
For zero initial condition, integration over $[0,T]$ followed by $T\to\infty$ gives~\eqref{eq:Hinf-goal}. Setting $w\equiv0$ and using the coercivity of
$\mathcal{P}(\rho)$ gives uniform exponential decay.
\end{proof}

\subsection{Galerkin approximation and finite-dimensional LMIs}

Let $X_N\subset\mathscr{X}$ be a Galerkin subspace and denote the projected operators
\[
A(\rho)=\Pi_N\mathcal{A}(\rho)\Pi_N^{\star},\quad
B_u(\rho)=\Pi_N\mathcal{B}_u(\rho),\quad
B_w(\rho)=\Pi_N\mathcal{B}_w(\rho),
\]
\[
C_z(\rho)=\mathcal{C}_z(\rho)\Pi_N^\star,\quad
D_{zu}(\rho),\quad D_{zw}(\rho)
\]
accordingly. Set
\[
A_K(\rho)=A(\rho)+B_u(\rho)K(\rho),\qquad
C_z^K(\rho)=C_z(\rho)+D_{zu}(\rho)K(\rho),
\]
and
\[
P(\rho)=\Pi_N\mathcal{P}(\rho)\Pi_N^{\star}.
\]

Projection of~\eqref{eq:ODLMI-Hinf} yields
\begin{equation}\label{eq:LMI-Hinf-matrix}
\begin{bmatrix}
\Xi_{K,N}+(C_z^K)^{\top}C_z^K
&
P B_w+(C_z^K)^{\top}D_{zw}
\\[0.3em]
B_w^{\top}P+D_{zw}^{\top}C_z^K
&
D_{zw}^{\top}D_{zw}-\gamma^{2}I
\end{bmatrix}
\prec 0,
\end{equation}
where
\[
\Xi_{K,N}
=
A_K^{\top}P
+
P A_K
+
\sum_i\nu_i\,\partial_{\rho_i}P
+
2\alpha I_N .
\]

If all decision variables are parameterized affinely and the matrix inequality is affine with respect to $(\rho,\nu)$ after the chosen change of variables, verification at the vertices of $\mathbb{P}\times\mathcal{V}$ is sufficient.

The next result characterizes disturbance attenuation in terms of an operator differential inequality.

\begin{proposition}
\label{prop:consistency-Hinf}
The operator inequality~\eqref{eq:ODLMI-Hinf} and the matrix LMI~\eqref{eq:LMI-Hinf-matrix} satisfy:
\begin{enumerate}
\item[(i)] feasibility of the operator inequality implies feasibility of all Galerkin LMIs;
\item[(ii)] if the Galerkin LMIs are feasible on all vertices for a sequence $N\to\infty$ and the solutions $P_N(\rho)$ are uniformly coercive, then there
exists a subsequential limit operator $\mathcal{P}(\rho)\succ0$ satisfying
\eqref{eq:ODLMI-Hinf} in quadratic-form sense.
\end{enumerate}
Hence, the Galerkin LMIs provide a consistent finite-dimensional approximation of the operator $H_\infty$ condition.
\end{proposition}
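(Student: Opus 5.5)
For (i) I would use a compression (congruence) argument. Write $\mathcal M(\rho,\nu)$ for the block operator on the left of \eqref{eq:ODLMI-Hinf}, acting on $\mathscr X\times W$. Fix $N$ and consider the isometry $\mathcal J_N:=\mathrm{diag}(\Pi_N^\star,I_W):\mathscr X_N\times W\to\mathscr X\times W$.

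Operator inequalities are preserved under congruence, so $\mathcal J_N^\star\mathcal M(\rho,\nu)\mathcal J_N\preceq 0$. I will then identify this compression block by block with the matrix in \eqref{eq:LMI-Hinf-matrix}, taking
\[
P=\Pi_N\mathcal P\Pi_N^\star,\qquad K=\mathcal K\Pi_N^\star .
\]
The entries $(C_z^K)^\top C_z^K$, $D_{zw}^\top D_{zw}$, $\sum_i\nu_i\partial_{\rho_i}P$ and $2\alpha I_N$ compress exactly. This uses that the output operators are composed with $\Pi_N^\star$ and that $\partial_{\rho_i}$ commutes with the fixed projection.

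The delicate terms are $\Pi_N\mathcal P\mathcal A_K\Pi_N^\star$ and $\Pi_N\mathcal P\mathcal B_w$. These equal $PA_K$ and $PB_w$ only if $\mathcal P(\rho)$ leaves $\mathscr X_N$ invariant, or more generally up to the commutator $\Pi_N\mathcal P(I-\Pi_N^\star\Pi_N)\mathcal A_K\Pi_N^\star$. To handle this I would do one of two things:
\begin{itemize}
\item state (i) for a Galerkin scheme adapted to $\mathcal P$, e.g.\ a spectral/Fourier basis in which $\mathcal P(\rho)$ is block diagonal;
\item absorb the commutator using the slack $2\alpha I$ together with the strict inequality in \eqref{eq:LMI-Hinf-matrix}, for $N$ large.
\end{itemize}
Strictness in \eqref{eq:LMI-Hinf-matrix} is obtained by replacing $\alpha$ with $\alpha/2$, which leaves a margin $\alpha\|x\|^2$. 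Also, $\Pi_N\mathscr X\subset\mathcal D(\mathcal A)$ must be assumed so that $A(\rho)$ is defined.

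For (ii) I would follow the proof of Theorem~\ref{thm:limit}. Extend $P_N(\rho)$ by $\Pi_N^\star P_N\Pi_N$ and $K_N$ by $K_N\Pi_N$. Uniform coercivity and uniform gain bounds (assumed as in Theorem~\ref{thm:limit}) allow extraction, via Banach--Alaoglu~\ref{thm:banach-alaoglu} and a diagonal argument over a countable dense set of $(\rho,\nu)$, of weak-$\star$ limits $\mathcal P(\rho)$, $\mathcal K(\rho)$. I would then test \eqref{eq:LMI-Hinf-matrix} with $(\Pi_Nx,w)$ for $x\in\mathcal D(\mathcal A)$ and $w\in W$, and pass to the limit term by term:
\begin{itemize}
\item Terms linear in $P_N$ or $K_N$ paired against strongly convergent vectors such as $\Pi_N\mathcal A(\rho)x\to\mathcal A(\rho)x$ converge by the weak/strong pairing.
\item The quadratic term $\|C_z^Kx\|^2$ is only lower semicontinuous under weak convergence. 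This is the right direction for a $\preceq 0$ inequality, since $\liminf$ of the left side is at least its value at the limit.
\item Vertex feasibility extends to all of $\mathbb P\times\mathcal V$ by affinity before the limit is taken.
\end{itemize}
Finally, the order bounds pass to weak limits, giving $\mathcal P(\rho)\succ0$.

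I expect the main obstacle to be the mixed products $P_NK_N$ inside $P_NB_uK_N$, and $K_N^\star D_{zu}^\star D_{zu}K_N$. A product of two weakly convergent sequences need not converge to the product of the limits. My first attempt would be to work in the convex variables $(Q_N,Y_N)$ with $Y_N=K_NQ_N$, so that the inequality is affine in the unknowns after a Schur complement, pass to the weak limits $(\mathcal Q,\mathcal Y)$, and then define $\mathcal K=\mathcal Y\mathcal Q^{-1}$. The $\partial_{\rho_i}$ terms need weak convergence of the derivatives, which I would secure by the uniform bound of Assumption~\ref{ass:bounded-families} type for the $P_N$ together with an Arzelà--Ascoli argument in $\rho$.
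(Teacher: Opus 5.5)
Your skeleton matches the paper's in both parts, and both of your diagnoses are sharper than its sketch; the gaps are in the repairs.

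For (i), the paper tests \eqref{eq:ODLMI-Hinf} with $(\Pi_N^\star x_N,w)$ and obtains \eqref{eq:LMI-Hinf-matrix} ``using the definitions of the projected operators''. As you saw, this matches $PA_K$ and $PB_w$ only up to the terms $\Pi_N\mathcal P(I-\Pi_N^\star\Pi_N)\mathcal A_K\Pi_N^\star$ and $\Pi_N\mathcal P(I-\Pi_N^\star\Pi_N)\mathcal B_w$. Without $\mathcal P(\rho)$-invariance of $\mathscr X_N$, the claim itself fails. Take $\mathscr X=\mathbb R^2$, $\mathcal A=\begin{bmatrix}1&1\\-5&-3\end{bmatrix}$ (Hurwitz), $\mathcal B_u=\mathcal B_w=e_2$, $\mathcal C_z=0$, $\mathcal D_{zu}=\mathcal D_{zw}=0$, with all data independent of $\rho$. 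Then $\mathcal K=0$, $\mathcal A^\top\mathcal P+\mathcal P\mathcal A=-I$, small $\alpha$ and large $\gamma$ satisfy \eqref{eq:ODLMI-Hinf}. Yet on $\mathscr X_1=\mathrm{span}(e_1)$ one has $A=1$ and $B_u=B_w=0$, so \eqref{eq:LMI-Hinf-matrix} at $\nu=0$ demands $2p+2\alpha<0$ with $p>0$, whatever $K$. So your option (a) is a hypothesis that must be added to the statement, not a convenience.

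Option (b) fails. The norm of $(I-\Pi_N^\star\Pi_N)\mathcal A_K\Pi_N^\star$ on $\mathscr X_N$ typically grows with $N$, since multiplication by $v$ in $v\cdot\nabla_x$ does not preserve a truncated Hermite space. Meanwhile $\Pi_N\mathcal P(I-\Pi_N^\star\Pi_N)\to 0$ only strongly, so nothing bounds the commutator by a fixed $\alpha$. A large-$N$ result would not give \emph{all} Galerkin LMIs anyway. Finally, $\alpha\mapsto\alpha/2$ leaves no margin at $x=0$, where the form reduces to $\|\mathcal D_{zw}w\|^2-\gamma^2\|w\|^2$. Strict feasibility therefore also needs $\gamma$ enlarged.

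For (ii), the paper extracts only $\mathcal P$ and passes to the limit without ever addressing $\mathcal K$. You rightly add a uniform gain bound and use weak lower semicontinuity for $\|C_z^Kx+D_{zw}w\|^2$. You also correctly see that $\langle\widetilde P_Nx,\mathcal B_u\widetilde K_Nx\rangle$ pairs two merely weakly convergent sequences. Moving to $(Q_N,Y_N)$ is the right repair, with these caveats:
\begin{itemize}
\item The term $2\alpha I$ becomes $2\alpha Q^2$ under the congruence and needs its own Schur complement.
\item The term-by-term limit needs $\Pi_N^\star\Pi_N x\to x$ in the graph norm of $\mathcal A$ (of $\mathcal A^\star$ in the $Q$-form). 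The convergence $\Pi_N\mathcal Ax\to\mathcal Ax$ you wrote is automatic and is not what is needed.
\item The limit inequality for $(\mathcal Q,\mathcal Y)$ is tested on $\mathcal D(\mathcal A^\star)$. Returning to $\mathcal P=\mathcal Q^{-1}$, $\mathcal K=\mathcal Y\mathcal Q^{-1}$ on $\mathcal D(\mathcal A)$ through $y=\mathcal Px$ requires $\mathcal P\,\mathcal D(\mathcal A)\subset\mathcal D(\mathcal A^\star)$, which weak-$\star$ limits do not supply.
\item $\mathcal Q^{-1}$ is generally not the weak limit of $\widetilde P_N$, since inversion is not weakly continuous. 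You obtain a certificate rather than the ``subsequential limit'' of the $P_N$ named in the statement.
\item Extending from vertices ``by affinity'' is an assumption you share with the paper, not a consequence: $A(\rho)Q(\rho)$ is quadratic in $\rho$ when both factors vary.
\end{itemize}
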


\begin{proof}[Sketch]
For (i), assume that~\eqref{eq:ODLMI-Hinf} holds on
$\mathscr X\times W$. Testing the inequality with
$(\Pi_N^\star x_N,w)$ and using the definitions of the projected operators gives~\eqref{eq:LMI-Hinf-matrix}.

For (ii), extend $P_N(\rho)$ to $\mathscr X$ by
\[
\widetilde P_N(\rho)=\Pi_N^\star P_N(\rho)\Pi_N .
\]
Uniform boundedness implies relative weak-$\star$ compactness; by Banach--Alaoglu~\ref{thm:banach-alaoglu}, a subsequence converges to a bounded
operator $\mathcal P(\rho)$, and the order bounds pass to the limit. Testing the Galerkin inequalities with $x_N=\Pi_NX$ and using strong convergence
$\Pi_N\to I$ allows passage to the limit in quadratic form, yielding~\eqref{eq:ODLMI-Hinf}. Under the same affine dependence assumptions, vertex feasibility extends the inequality to the whole polytope.
\end{proof}

\subsection{Dual \texorpdfstring{$H_\infty$}{H-infinity} observer formulation}

A dual version of~\eqref{eq:ODLMI-Hinf} leads to an $H_\infty$ observer gain $\mathcal{L}(\rho)$ satisfying
\begin{equation}\label{eq:Hinf-dual}
\begin{bmatrix}
\Xi_o(\rho,\nu)+\mathcal{C}_z^{\star}\mathcal{C}_z
&
\mathcal{P}_o\mathcal{B}_w
\\[0.3em]
\mathcal{B}_w^{\star}\mathcal{P}_o
&
-\gamma^{2}I_W
\end{bmatrix}
\preceqop 0,
\end{equation}
where
\[
\Xi_o(\rho,\nu)
=
\mathcal{A}^{\star}\mathcal{P}_o
+
\mathcal{P}_o\mathcal{A}
-
\mathrm{Sym}\big(\mathcal{P}_o \mathcal{L}\mathcal{C}\big)
+
\sum_i \nu_i\,\partial_{\rho_i}\mathcal{P}_o .
\]

\begin{theorem}
\label{thm:dual-Hinf-observer}
Let $\mathcal{P}_o(\rho)\succ0$ and $\mathcal{L}(\rho)$ satisfy~\eqref{eq:Hinf-dual}.
Then the parameter-dependent observer
\[
\dot{\hat X}
=
\mathcal{A}(\rho)\hat X
+
\mathcal{L}(\rho)(y-\mathcal{C}(\rho)\hat X)
\]
ensures exponential decay of the estimation error for $w\equiv0$ and achieves the attenuation level~$\gamma$ from $w$ to the selected error output.
\end{theorem}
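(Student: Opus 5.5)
The argument is a dissipation inequality for the estimation error, parallel to the proof of Theorem~\ref{thm:Hinf}, with the injection term $\mathcal L(\rho)\mathcal C(\rho)$ in place of the feedback term.

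\textbf{Error system.} With disturbance, the plant is $\dot X=\mathcal A(\rho)X+\mathcal B_w(\rho)w$ and $y=\mathcal C(\rho)X$; the observer does not see $w$. The error $e=X-\hat X$ therefore obeys
\[
\dot e=\big(\mathcal A(\rho)-\mathcal L(\rho)\mathcal C(\rho)\big)e+\mathcal B_w(\rho)w ,
\]
and the error output is taken as $z_e=\mathcal C_z(\rho)e$, matching the $(1,1)$ block of \eqref{eq:Hinf-dual}.

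\textbf{Well-posedness.} Since $\mathcal L(\rho)\mathcal C(\rho)$ is bounded and continuous in $\rho$ by Assumption~\ref{ass:bounded-families}, it is a bounded perturbation of $\mathcal A(\rho)$. The perturbed family keeps the common domain and uniform Kato bounds, so Theorem~\ref{thm:wellposed} gives a unique mild solution $e\in C([0,\infty);\mathscr X)$ for $w\in L^2_{\mathrm{loc}}$.

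\textbf{Dissipation inequality.} Set $V_\rho(e)=\langle e,\mathcal P_o(\rho)e\rangle$. For $e(0)\in\mathcal D(\mathcal A)$ and $\rho$ of class $C^1$ on each piece, I would repeat the computation of \eqref{eq:Hinf-derivative} with $\mathcal A_K$ replaced by $\mathcal A-\mathcal L\mathcal C$. This gives
\[
\dot V_\rho=\langle e,\Xi_o(\rho,\dot\rho)e\rangle+2\Re\langle e,\mathcal P_o\mathcal B_w w\rangle .
\]
Testing \eqref{eq:Hinf-dual} with the pair $(e,w)$ then yields
\[
\dot V_\rho+\|\mathcal C_z e\|^2\le\gamma^2\|w\|^2 .
\]
By density of $\mathcal D(\mathcal A)$, continuity of the evolution family, and continuity of $V_\rho$, the integrated form of this inequality extends to mild solutions. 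Integrating on $[0,T]$ with $e(0)=0$ and $V_\rho\ge0$ gives $\int_0^T\|z_e\|^2\le\gamma^2\int_0^T\|w\|^2$. Letting $T\to\infty$ gives the attenuation level $\gamma$.

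\textbf{Exponential decay.} This is the main obstacle. Inequality \eqref{eq:Hinf-dual}, unlike \eqref{eq:ODLMI-Hinf}, contains no $2\alpha$ margin. With $w\equiv0$ it only yields
\[
\Xi_o+\mathcal C_z^\star\mathcal C_z\preceq0 ,
\]
which is dissipativity but not decay. I would handle this in one of two ways.
\begin{itemize}
\item Read the inequality strictly, or with an added $2\beta\mathcal P_o$ as in Definition~\ref{def:dualODLMI}, and verify that the $(1,1)$ block then implies \eqref{eq:dualODLMI}. Theorem~\ref{thm:obsconv} then gives \eqref{eq:obsconv}.
\item Assume $\mathcal C_z$ is coercive, $\mathcal C_z^\star\mathcal C_z\succeq c_0I$. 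Then
\[
\dot V_\rho\le -c_0\|e\|^2\le-(c_0/\overline p_o)V_\rho ,
\]
and Grönwall's inequality with the coercivity bounds on $\mathcal P_o$ gives decay at rate $\beta=c_0/(2\overline p_o)$.
\end{itemize}
I would first try to extract the margin from strictness, since this parallels the $2\alpha I$ term in $\Xi_K$. I would state explicitly which of the two hypotheses is being used.
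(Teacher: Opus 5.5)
Your attenuation argument is the paper's argument. The paper's sketch writes $\dot e=(\mathcal A-\mathcal L\mathcal C)e+\mathcal B_w w$ and invokes ``the same Lyapunov argument as in the closed-loop case'' with $V_\rho(e)=\langle e,\mathcal P_o(\rho)e\rangle$. That is your dissipation inequality $\dot V_\rho+\|\mathcal C_z e\|^2\le\gamma^2\|w\|^2$, and taking the error output as $\mathcal C_z e$ with no feedthrough is what the off-diagonal and $(2,2)$ blocks of \eqref{eq:Hinf-dual} require. Your well-posedness step (a bounded perturbation of a Kato-stable family) and your density step are additions the paper does not make.

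You depart from the paper on the decay claim, and there your diagnosis is right and the paper's sketch is not. Theorem~\ref{thm:Hinf} gets decay from the $2\alpha I_{\mathscr X}$ term built into $\Xi_K$. The operator $\Xi_o$ in \eqref{eq:Hinf-dual} has no such margin, so ``the same argument'' only yields $\dot V_\rho\le-\|\mathcal C_z e\|^2\le0$. With $\mathcal B_w=0$ and $\mathcal C_z=0$, the hypothesis reduces to $\Xi_o\preceq0$. Conservative error dynamics satisfy this with equality, for instance $\mathcal P_o=I$, $\mathcal L=0$ and a skew-adjoint generator such as the Maxwell block of Proposition~\ref{prop:AM-skew}. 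So the decay part cannot be proved from the stated hypothesis, and stating an extra assumption explicitly, as you plan, is the correct response.

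Both of your repairs work as you compute: adding $2\beta\mathcal P_o$ (or $2\alpha I$, mirroring $\Xi_K$), or assuming $\mathcal C_z$ coercive. Be careful with ``reading the inequality strictly.'' In infinite dimensions, $\langle x,Mx\rangle<0$ for all $x\neq0$ gives no rate. You need a margin $\preceq-\varepsilon I$ on the $e$-block, uniform in $(\rho,\nu)$. Then $\Xi_o\preceq-\varepsilon I\preceq-(\varepsilon/\overline p_o)\mathcal P_o$ gives \eqref{eq:dualODLMI} with $\beta=\varepsilon/(2\overline p_o)$. Strict Galerkin LMIs on a compact $\mathbb P\times\mathcal V$ do have such a margin, but it need not be uniform in $N$.

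A weaker substitute for coercivity of $\mathcal C_z$ is a uniform observability estimate $\int_t^{t+\tau}\|\mathcal C_z e\|^2\ge c\|e(t)\|^2$ along admissible trajectories of the homogeneous error system. It gives $V_\rho(t+\tau)\le(1-c/\overline p_o)V_\rho(t)$, hence exponential decay.

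Two smaller points:
\begin{enumerate}
\item Assumption~\ref{ass:bounded-families} covers $\mathcal C(\rho)$ but not $\mathcal L(\rho)$. The boundedness of $\mathcal L(\rho)$ has to be taken from the observer setup.
\item In the density step you must also approximate $w\in L^2$ by smooth inputs, using continuity of the input-to-state map. For $w\in L^2$ the mild solution need not be a strong solution, even when $e(0)\in\mathcal D(\mathcal A)$.
\end{enumerate}
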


\begin{proof}[Sketch]
The error $e=X-\hat X$ satisfies
\[
\dot e=
(\mathcal{A}(\rho)-\mathcal{L}(\rho)\mathcal{C}(\rho))e
+
\mathcal{B}_w(\rho)w .
\]
The same Lyapunov argument as in the closed-loop case, using $V_\rho(e)=\langle e,\mathcal{P}_o(\rho)e\rangle$, yields both exponential decay for $w\equiv0$ and the corresponding attenuation estimate.
\end{proof}

\begin{corollary}
\label{cor:strict-duality}
The primal and dual inequalities have the same Lyapunov structure and lead to coordinated feedback and observer gains. In particular, when the corresponding
operator identities are well defined and dimensionally compatible, one recovers a formal controller--observer duality between the synthesis conditions associated with $\mathcal{K}(\rho)$ and $\mathcal{L}(\rho)$.
\end{corollary}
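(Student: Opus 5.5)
The plan is to make the informal duality precise by matching the two synthesis inequalities term by term, using formal operator adjoints. Set $\mathcal A_K(\rho)=\mathcal A(\rho)+\mathcal B(\rho)\mathcal K(\rho)$ for the closed loop and $\mathcal A_L(\rho)=\mathcal A(\rho)-\mathcal L(\rho)\mathcal C(\rho)$ for the observer error. Under the formal substitutions
\[
\mathcal A\leftrightarrow\mathcal A^\star,\qquad \mathcal B\leftrightarrow\mathcal C^\star,\qquad \mathcal K\leftrightarrow-\mathcal L^\star,
\]
one obtains $\mathcal A_K^\star=\mathcal A^\star+\mathcal K^\star\mathcal B^\star$, which corresponds to $\mathcal A-\mathcal L\mathcal C=\mathcal A_L$. "Dimensionally compatible" will mean $U\cong Y$, so that $\mathcal K^\star:U\to\mathscr X$ has the same type as $\mathcal L:Y\to\mathscr X$. "Well defined" will mean that the adjoints of the unbounded $\mathcal A(\rho)$ exist on a common dense domain, which follows from the generation results of Section~\ref{sec:wellposed}, since the adjoint of a $C_0$-generator is again a generator on a Hilbert space.

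First I compare the convex Galerkin forms \eqref{eq:LMI-matrix} and \eqref{eq:matrix-dualLMI}, where the algebra is exact. Transposing \eqref{eq:LMI-matrix} with $A\to A^\top$, $B\to C^\top$, $Y\to -Y_o^\top$ and $Q\to Q_o$ reproduces \eqref{eq:matrix-dualLMI} term by term. The parameter-rate term $-\sum_i\nu_i\partial_{\rho_i}Q$ and the decay term $2\alpha Q$ are left unchanged by this map. Consequently, the gain recovered on the control side, $K=YQ^{-1}$, is sent to $L^\top$, whereas the paper's observer recovery is $L=Q_o^{-1}Y_o$. I will verify this correspondence directly and identify the resulting gains as the "coordinated" gains. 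The same argument applies to the $H_\infty$ forms \eqref{eq:ODLMI-Hinf} and \eqref{eq:Hinf-dual}, after swapping $\mathcal B_w\leftrightarrow\mathcal C_z^\star$ and setting the feedthrough terms to zero.

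Next I lift the correspondence to the operator level. I will use that $\mathcal P$ coercive implies $\mathcal P^{-1}$ coercive, and that differentiating $\mathcal P^{-1}$ gives $\partial_{\rho_i}(\mathcal P^{-1})=-\mathcal P^{-1}(\partial_{\rho_i}\mathcal P)\mathcal P^{-1}$. Conjugating \eqref{eq:ODLMI} by $\mathcal Q=\mathcal P^{-1}$ then yields a $\mathcal Q$-form of the inequality, in the sense of quadratic forms on the domain of $\mathcal A^\star$. Applying the substitution above to this $\mathcal Q$-form produces \eqref{eq:dualODLMI} with $\mathcal P_o=\mathcal Q^{-1}$. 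The congruence preserves the ordering because $\mathcal Q$ is bounded and invertible.

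The main obstacle is domain bookkeeping. The congruence has to map $\mathcal D(\mathcal A)$ onto $\mathcal D(\mathcal A^\star)$, and in general it does not. I would first prove the inequalities on the Galerkin level, where everything is finite-dimensional. I would then pass to the limit using Theorems~\ref{thm:limit} and~\ref{thm:limitobs}, which is exactly why the corollary is stated only as a formal duality.
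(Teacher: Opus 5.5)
Your proposal has a genuine gap: the identities you describe as exact are false. The left-hand side of \eqref{eq:LMI-matrix} is symmetric, so transposing it does nothing. Your substitution $A\to A^\top$, $B\to C^\top$, $Y\to -Y_o^\top$, $Q\to Q_o$ sends $AQ+QA^\top$ to $A^\top Q_o+Q_oA$, whereas \eqref{eq:matrix-dualLMI} contains $AQ_o+Q_oA^\top$. With $Y_o=Q_oL$, what your map actually produces is
\[
(A-LC)^\top Q_o+Q_o(A-LC)-\sum_{i=1}^p\nu_i\,\partial_{\rho_i}Q_o+2\beta Q_o\preceq 0 .
\]
This is a $P$-form Lyapunov inequality for the error matrix in which $Q_o$ plays the role of $P_o$ (not of $P_o^{-1}$), and the rate term enters with the wrong sign. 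You reproduce the printed \eqref{eq:matrix-dualLMI} only by leaving $A$ unchanged. It is then simply the primal LMI for the pair $(A,C^\top)$, and it certifies $A-C^\top L^\top$ rather than $(A-LC)^\top$. In fact, with $Y_o=Q_oL$ the printed form is not a congruence of the projected \eqref{eq:dualODLMI} at all: conjugating by $Q_o$ produces the bilinear term $LCQ_o$, not $Y_oC=Q_oLC$. So the direct verification you announce cannot succeed against the paper's formula.

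The same slip reappears at operator level. Conjugating \eqref{eq:ODLMI} by $\mathcal Q=\mathcal P^{-1}$ gives $\mathcal Q\mathcal A_K^\star+\mathcal A_K\mathcal Q-\sum_i\nu_i\partial_{\rho_i}\mathcal Q+2\alpha\mathcal Q\preceq 0$. Your substitution $\mathcal A_K\mapsto\mathcal A_L^\star$ turns this into
\[
\mathcal A_L^\star\mathcal Q+\mathcal Q\mathcal A_L-\sum_{i=1}^p\nu_i\,\partial_{\rho_i}\mathcal Q\preceq-2\alpha\mathcal Q ,
\]
which is \eqref{eq:dualODLMI} with $\mathcal P_o=\mathcal Q$ (not $\mathcal Q^{-1}$) and with $\nu$ replaced by $-\nu$. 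The missing idea is that the adjoint of a parameter-varying evolution runs backward in time, so the rate term is not ``left unchanged''. The adjoint duality therefore holds only if $\mathcal V=-\mathcal V$, which Assumption~\ref{ass:rho} does not give, or if the dual condition is imposed on $-\mathcal V$. Keeping $\mathcal P_o=\mathcal Q^{-1}$ would also break your Galerkin-then-limit route, since inversion does not commute with the weak-$\star$ limits of Theorems~\ref{thm:limit} and~\ref{thm:limitobs}.

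None of this is needed for the statement as the paper intends it. Its only justification is the remark after Definition~\ref{def:dualODLMI}: \eqref{eq:ODLMI} and \eqref{eq:dualODLMI} are the same inequality $\mathcal M^\star\mathcal P+\mathcal P\mathcal M+\sum_i\nu_i\partial_{\rho_i}\mathcal P\preceq-2\alpha\mathcal P$, with $\mathcal M=\mathcal A+\mathcal B\mathcal K$ in one case and $\mathcal M=\mathcal A-\mathcal L\mathcal C$ in the other. That involves no adjoints of $\mathcal A(\rho)$, no congruence, and no domain bookkeeping. If you want the genuine adjoint duality, state it with $\mathcal P_o=\mathcal Q$ and reversed rates, and do not claim it reproduces the printed Galerkin dual LMI.
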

The $H_\infty$ operator inequality characterizes disturbance attenuation in the parameter-dependent Vlasov--Maxwell system, with Galerkin LMIs preserving operator-level consistency. The dual inequality yields corresponding
sufficient conditions for observer convergence and disturbance attenuation.

\section{Joint \texorpdfstring{$H_\infty$}{H-infinity} Controller--Observer Synthesis}
\label{sec:joint-Hinf}

This section introduces an operator-based formulation for the joint design of the parameter-dependent feedback gain $\mathcal{K}(\rho)$ and observer gain $\mathcal{L}(\rho)$ for the parameter-dependent Vlasov--Maxwell system. The goal is to ensure uniform stability and $H_\infty$ performance of the closed-loop plant--observer interconnection.
The construction follows a separation-type argument adapted to the non-autonomous parameter-dependent infinite-dimensional setting through operator differential LMIs.

\subsection{Closed-loop structure}

The feedback is implemented through the estimated state,
\[
u=\mathcal{K}(\rho)\hat X .
\]
The observer is
\[
\dot{\hat X}
=
\mathcal{A}(\rho)\hat X
+
\mathcal{B}_u(\rho)\mathcal{K}(\rho)\hat X
+
\mathcal{L}(\rho)\big(y-\mathcal{C}(\rho)\hat X\big).
\]
Introduce
\[
\xi =
\begin{bmatrix}
X\\ e
\end{bmatrix},
\qquad
e=X-\hat X .
\]
Since $\hat X=X-e$, the plant satisfies
\[
\dot X
=
\mathcal A_K(\rho)X
-
\mathcal B_u(\rho)\mathcal K(\rho)e
+
\mathcal B_w(\rho)w ,
\]
where
\[
\mathcal{A}_K(\rho)
=
\mathcal{A}(\rho)+\mathcal{B}_u(\rho)\mathcal{K}(\rho).
\]
The error dynamics are
\[
\dot e
=
\mathcal A_o(\rho)e+\mathcal B_w(\rho)w,
\qquad
\mathcal A_o(\rho)
=
\mathcal A(\rho)-\mathcal L(\rho)\mathcal C(\rho).
\]
Thus the augmented dynamics are
\begin{equation}\label{eq:joint-extended-system}
\dot{\xi}
=
\begin{bmatrix}
\mathcal{A}_K(\rho) & -\mathcal{B}_u(\rho)\mathcal{K}(\rho)\\[0.15em]
0 & \mathcal{A}_o(\rho)
\end{bmatrix}\xi
+
\begin{bmatrix}
\mathcal{B}_w(\rho)\\
\mathcal{B}_w(\rho)
\end{bmatrix}w .
\end{equation}
The performance output becomes
\[
z(t)
=
\begin{bmatrix}
\mathcal C_z^K(\rho) & -\mathcal D_{zu}(\rho)\mathcal K(\rho)
\end{bmatrix}\xi
+
\mathcal D_{zw}(\rho)w ,
\]
with
\[
\mathcal C_z^K(\rho)
=
\mathcal C_z(\rho)+\mathcal D_{zu}(\rho)\mathcal K(\rho).
\]

\subsection{Joint Lyapunov functional}

Consider the block-diagonal Lyapunov functional
\[
\mathcal{V}_\rho(\xi)
=
\ipX{X}{\mathcal{P}(\rho)X}
+
\ipX{e}{\mathcal{P}_o(\rho)e},
\]
where $\mathcal{P}(\rho)$ and $\mathcal{P}_o(\rho)$ are coercive parameter-dependent Lyapunov operators satisfying the regularity assumptions of Sections~\ref{sec:lyapunov}--\ref{sec:Hinf}. Equivalently,
\[
\mathcal P_J(\rho)
=
\begin{bmatrix}
\mathcal P(\rho) & 0\\
0 & \mathcal P_o(\rho)
\end{bmatrix},
\qquad
\mathcal V_\rho(\xi)
=
\langle \xi,\mathcal P_J(\rho)\xi\rangle .
\]
Differentiation along~\eqref{eq:joint-extended-system} gives
\begin{align}
\dot{\mathcal V}_\rho(\xi)
&=
\big\langle \xi,
\big(
\mathcal A_J(\rho)^\star\mathcal P_J(\rho)
+
\mathcal P_J(\rho)\mathcal A_J(\rho)
+
\sum_i\nu_i\,\partial_{\rho_i}\mathcal P_J(\rho)
\big)\xi
\big\rangle\nonumber\\
&+
2\Re\langle \xi,\mathcal P_J(\rho)\mathcal B_J(\rho)w\rangle ,
\end{align}
where $\mathcal A_J(\rho)$ and $\mathcal B_J(\rho)$ denote the augmented operators in~\eqref{eq:joint-extended-system}.

\subsection{Joint operator \texorpdfstring{$H_\infty$}{H-infinity} condition}

A sufficient condition for
\[
\dot{\mathcal{V}}_\rho(\xi)
\le
-\|z\|_Z^2+\gamma^2\|w\|_W^2
\]
is the block operator inequality
\begin{equation}\label{eq:joint-ODLMI}
\begin{bmatrix}
\Xi_J(\rho,\nu)+\mathcal C_J(\rho)^\star\mathcal C_J(\rho)
&
\mathcal P_J(\rho)\mathcal B_J(\rho)
+
\mathcal C_J(\rho)^\star\mathcal D_{zw}(\rho)
\\[0.3em]
\mathcal B_J(\rho)^\star\mathcal P_J(\rho)
+
\mathcal D_{zw}(\rho)^\star\mathcal C_J(\rho)
&
\mathcal D_{zw}(\rho)^\star\mathcal D_{zw}(\rho)-\gamma^2 I
\end{bmatrix}
\preceqop 0,
\end{equation}
where
\[
\Xi_J(\rho,\nu)
=
\mathcal A_J(\rho)^\star\mathcal P_J(\rho)
+
\mathcal P_J(\rho)\mathcal A_J(\rho)
+
\sum_i\nu_i\,\partial_{\rho_i}\mathcal P_J(\rho),
\]
and
\[
\mathcal C_J(\rho)
=
\begin{bmatrix}
\mathcal C_z^K(\rho) & -\mathcal D_{zu}(\rho)\mathcal K(\rho)
\end{bmatrix}.
\]

\begin{theorem}
\label{thm:joint-Hinf}
Let $\mathcal{P}(\rho)\succ0$ and $\mathcal{P}_o(\rho)\succ0$ be coercive operators such that the joint operator inequality~\eqref{eq:joint-ODLMI}
holds for a common attenuation level $\gamma>0$. Then the augmented system~\eqref{eq:joint-extended-system} is uniformly stable for $w\equiv0$ and satisfies, for zero initial condition,
\[
\int_0^\infty \|z(t)\|_Z^2\,dt
\le
\gamma^2\int_0^\infty \|w(t)\|_W^2\,dt .
\]
\end{theorem}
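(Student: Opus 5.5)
The argument is a dissipation-inequality argument on the augmented state $\xi=(X,e)$ with the block-diagonal storage functional $\mathcal V_\rho(\xi)=\langle\xi,\mathcal P_J(\rho)\xi\rangle$. It mirrors the proof of Theorem~\ref{thm:Hinf}, applied to the augmented operators $\mathcal A_J,\mathcal B_J,\mathcal C_J$ of \eqref{eq:joint-extended-system}.

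\textbf{Well-posedness and regularity.} The augmented generator $\mathcal A_J(\rho)$ is block upper-triangular. Its diagonal blocks $\mathcal A_K(\rho)$ and $\mathcal A_o(\rho)$ are bounded perturbations of $\mathcal A(\rho)$, because $\mathcal B_u,\mathcal K,\mathcal L,\mathcal C$ are bounded (Assumption~\ref{ass:bounded-families}). Its off-diagonal block $-\mathcal B_u\mathcal K$ is also bounded. Theorem~\ref{thm:wellposed} therefore applies to $\mathcal A_J(\rho(t))$ on $\mathscr X\times\mathscr X$ with the common domain $\mathcal D(\mathcal A)\times\mathcal D(\mathcal A)$. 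This gives an evolution family and mild solutions for $w\in L^2_{\rm loc}$.

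I will work first with classical solutions: $\xi(0)\in\mathcal D(\mathcal A)^2$, $w$ smooth, and $\rho$ of class $C^1$ on each piece. For these the identity for $\dot{\mathcal V}_\rho(\xi)$ stated just before \eqref{eq:joint-ODLMI} holds pointwise a.e., with $\nu=\dot\rho(t)\in\mathcal V$.

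\textbf{Key inequality.} Test \eqref{eq:joint-ODLMI} with the pair $(\xi,w)$. Expanding the quadratic form gives
\[
\langle\xi,\Xi_J\xi\rangle+2\Re\langle\xi,\mathcal P_J\mathcal B_Jw\rangle+\|\mathcal C_J\xi+\mathcal D_{zw}w\|_Z^2-\gamma^2\|w\|_W^2\le0,
\]
that is, $\dot{\mathcal V}_\rho(\xi)+\|z\|_Z^2\le\gamma^2\|w\|_W^2$ for a.e.\ $t$.

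\textbf{Attenuation.} Integrate over $[0,T]$ with $\xi(0)=0$ and use $\mathcal V_\rho\ge0$ to get
\[
\int_0^T\|z\|^2\,dt\le\gamma^2\int_0^T\|w\|^2\,dt .
\]
Letting $T\to\infty$ gives the stated bound. The extension from classical to mild solutions and general $w\in L^2$ goes by density. The evolution family is uniformly bounded on $[0,T]$, so $\xi$ and $z$ depend continuously in $C([0,T])$ and $L^2(0,T)$ on the data, and both sides of the integrated inequality pass to the limit.

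\textbf{Stability.} For $w\equiv0$ the inequality gives $\dot{\mathcal V}_\rho\le0$. Coercivity of $\mathcal P$ and $\mathcal P_o$ then yields
\[
\|\xi(t)\|^2\le\frac{\max(\overline p,\overline p_o)}{\min(\underline p,\underline p_o)}\,\|\xi(0)\|^2 ,
\]
which is uniform stability. Exponential stability would require a strict margin; \eqref{eq:joint-ODLMI} carries none, unlike \eqref{eq:ODLMI-Hinf}, so I claim only uniform stability, as the statement does.

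\textbf{Main obstacle.} The hard step is justifying the differentiation of $\mathcal V_\rho$ rigorously. The difficulty is the term $\langle\xi,\mathcal A_J^\star\mathcal P_J\xi\rangle$: $\mathcal A_J$ is unbounded, and $\mathcal P_J$ need not map $\mathcal D(\mathcal A)$ into $\mathcal D(\mathcal A^\star)$. I will avoid this by reading \eqref{eq:joint-ODLMI} only in quadratic-form sense on $\mathcal D(\mathcal A)^2\times W$, where $\langle\xi,\mathcal A_J^\star\mathcal P_J\xi\rangle$ means $\langle\mathcal A_J\xi,\mathcal P_J\xi\rangle$. I will compute $\tfrac{d}{dt}\mathcal V$ through difference quotients, using strong differentiability of $\mathcal P$ in $\rho$ together with the piecewise $C^1$ regularity of $\rho$. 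At the finitely many switching points of $\rho$ on $[0,T]$ I will use only continuity of $\mathcal V$ and integrate piece by piece.
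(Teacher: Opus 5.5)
Your proposal is correct and follows the paper's own argument: test \eqref{eq:joint-ODLMI} with $(\xi,w)$ to get $\dot{\mathcal V}_\rho(\xi)+\|z\|_Z^2\le\gamma^2\|w\|_W^2$, integrate over $[0,T]$ from zero initial data, and, for $w\equiv0$, use coercivity of $\mathcal P_J(\rho)$ to obtain uniform stability (not exponential stability, since $\Xi_J$ carries no margin). Your extra justification via classical solutions and density is more careful than the paper's sketch; note only that $\mathcal D(\mathcal A)^2$-valued classical solutions are guaranteed under $C^1$-type dependence on $\rho$ of $\mathcal A(\rho)$ and of the gains, as in the remark following Theorem~\ref{thm:wellposed}, and not under the mere continuity of Assumption~\ref{ass:A}(iv).
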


\begin{proof}[Sketch of proof]
Testing~\eqref{eq:joint-ODLMI} with $(\xi,w)$ gives
\[
\dot{\mathcal V}_\rho(\xi)+\|z\|_Z^2
\le
\gamma^2\|w\|_W^2 .
\]
Integration over $[0,T]$ yields the $H_\infty$ estimate for zero initial condition. For $w\equiv0$, coercivity of $\mathcal P_J(\rho)$ gives uniform stability of the augmented dynamics.
\end{proof}

\subsection{Galerkin Approximation and Consistency}

Projection of the joint operator inequality onto a finite-dimensional Galerkin subspace $X_N\subset\mathscr X$ yields the finite-dimensional LMI
\[
\begin{bmatrix}
\Xi_{J,N}+C_{J,N}^{\top}C_{J,N}
&
P_{J,N}B_{J,N}+C_{J,N}^{\top}D_{zw}
\\[0.3em]
B_{J,N}^{\top}P_{J,N}+D_{zw}^{\top}C_{J,N}
&
D_{zw}^{\top}D_{zw}-\gamma^2 I
\end{bmatrix}
\preceq0,
\]
where
\[
P_{J,N}(\rho)
=
\begin{bmatrix}
P_N(\rho) & 0\\
0 & P_{o,N}(\rho)
\end{bmatrix}
\]
and $A_{J,N}$, $B_{J,N}$, and $C_{J,N}$ denote the Galerkin projections of $\mathcal A_J$, $\mathcal B_J$, and $\mathcal C_J$. The corresponding convex
synthesis variables are introduced as in the primal and dual designs; the feedback and observer gains are recovered from the convex variables
$(Q_N,Y_N)$ and $(Q_{o,N},Y_{o,N})$ after solving the finite-dimensional LMIs.

\begin{proposition}[Consistency of Galerkin Joint Synthesis]
\label{prop:joint-consistency}
Let $\{\Pi_N\}_{N\ge1}$ be a sequence of orthogonal projections $\Pi_N:\mathscr X\to X_N$ such that $\Pi_N\to I$ strongly on $\mathscr X$.
Assume that the finite-dimensional joint LMIs are feasible for all sufficiently large $N$, with uniformly coercive matrices $P_N(\rho)$ and $P_{o,N}(\rho)$
and uniformly bounded recovered gains. Then there exist bounded, self-adjoint, coercive operators $\mathcal P(\rho)$, $\mathcal P_o(\rho)$ and bounded gains $\mathcal K(\rho)$, $\mathcal L(\rho)$ such that the joint operator inequality~\eqref{eq:joint-ODLMI} holds in quadratic-form sense.
\end{proposition}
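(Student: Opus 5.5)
The plan follows the same compactness-and-limit scheme as Theorem~\ref{thm:limit}, Theorem~\ref{thm:limitobs} and Proposition~\ref{prop:consistency-Hinf}(ii), now applied to the block-diagonal data. Fix $(\rho,\nu)\in\mathbb P\times\mathcal V$.

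\emph{Extension and extraction.} Extend the Galerkin solutions to the whole space by setting
\[
\widetilde P_N(\rho)=\Pi_N^\star P_N(\rho)\Pi_N,\qquad
\widetilde P_{o,N}(\rho)=\Pi_N^\star P_{o,N}(\rho)\Pi_N,
\]
\[
\widetilde K_N(\rho)=K_N(\rho)\Pi_N,\qquad
\widetilde L_N(\rho)=\Pi_N^\star L_N(\rho).
\]
By the uniform coercivity bounds and the assumed uniform bounds on the recovered gains, these families are bounded in $\mathcal L(\mathscr X)$, $\mathcal L(\mathscr X,U)$ and $\mathcal L(Y,\mathscr X)$. Banach--Alaoglu~\ref{thm:banach-alaoglu}, applied with a diagonal extraction over the four families, gives a common subsequence and weak-$\star$ limits $\mathcal P,\mathcal P_o,\mathcal K,\mathcal L$. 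The next step is to show that the limits have the required properties:
\begin{itemize}
\item self-adjointness and the order bounds $\underline p I\preceq\mathcal P\preceq\overline p I$, and likewise for $\mathcal P_o$, pass to weak limits because $\langle x,\widetilde P_N x\rangle\to\langle x,\mathcal P x\rangle$ for every fixed $x$;
\item on $\mathscr X_N$ the lower bound holds exactly, and $\|\Pi_N x\|\to\|x\|$ carries it to the limit;
\item $\mathcal P_J=\mathrm{diag}(\mathcal P,\mathcal P_o)$ is then coercive.
\end{itemize}

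\emph{Passage to the limit in the block inequality.} For $(x,e)\in\mathcal D(\mathcal A)\times\mathcal D(\mathcal A)$ and $w\in W$, test the finite-dimensional joint LMI with $(\Pi_N x,\Pi_N e,w)$. This gives a scalar inequality in which every term has one of three forms:
\begin{itemize}
\item a \emph{linear-in-unknown} term, such as $2\Re\langle \Pi_N x,P_N A\,\Pi_N x\rangle$, $\langle \Pi_N x,P_N B_w w\rangle$ or $\sum_i\nu_i\langle \Pi_N x,\partial_{\rho_i}P_N\Pi_N x\rangle$;
\item a \emph{bilinear} term, namely $P_N B_u K_N$ and $P_{o,N}L_N C$ in the off-diagonal and error blocks;
\item a \emph{quadratic-in-gain} term, namely $\|C_{J,N}\xi_N\|^2$, which contains $D_{zu}K_N$.
\end{itemize}
Terms of the first type converge because $\Pi_N\to I$ strongly, the projected operators are consistent with $\mathcal A(\rho)$, $\mathcal B_w(\rho)$ and $\mathcal C(\rho)$ on $\mathcal D(\mathcal A)$, and $\widetilde P_N\rightharpoonup^\star\mathcal P$ tested against strongly convergent vectors. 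For the derivative terms I would impose the same regularity as in Assumption~\ref{ass:bounded-families}, namely uniformly bounded $\partial_{\rho_i}P_N$, and extract once more.

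\emph{Main obstacle.} The hard part is the second and third types, since a product of two weakly convergent sequences need not converge to the product of the limits. I would handle them in two ways:
\begin{itemize}
\item For the quadratic term, use weak lower semicontinuity of the norm in the favourable direction: $C_{J,N}\xi_N\rightharpoonup C_J\xi$, so $\|C_J\xi\|^2\le\liminf\|C_{J,N}\xi_N\|^2$. A term appearing on the left of a $\le 0$ inequality can only shrink in the limit, which is the correct direction.
\item For the bilinear terms, work with the convex variables $Y_N=K_NQ_N$ and $Y_{o,N}=Q_{o,N}L_N$, in which the LMI is affine. After a congruence with $\mathrm{diag}(Q_N,Q_{o,N})$, I would take weak limits of $(Q_N,Y_N)$ and $(Q_{o,N},Y_{o,N})$. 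Then $\mathcal P=\mathcal Q^{-1}$ and the gains $\mathcal K=\mathcal Y\mathcal Q^{-1}$, $\mathcal L=\mathcal Q_o^{-1}\mathcal Y_o$ are recovered, with invertibility of $\mathcal Q$ and $\mathcal Q_o$ supplied by the coercivity bounds. The quadratic output term is handled by a Schur complement, which keeps the inequality affine in the decision variables.
\end{itemize}
If the congruence does not fully decouple the cross term $-\mathcal P\mathcal B_u\mathcal K$ in the $(1,2)$ block, I would fall back on strong operator convergence along a further subsequence. Here I would rely on the uniformly bounded gains together with the assumed consistency; this is the point where I expect the argument to need the most care.

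\emph{Conclusion.} Once the quadratic-form inequality holds on the dense set $\mathcal D(\mathcal A)^2\times W$, it is the quadratic-form version of \eqref{eq:joint-ODLMI} for this $(\rho,\nu)$. Since $(\rho,\nu)$ was arbitrary, the inequality holds on all of $\mathbb P\times\mathcal V$. Under affine parameterization, feasibility at the vertices extends to the whole polytope, as in Proposition~\ref{prop:consistency-Hinf}. The limit operators then satisfy the hypotheses of Theorem~\ref{thm:joint-Hinf}.
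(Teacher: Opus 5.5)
Your scheme is the paper's: extension by $\Pi_N^\star(\cdot)\Pi_N$, Banach--Alaoglu, then testing with projected vectors. You correctly identify the crux, namely products of weakly convergent factors, which the paper's sketch passes over by simply asserting that the quadratic forms converge. Neither of your remedies closes it.

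No congruence linearizes the joint inequality. With $\mathcal P_J=\mathrm{diag}(P,P_o)$, congruence by $\mathrm{diag}(Q_N,S_N)$ has the following effect:
\begin{itemize}
\item the $(1,2)$ block $-P_NB_uK_N$ becomes $-B_uK_NS_N$, i.e.\ $-B_uY_NQ_N^{-1}$ if $S_N=I$ and $-B_uY_NQ_N^{-1}Q_{o,N}$ if $S_N=Q_{o,N}$;
\item the second entry of the Schur-complemented output column becomes $-D_{zu}K_NS_N$;
\item with $S_N=Q_{o,N}$, the $(2,2)$ block contains $L_NCQ_{o,N}=Q_{o,N}^{-1}Y_{o,N}CQ_{o,N}$, which is not affine in $(Q_{o,N},Y_{o,N})$ either.
\end{itemize}
The joint inequality with a block-diagonal Lyapunov operator is intrinsically bilinear, so the cross term is always left over, not merely possibly.

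Your fallback fails as well. A uniformly bounded sequence in $\mathcal L(\mathscr X,U)$ need not have any strongly convergent subsequence; take $K_Nx=\langle x,\psi\rangle u_N$ with $(u_N)$ orthonormal. The consistency hypothesis concerns the projected plant operators, not the recovered gains. Hence $\langle\widetilde P_Nx,\mathcal B_u\widetilde K_Ne\rangle$ pairs two merely weakly convergent sequences and need not tend to $\langle\mathcal Px,\mathcal B_u\mathcal Ke\rangle$, just as $\langle e_N,e_N\rangle=1$ for an orthonormal sequence $e_N\rightharpoonup0$.

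\emph{What the argument needs.} You need an additional hypothesis giving strong convergence of one factor in each controller product. For example, $\widetilde K_N\to\mathcal K$ or $\widetilde P_N\to\mathcal P$ in the strong operator topology would do, as would compactness of $\mathcal B_u$. The observer block needs nothing extra. Pass to the limit in the product itself, $\widetilde P_{o,N}\widetilde L_N\rightharpoonup^\star\mathcal G$, and set $\mathcal L:=\mathcal P_o^{-1}\mathcal G$; this works because $\mathcal L$ enters \eqref{eq:joint-ODLMI} only through $\mathcal P_o\mathcal L$. Your weak lower semicontinuity argument for $\|\mathcal C_{J,N}\xi_N+\mathcal D_{zw}w\|^2$ is correct and goes in the right direction.

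\emph{A second, independent gap.} You fix $(\rho,\nu)$ before extracting, so the subsequence and the limits depend on $\rho$. The family $\rho\mapsto\mathcal P(\rho)$ obtained this way need not be differentiable. The weak limit of $\partial_{\rho_i}\widetilde P_N(\rho)$ produced by your further extraction is never identified with $\partial_{\rho_i}\mathcal P(\rho)$. You need one subsequence serving all $\rho$. One way is to take $P_N$ and $P_{o,N}$ affine in $\rho$; the uniform bounds on $\mathbb P$, which has nonempty interior, then control the finitely many coefficients. Extracting limits of those coefficients makes the derivative of the limit equal to the limit of the derivatives. For the same bilinearity reason as above, your closing appeal to vertex feasibility does not carry over to the joint inequality.
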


\begin{proof}[Sketch of proof]
Extend the Galerkin solutions to $\mathscr X$ by
\[
\widetilde P_N(\rho)=\Pi_N^\star P_N(\rho)\Pi_N,
\qquad
\widetilde P_{o,N}(\rho)=\Pi_N^\star P_{o,N}(\rho)\Pi_N,
\]
and extend the recovered gains analogously. Uniform coercivity and boundedness imply relative weak-$\star$ compactness. By Banach--Alaoglu~\ref{thm:banach-alaoglu}, one may extract weak-$\star$ convergent subsequences with limits $\mathcal P(\rho)$, $\mathcal P_o(\rho)$, $\mathcal K(\rho)$, and $\mathcal L(\rho)$.

Testing the finite-dimensional joint LMI with projected vectors and using $\Pi_N\to I$ strongly allows passage to the limit in quadratic form. The limiting inequality is precisely~\eqref{eq:joint-ODLMI}.
\end{proof}

Proposition~\ref{prop:joint-consistency} guarantees that the recovered controller--observer gains computed from finite-dimensional LMIs are consistent with the infinite-dimensional structure of the Vlasov--Maxwell model. This provides a pathway from operator-theoretic synthesis to
numerically implementable feedback laws.

\begin{corollary}\label{cor:joint-galerkin}
If the primal, dual, and joint Galerkin LMIs are feasible uniformly in $N$, then the limiting controller--observer pair satisfies the operator-level $H_\infty$ condition~\eqref{eq:joint-ODLMI}.
\end{corollary}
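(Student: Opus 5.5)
The corollary will follow from Proposition~\ref{prop:joint-consistency}, applied to the joint Galerkin LMIs. The primal and dual Galerkin LMIs will be used only to supply the uniformity hypotheses that the proposition requires.

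\emph{Step 1: uniform bounds from the primal and dual LMIs.}
Uniform feasibility of the primal LMI~\eqref{eq:LMI-matrix} is taken to come with $N$-independent bounds $\underline q I\preceq Q_N(\rho)\preceq\overline q I$. Then $P_N=Q_N^{-1}$ satisfies $\overline q^{-1}I\preceq P_N\preceq\underline q^{-1}I$. The recovered gains $K_N=Y_NQ_N^{-1}$ are uniformly bounded as soon as $Y_N$ is. The dual LMI~\eqref{eq:matrix-dualLMI} gives the same conclusions for $P_{o,N}$ and $L_N=Q_{o,N}^{-1}Y_{o,N}$.

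\emph{Step 2: the joint Lyapunov matrix and the joint LMI.}
The block-diagonal matrix $P_{J,N}=\mathrm{diag}(P_N,P_{o,N})$ is then uniformly coercive and bounded. It satisfies the joint Galerkin LMI by hypothesis. At this point all assumptions of Proposition~\ref{prop:joint-consistency} hold.

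\emph{Step 3: passage to the limit.}
Extend the matrices by $\Pi_N^\star(\cdot)\Pi_N$ and use Banach--Alaoglu~\ref{thm:banach-alaoglu} to extract weak-$\star$ limits $\mathcal P,\mathcal P_o,\mathcal K,\mathcal L$. These limits should be taken along a common subsequence, obtained by diagonal extraction over the four families. Test the joint LMI with $(\Pi_N\xi,w)$ for $\xi\in\mathcal D(\mathcal A)^2$. Proposition~\ref{prop:joint-consistency} then yields~\eqref{eq:joint-ODLMI} in quadratic-form sense. The order bounds pass to the limit because of weak-$\star$ lower semicontinuity. Theorem~\ref{thm:joint-Hinf} then delivers stability and attenuation.

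\emph{Main obstacle.}
The hardest step is the limit in the products $\mathcal P_J\mathcal A_J$, $\mathcal C_J^\star\mathcal C_J$ and $\mathcal P_J\mathcal B_J$. In particular, $\mathcal A_J$ contains the terms $\mathcal B_u\mathcal K$ and $\mathcal L\mathcal C$, which are products of two weakly convergent factors. Weak-$\star$ convergence alone does not let one pass to the limit in such products. The quadratic term $\mathcal C_J^\star\mathcal C_J$ is handled by lower semicontinuity of the norm, which only strengthens the inequality.

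For the bilinear terms I would first try to upgrade to strong convergence of the gains. In the LMI the gains enter through the convex variables $Y_N,Y_{o,N}$, which appear affinely. So I would take weak limits of $(Q_N,Y_N)$ and $(Q_{o,N},Y_{o,N})$, where the convex LMI passes to the limit by linearity. I would then define $\mathcal K=\mathcal Y\mathcal Q^{-1}$ and $\mathcal L=\mathcal Q_o^{-1}\mathcal Y_o$ afterwards. Finally I would recover~\eqref{eq:joint-ODLMI} via the congruence with $\mathrm{diag}(\mathcal Q,\mathcal Q_o)$, using the block-triangular structure of $\mathcal A_J$.

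Any remaining cross term $-\mathcal P\mathcal B_u\mathcal K$ would be absorbed by a Young-type bound into the diagonal blocks. This requires the strict margins $\alpha,\beta$ inherited from the primal and dual LMIs. This is precisely why those LMIs appear in the hypothesis.
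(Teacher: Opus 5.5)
Your Steps 1--3 follow the paper's route exactly. The corollary is Proposition~\ref{prop:joint-consistency} restated, with ``feasible uniformly in $N$'' read as uniform coercivity of $P_N,P_{o,N}$ and uniform boundedness of the recovered gains; the paper gives no argument beyond that proposition's sketch.

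Your diagnosis of where the argument breaks is also correct. Take $X_N=\Pi_NX$ and $e_N=\Pi_Ne$. Pairings such as $\langle \widetilde P_N X_N,\mathcal B_u\widetilde K_N X_N\rangle$, $\langle \widetilde P_N X_N,\mathcal B_u\widetilde K_N e_N\rangle$ and $\langle \widetilde P_{o,N}e_N,\widetilde L_N\mathcal C e_N\rangle$ each have two factors that converge only weakly. So ``passage to the limit in quadratic form'' is unjustified there, in the paper's sketch and in your Step 3 alike. Since you concede this, your proof rests on the repair, and the repair fails on three counts.

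\emph{(1) No congruence makes the joint inequality affine.} With $\mathrm{diag}(\mathcal Q,\mathcal Q_o,I)$:
the cross block $-\mathcal P\mathcal B_u\mathcal K$ becomes $-\mathcal B_u\mathcal Y\mathcal Q^{-1}\mathcal Q_o$;
the performance row becomes $[\,\mathcal C_z\mathcal Q+\mathcal D_{zu}\mathcal Y,\;-\mathcal D_{zu}\mathcal Y\mathcal Q^{-1}\mathcal Q_o\,]$;
the observer block contains $\mathcal L\mathcal C\mathcal Q_o=\mathcal Q_o^{-1}\mathcal Y_o\mathcal C\mathcal Q_o$.
Observer gains linearize in $\mathcal P_o$-coordinates and controller gains in $\mathcal Q$-coordinates, so no single change of variables works. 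Linearity lets the primal and dual convex LMIs pass to the limit, but not the joint one, and the joint one is the inequality you must reach.

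\emph{(2) The Young step cannot deliver \eqref{eq:joint-ODLMI}.} The margins $\alpha,\beta$ come from inequalities that carry no information on $\mathcal B_w,\mathcal C_z,\mathcal D_{zu},\mathcal D_{zw}$ or $\gamma$. Absorbing $2\Re\langle X,\mathcal P\mathcal B_u\mathcal K e\rangle$ needs roughly $\|\mathcal P\mathcal B_u\mathcal K\|^2<2\alpha\beta\,\underline p\,\underline p_o$, which is not assumed. The usual workaround, rescaling $\mathcal P_o\mapsto\mu\mathcal P_o$, changes $\mathcal P_J$ and scales the disturbance coupling $\mathcal P_o\mathcal B_w$ by $\mu$. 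That yields separation-type stability at an unspecified attenuation level, not the joint inequality at the common $\gamma$.

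\emph{(3) You end up with a different pair.} $\mathcal Y\mathcal Q^{-1}$ is in general not the weak-$\star$ limit of $K_N=Y_NQ_N^{-1}$, because products and inverses are not weak-$\star$ continuous. Even a successful argument would concern a pair other than the limiting one in the statement.

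What is actually missing is strong convergence of one factor in each bilinear pairing. Uniform feasibility does not provide this, so it has to be added as a hypothesis. Suppose, for instance, that $\widetilde K_N(\rho)\to\mathcal K(\rho)$ and $\widetilde L_N(\rho)\to\mathcal L(\rho)$ in the strong operator topology (or that $\widetilde P_N,\widetilde P_{o,N}$ do). Then $\mathcal B_u\widetilde K_N X_N\to\mathcal B_u\mathcal K X$ strongly, and every bilinear term becomes a weak--strong pairing that converges. The term $\|\mathcal C_J\xi+\mathcal D_{zw}w\|^2$ is handled by weak lower semicontinuity exactly as you propose.

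Separately, the $\partial_{\rho_i}$ terms need a subsequence common to all $\rho\in\mathbb P$, not just to the four operator families. Pointwise-in-$\rho$ extractions give no differentiable limit. This requires $N$-uniform bounds on $\partial_{\rho_i}P_N$ and an Arzel\`a--Ascoli-type argument in the weak operator topology.
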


Joint controller--observer synthesis extends the separation principle to parameter-dependent VM models through energy-consistent operator inequalities and their Galerkin realizations.

\section{Numerical results} \label{sec:numerical}

Numerical simulations assess the performance of the proposed parameter-dependent observer for a reduced one-dimensional Vlasov--Maxwell benchmark. All computations are performed in \texttt{Python~3.12}, combining sparse linear algebra routines from \texttt{SciPy} with semidefinite programs solved via \texttt{CVXPY}. Among the
available conic solvers (\texttt{CVXOPT}, \texttt{CLARABEL}, \texttt{SCS}), \texttt{SCS} was used as the computational backend and returned \emph{optimal\_inaccurate} solutions in all experiments.

Before presenting the numerical results, we briefly comment on the dimensional setting adopted in this section.

The numerical example is deliberately restricted to a $(1+1)$-dimensional setting. This choice is not a modeling limitation, but is dictated by the intrinsic computational complexity of the LMI-based synthesis procedure.

After Galerkin discretization, the dimension of the state scales as
\(
n = N_x N_v + 2N_x,
\)
so that the associated semidefinite program involves matrix variables of size
$n \times n$. Interior-point methods for semidefinite programming exhibit a computational complexity that grows polynomially with high degree in $n$ (typically of order $\mathcal{O}(n^6)$), which renders large-scale instances
rapidly intractable.

While the $(1+1)$ benchmark leads to tractable problem sizes, extensions to $(2+2)$ or $(3+3)$ phase-space discretizations would result in state dimensions several orders of magnitude larger, exceeding the capabilities of standard semidefinite solvers.

Accordingly, the reduced-dimensional setting should be interpreted as a computationally controlled setting that enables validation of the proposed parameter-dependent observer synthesis method, independently of large-scale
numerical considerations. In this sense, the numerical section is intended as a proof-of-concept validation of the synthesis method rather than as a high-fidelity plasma simulation.

This limitation is not specific to the present work, but reflects a well-known difficulty in the application of LMI-based methods to high-dimensional systems, particularly those arising from kinetic equations. The development of scalable approaches, for instance based on structure exploitation or low-rank approximations, constitutes an important direction for future research.

\subsection{Discretization and setup}

A reduced one-dimensional spatial setting with a single velocity variable is considered. The velocity dependence is approximated by four Hermite modes. The
spatial domain is $\Omega=(0,2\pi)$ with periodic boundary conditions, while the kinetic dependence is expanded on an orthonormal Hermite basis in velocity.
This benchmark discretization is intended to illustrate the proposed parameter-dependent observer synthesis procedure while preserving numerical tractability, rather than to reproduce the full three-dimensional plasma dynamics.

The truncation \( (N_x,N_v)=(6,4) \) yields a reduced-order model of dimension
\(
n=N_xN_v+2N_x=48.
\)
This moderate size is sufficient to illustrate the proposed parameter-dependent observer synthesis while keeping the associated semidefinite programs numerically tractable. Higher truncation orders lead to a rapid growth of the SDP size and computational cost.

Time integration uses a fourth-order Runge--Kutta scheme with $\Delta t=5\times10^{-3}$ over $[0,20]$.
The scheduling parameter is prescribed as
\(
\rho(t)=1.1+0.25\sin(0.3t), \quad \rho\in[0.8,1.4],
\)
with bounded variation rate $|\dot\rho(t)|\le 0.7$. The choice of $\rho$ is not unique and aims at capturing the dominant macroscopic variations of the
operating regime through measurable quantities. Alternative parameterizations may also be employed, possibly affecting the conservatism of the resulting LMIs
or the numerical observer gains, while the overall
parameter-dependent operator formulation remains unchanged.

The measured output consists of pointwise samples of the electric field $E(x,t)$. Observer synthesis is carried out under two criteria: (i) an $L^2$ design maximizing a uniform exponential decay rate, and (ii) an
$\mathcal{H}_\infty$ design minimizing the disturbance-to-error gain. To prevent ill-conditioning, the SDP enforces $\mathrm{trace}(P_0)=1$, soft upper bounds on $P(\rho)$, and Frobenius-norm constraints on the injection operator.

\subsection{SDP synthesis}

For the $L^2$ design, the solver returns
\(
\beta=1.21\times10^{-5},
\)
while the $\mathcal{H}_\infty$ design yields
\(
\beta=7.82\times10^{-6},\; \gamma=1.57\times10^{-8}.
\)
Despite the \emph{optimal\_inaccurate} status, both observers yield stable, well-conditioned simulations.
For the $\mathcal{H}_\infty$ case, the empirical attenuation level computed a posteriori satisfies
\(
\gamma_{\mathrm{emp}}\approx6.26,
\)
confirming bounded input--output amplification. The discrepancy between the optimized bound and the empirical value reflects solver inaccuracy together with numerical conservatism of the finite-dimensional SDP approximation.

\subsection{State reconstruction}

Figure~\ref{fig:E_spacetime} shows the convergence of the electric field estimate, while the kinetic error and phase-space reconstruction confirm convergence of the full observer state.

Phase-space snapshots of $f$ and $\hat f$ displayed in
Figure~\ref{fig:phase_space} demonstrate accurate recovery of both spatial modulation and velocity structure.

\begin{figure}[!ht]
  \centering
  \includegraphics[width=\linewidth]{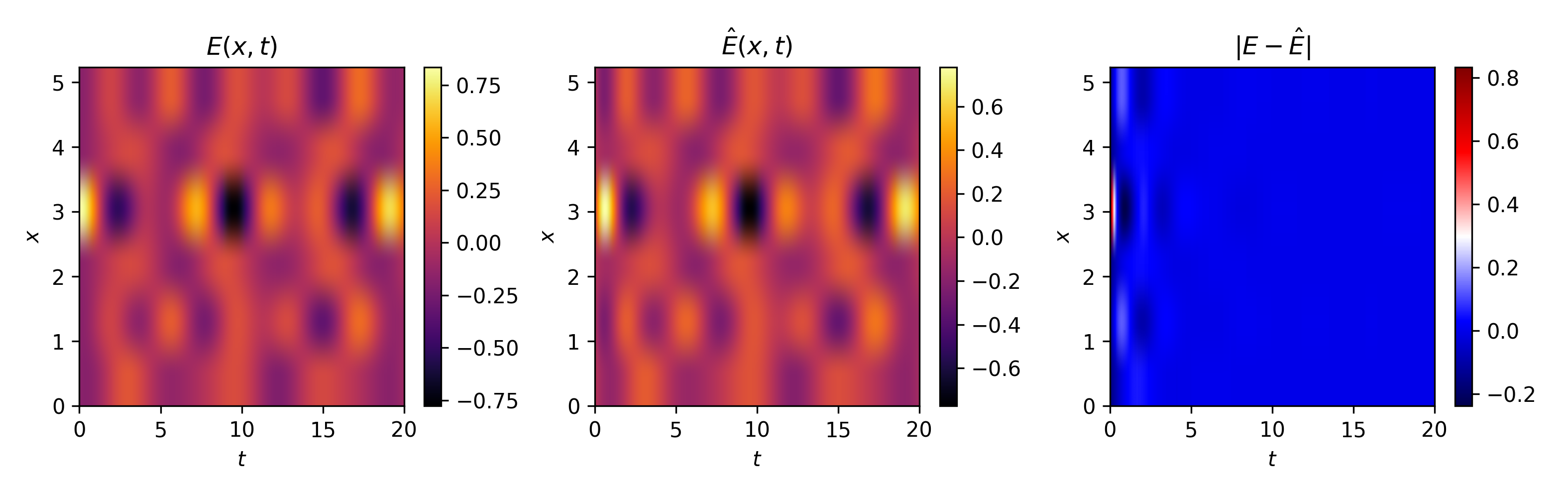}
  \caption{Space--time evolution of $E$, $\hat E$, and $|E-\hat E|$.}
  \label{fig:E_spacetime}
\end{figure}

\begin{figure}[!ht]
  \centering
  \includegraphics[width=\linewidth]{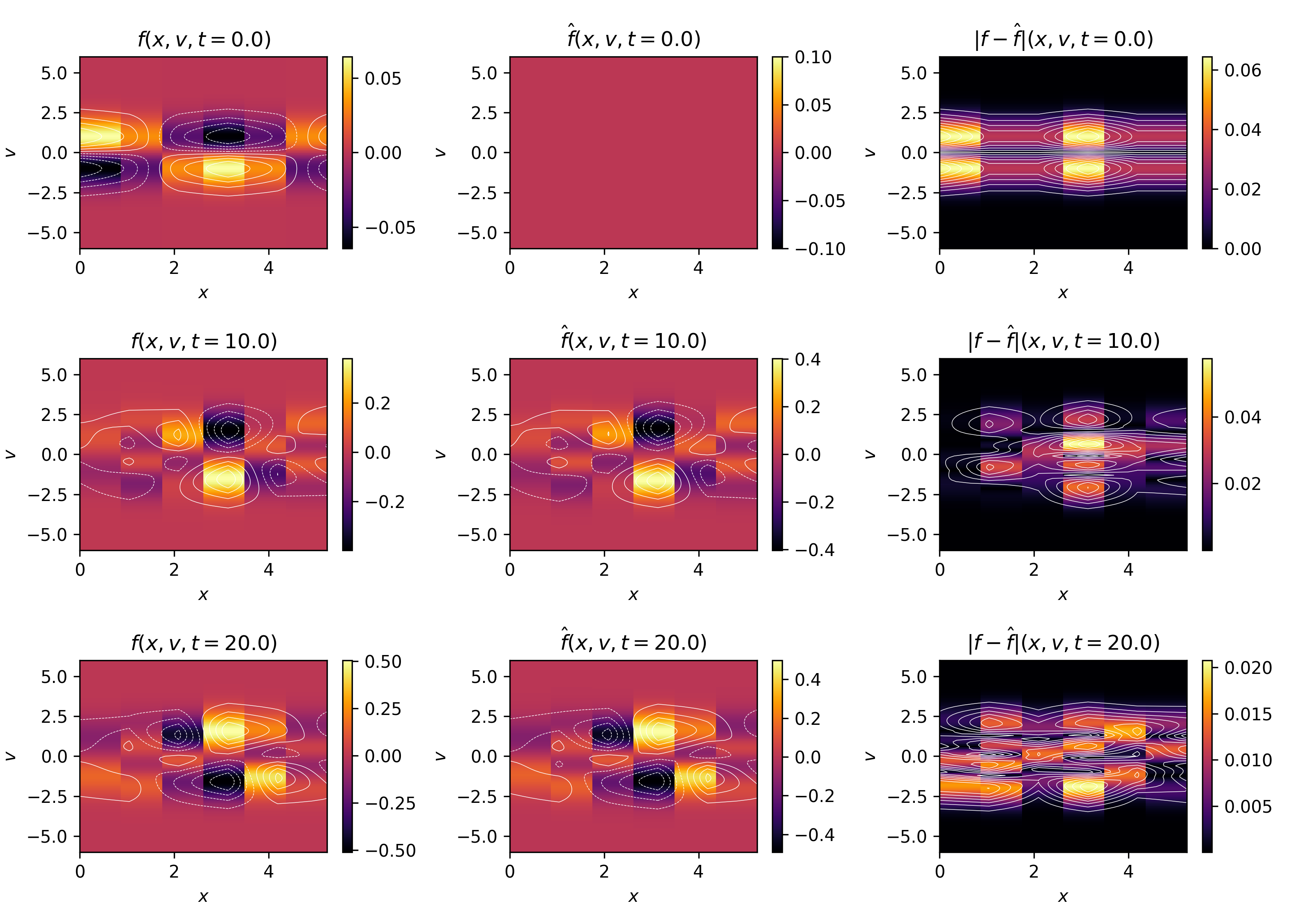}
  \caption{Phase-space snapshots of $f$ and $\hat f$ ($\mathcal{H}_\infty$).}
  \label{fig:phase_space}
\end{figure}

\begin{figure}[!ht]
  \centering
  \includegraphics[width=.7\linewidth]{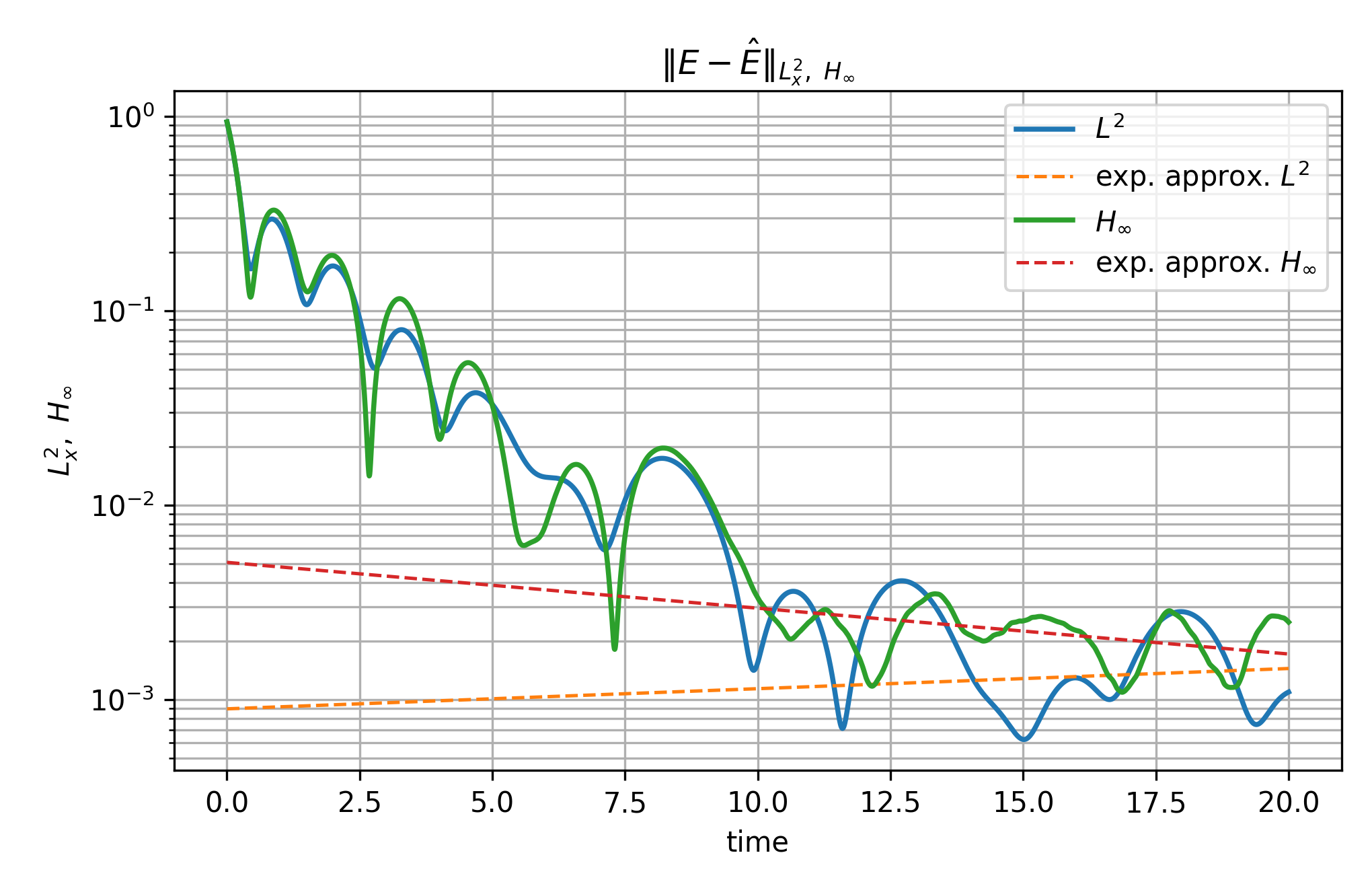}
  \caption{Time evolution of $\|E-\hat E\|_{L^2_x}$ (semilogarithmic scale) for
  the $L^2$ and $\mathcal{H}_\infty$ designs.}
  \label{fig:err_curves_E}
\end{figure}

\begin{figure}[!ht]
  \centering
  \includegraphics[width=.7\linewidth]{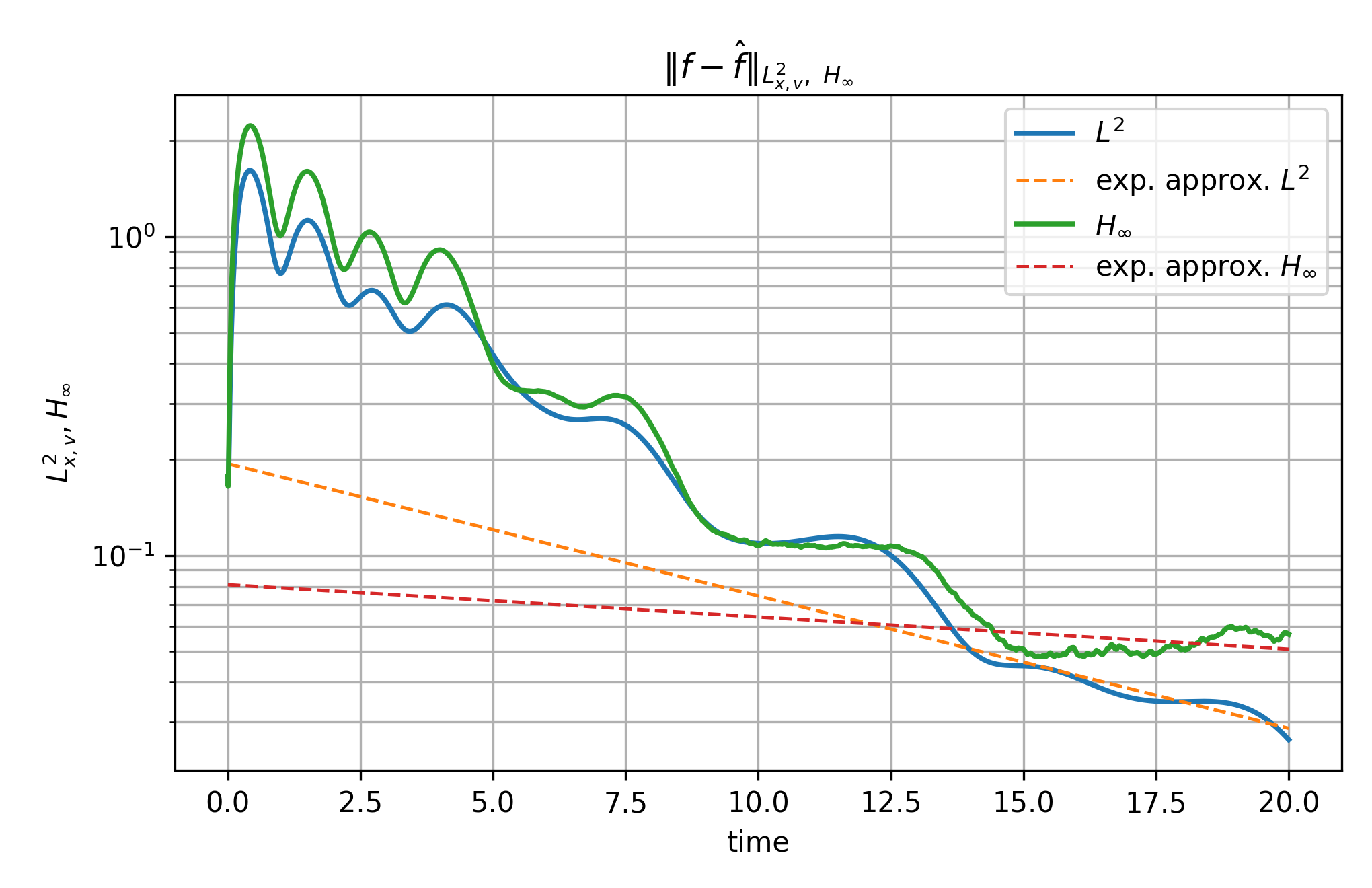}
  \caption{Time evolution of $\|f-\hat f\|_{L^2_{x,v}}$ computed from Hermite
  coefficients (semilogarithmic scale) for the $L^2$ and $\mathcal{H}_\infty$
  designs.}
  \label{fig:err_curves_f}
\end{figure}

\subsection{Error analysis}

The estimation errors are measured in the norms
\(
\|E-\hat E\|_{L^2_x},\;\,  \|f-\hat f\|_{L^2_{x,v}},
\)
where the kinetic norm is evaluated consistently with the Hermite expansion,
\[
\|f-\hat f\|_{L^2_{x,v}}^2
\approx
\Delta x\sum_x\sum_{n=0}^{N_v-1}|a_n(x)-\hat a_n(x)|^2.
\]

This representation enables a direct comparison of the spectral coefficients $a_n(x)$ and their estimates $\hat a_n(x)$. Both norms are computed at each time step to quantify convergence and transient behavior.

Figure~\ref{fig:err_curves_E} reports the time evolution of
$\|E-\hat E\|_{L^2_x}$ for both designs, while Figure~\ref{fig:err_curves_f} displays the corresponding kinetic error $\|f-\hat f\|_{L^2_{x,v}}$.

At $t=20$, the observed magnitudes are
\[
\|E-\hat E\|_{L^2_x}\approx10^{-3},\qquad
\|f-\hat f\|_{L^2_{x,v}}\approx3\times10^{-2}.
\]

Exponential rates estimated from the asymptotic regime are
\[
\begin{aligned}
&\text{$L^2$ design:} && \lambda_E=-2.39\times10^{-2},\quad \lambda_f=-9.56\times10^{-2},\\
&\mathcal{H}_\infty\text{ design:} && \lambda_E=-5.43\times10^{-2},\quad \lambda_f=-2.33\times10^{-2},
\end{aligned}
\]
confirming exponential convergence with distinct robustness--performance trade-offs.

\begin{remark}
The magnetic field $B$ is not displayed separately. In the present reduced setting, $E$ and $B$ are dynamically coupled through Maxwell's equations, so that a separate reconstruction of $B$ would provide largely redundant
information and would not bring additional insight for the validation of the observer performance.
\end{remark}

\section{Conclusion}
\label{sec:conclusion}

A parameter-dependent operator formulation was developed for control and observer analysis of the Vlasov--Maxwell system.
The linearized kinetic--electromagnetic dynamics were modeled as a non-autonomous evolution system with well-posedness and uniform growth properties.
Parameter-dependent Lyapunov operators yielded operator differential LMIs ensuring exponential stability, observer convergence, and $H_\infty$ disturbance attenuation.
Galerkin projections provided finite-dimensional LMIs consistent with the underlying operator inequalities.
Numerical results on a reduced Vlasov--Maxwell benchmark illustrated the effectiveness of the proposed synthesis methodology and confirmed the predicted convergence properties.

\appendix

\section{Auxiliary Results Used in the Proofs}
\label{app:auxiliary}

This appendix recalls the functional-analytic results explicitly used in the well-posedness and Galerkin-consistency arguments.

\subsection{Kato's theorem for non-autonomous evolution equations}

\begin{theorem}[Kato~\cite{kato1953integration}]
\label{thm:Kato}
Let $\{A(t)\}_{t\ge0}$ be a family of closed, densely defined operators on a Hilbert space $\mathscr{X}$ with common domain $\mathcal D$, satisfying:
\begin{enumerate}
\item[(i)] each $A(t)$ generates a $\mathcal C_0$-semigroup;
\item[(ii)] there exist $M\ge1$ and $\omega\in\mathbb R$ such that
\(
\|(A(t)+\lambda I)^{-1}\|
\le
M(\lambda-\omega)^{-1},
\quad \lambda>\omega,
\)
uniformly in $t$;
\item[(iii)] $t\mapsto A(t)x$ is continuous for every $x\in\mathcal D$.
\end{enumerate}
Then the non-autonomous Cauchy problem
\(
\dot X(t)=A(t)X(t), \; X(s)=X_s,
\)
 admits a unique evolution family
$\{\mathcal S(t,s)\}_{t\ge s\ge0}$ satisfying
\(
\mathcal S(t,s)\mathcal S(s,r)=\mathcal S(t,r),
\;
\mathcal S(s,s)=I,
\;
\|\mathcal S(t,s)\|\le Me^{\omega(t-s)} .
\)
Moreover, $X(t)=\mathcal S(t,s)X_s$ is the unique mild solution for every
$X_s\in\mathscr X$.
\end{theorem}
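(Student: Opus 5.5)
The statement is Kato's classical theorem, recalled in the appendix. I would prove it by the standard piecewise-constant approximation scheme, using the resolvent bound (ii) as a stability condition.

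\emph{Step 1: approximate evolution families.} Fix $T>0$. For a partition of $[0,T]$ with mesh $1/n$, freeze the generator on each subinterval: let $A_n(t)=A(t_k)$ for $t\in[t_k,t_{k+1})$. Define $\mathcal S_n(t,s)$ as the ordered product of the semigroups $e^{(t-\tau)A(t_k)}$ over the subintervals between $s$ and $t$. By construction $\mathcal S_n$ satisfies the composition rule and $\mathcal S_n(s,s)=I$.

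\emph{Step 2: uniform bound on the products.} Hypothesis (ii), read as Kato stability of the family, gives
\[
\Big\|\prod_{j} (\lambda-A(t_{k_j}))^{-1}\Big\|\le M(\lambda-\omega)^{-m}
\]
for all finite products of length $m$. By the exponential formula, the corresponding products of semigroups are then bounded by $Me^{\omega(t-s)}$. This yields $\|\mathcal S_n(t,s)\|\le Me^{\omega(t-s)}$ uniformly in $n$.

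I expect this step to be the main obstacle. A bound holding only for each single resolvent would not suffice when $M>1$: one needs the bound for products, that is, Kato stability, as the paper assumes in Assumption~\ref{ass:A}(iii). I would interpret (ii) in that sense. Alternatively, I would renorm so that each $A(t)-\omega$ is dissipative, in which case $M=1$ and the products are trivially bounded.

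\emph{Step 3: convergence on the domain.} The difficulty is that $\mathcal S_n(t,s)x$ need not stay in $\mathcal D$ with uniform bounds, so a Duhamel estimate on $\mathcal D$ is not directly available. The standard remedy is to use the common-domain hypothesis. Each $A(t)$ is closed on $\mathcal D$, so the graph norms $\|x\|+\|A(t)x\|$ are mutually equivalent (closed graph theorem). Continuity (iii) then makes $t\mapsto A(t)(\lambda-A(s))^{-1}$ strongly continuous. In the hyperbolic setting I would additionally invoke an invariant space $Y\hookrightarrow\mathcal D$ on which $A(\cdot)$ is also stable, for instance $Y=\mathcal D$ with a graph norm made stable by the isomorphisms $\lambda-A(t)$.

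With this in hand, for $x\in Y$ the Duhamel identity
\[
\mathcal S_n(t,s)x-\mathcal S_m(t,s)x=\int_s^t \mathcal S_n(t,\tau)\big(A_n(\tau)-A_m(\tau)\big)\mathcal S_m(\tau,s)x\,d\tau
\]
together with Step 2 and (iii) shows that $(\mathcal S_n(t,s)x)_n$ is Cauchy uniformly on compact sets.

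\emph{Step 4: passage to the limit and uniqueness.} Density of $Y$ and the uniform bound of Step 2 extend the convergence to every $x\in\mathscr X$. The limit $\mathcal S(t,s)$ inherits three properties from the approximations: strong continuity, the composition rule with $\mathcal S(s,s)=I$, and the bound $Me^{\omega(t-s)}$.

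For uniqueness, let $X(\cdot)$ be any classical solution and differentiate $\tau\mapsto\mathcal S(t,\tau)X(\tau)$. The derivative vanishes, because $\partial_\tau\mathcal S(t,\tau)y=-\mathcal S(t,\tau)A(\tau)y$ on $Y$. Hence $X(t)=\mathcal S(t,s)X_s$. Mild solutions for general data $X_s\in\mathscr X$ are then defined as limits of classical ones, which gives uniqueness in the mild sense as well.
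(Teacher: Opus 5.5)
\textbf{Gap: Step 3 (and hence Step 4) does not follow from (i)--(iii).} The paper does not prove this result. It is quoted from Kato (1953), so the benchmark is the classical argument. You reproduce its architecture: time-frozen products, stability, and a Duhamel estimate on an invariant stable space.

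Your reading of (ii) as Kato stability is necessary, not just convenient. You also silently and correctly replaced $(A(t)+\lambda I)^{-1}$, which is Kato's convention for $\dot x+A(t)x=0$, by $(\lambda-A(t))^{-1}$. With only a single-resolvent bound, the statement is false even for constant families. Take $A(t)\equiv N$ on $\mathbb C^2$, with $N$ the nilpotent $2\times2$ Jordan block, and $\omega=1/10$. Then $\sup_{\lambda>\omega}(\lambda-\omega)\|(\lambda\mp N)^{-1}\|\approx 2.6$, so (ii) holds with $M=3$ in either sign convention. Yet $\|\mathcal S(10,0)\|=\|e^{10N}\|=(10+\sqrt{104})/2\approx 10.1>3e$. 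Once Kato stability is assumed, Step 2 is routine. The renorming to $M=1$ you offer as an alternative is an extra assumption, since it would also control products taken out of time order. The real obstacle is Step 3.

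Step 3 in effect imports hypotheses that (i)--(iii) do not supply. Take $Y=\mathcal D$ with the graph norms $\|(\lambda-A(t_k))y\|$. Each change of frozen time then costs the factor
$\|(\lambda-A(t_{k+1}))(\lambda-A(t_k))^{-1}\|\le 1+C\|A(t_{k+1})-A(t_k)\|_{\mathcal L(\mathcal D,\mathscr X)}$. The resulting bound on $\|\mathcal S_n(t,s)x\|_{\mathcal D}$ contains the product $\prod_k\bigl(1+C\|A(t_{k+1})-A(t_k)\|_{\mathcal L(\mathcal D,\mathscr X)}\bigr)$. This product is bounded uniformly over partitions only if $t\mapsto A(t)$ has bounded variation in $\mathcal L(\mathcal D,\mathscr X)$.

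Even with uniform $\mathcal D$-bounds, the Duhamel integrand $(A_n(\tau)-A_m(\tau))\mathcal S_m(\tau,s)x$ tends to zero only if $\sup_\tau\|A_n(\tau)-A_m(\tau)\|_{\mathcal L(\mathcal D,\mathscr X)}\to0$. The vector being acted on changes with $m$, and bounded sets of $\mathcal D$ are not precompact. Strong continuity (iii) gives neither bounded variation nor this uniform convergence. These are exactly the hypotheses Kato's theorem carries and this statement omits: a bounded-variation and differentiability condition on $(I+A(t))(I+A(s))^{-1}$. In textbook form the assumption is that $t\mapsto A(t)v$ is continuously differentiable for $v\in\mathcal D$. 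By uniform boundedness this makes $A(\cdot)$ Lipschitz into $\mathcal L(\mathcal D,\mathscr X)$, which supplies both properties.

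Step 4 inherits the problem in two ways. First, the backward derivative $\partial_\tau\mathcal S(t,\tau)y=-\mathcal S(t,\tau)A(\tau)y$ is asserted rather than derived. Second, defining mild solutions as limits of classical ones presupposes classical solutions for a dense set of data. That requires $\mathcal S(t,s)\mathcal D\subset\mathcal D$ and $\partial_t\mathcal S(t,s)y=A(t)\mathcal S(t,s)y$, which you never establish.

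To repair the proposal, state the strengthened hypotheses explicitly: Kato stability of time-ordered products, plus $C^1$ (or bounded-variation and norm-continuous) dependence of $A(\cdot)$ in $\mathcal L(\mathcal D,\mathscr X)$. With these, your Steps 2--4 go through along the standard lines.
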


\subsection{Stone's theorem for unitary groups}

\begin{theorem}[Stone~\cite{Stone1932,EngelNagel2000}]
\label{thm:stone}
Let $(U(t))_{t\in\mathbb{R}}$ be a strongly continuous one-parameter unitary
group on a Hilbert space $\mathscr{X}$. Then there exists a unique self-adjoint operator $A:\mathcal{D}(A)\subset\mathscr{X}\to\mathscr{X}$ such that
\(
U(t)=e^{\,itA},
\quad t\in\mathbb{R}.
\)
Conversely, if $A$ is self-adjoint, then
\(
U(t):=e^{\,itA},
\quad t\in\mathbb{R},
\)
defines a strongly continuous unitary group on $\mathscr{X}$.
\end{theorem}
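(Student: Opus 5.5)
We prove the two directions separately and then uniqueness. Throughout, we use the convention $U(t)=e^{itA}$, so the semigroup generator is $G:=iA$.

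\textbf{Converse direction.} Let $A$ be self-adjoint. By the spectral theorem, $A=\int_{\mathbb R}\lambda\,dE(\lambda)$ for a projection-valued measure $E$, and we define $U(t):=\int_{\mathbb R} e^{it\lambda}\,dE(\lambda)$ by the bounded Borel functional calculus. The function $\lambda\mapsto e^{it\lambda}$ is unimodular, and the calculus is multiplicative and $*$-preserving. Hence $U(t)^\star U(t)=U(t)U(t)^\star=I$, $U(t+s)=U(t)U(s)$ and $U(0)=I$. For strong continuity we write
\[
\|U(t)x-x\|^2=\int_{\mathbb R}|e^{it\lambda}-1|^2\,d\langle E(\lambda)x,x\rangle .
\]
The integrand is bounded by $4$ and tends to $0$ pointwise as $t\to0$, so dominated convergence gives $\|U(t)x-x\|\to0$.

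\textbf{Direct direction.} Let $(U(t))_{t\in\mathbb R}$ be a strongly continuous unitary group. Its restriction to $t\ge0$ is a $\mathcal C_0$-semigroup of contractions. Let $G$ be its generator, defined on
\[
\mathcal D(G)=\{x:\ \lim_{t\to0}t^{-1}(U(t)x-x)\text{ exists}\}.
\]
The standard semigroup argument gives density of $\mathcal D(G)$: for every $x$, the averages $x_h=h^{-1}\int_0^h U(s)x\,ds$ lie in $\mathcal D(G)$ and converge to $x$. Next, $G$ is skew-symmetric. Indeed, since $U(-t)=U(t)^\star$, for $x,y\in\mathcal D(G)$ we have
\[
\langle t^{-1}(U(t)x-x),y\rangle=\langle x,t^{-1}(U(-t)y-y)\rangle ,
\]
and letting $t\to0$ yields $\langle Gx,y\rangle=-\langle x,Gy\rangle$, that is, $G\subset -G^\star$. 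Similarly, the generator of the semigroup $(U(-t))_{t\ge0}$ is $-G$.

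\textbf{Main obstacle: maximality, $G^\star\subset -G$.} We plan to obtain this from the resolvent. For $\lambda>0$, the Laplace transforms
\[
R_\pm(\lambda)=\int_0^\infty e^{-\lambda t}U(\pm t)\,dt
\]
are bounded inverses of $\lambda\mp G$. In particular, $\lambda-G$ and $\lambda+G$ are both surjective onto $\mathscr X$. Now take $y\in\mathcal D(G^\star)$. By surjectivity of $\lambda+G$, choose $z\in\mathcal D(G)$ with $(\lambda+G)z=(\lambda-G^\star)y$. Since $-G^\star\supset G$, we get $(\lambda-G^\star)(y-z)=0$. Then for every $w\in\mathscr X$, writing $w=(\lambda-G)u$ with $u\in\mathcal D(G)$,
\[
\langle y-z,w\rangle=\langle y-z,(\lambda-G)u\rangle=\langle(\lambda-G^\star)(y-z),u\rangle=0 .
\]
Hence $y=z\in\mathcal D(G)$, which shows $G^\star=-G$. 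Setting $A:=-iG$, we obtain $A^\star=A$, and $U(t)=e^{tG}=e^{itA}$.

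\textbf{Uniqueness.} If $U(t)=e^{itA_1}=e^{itA_2}$ for self-adjoint $A_1,A_2$, then $iA_1$ and $iA_2$ are both the generator of the same $\mathcal C_0$-semigroup. A generator is determined uniquely by its semigroup, so $A_1=A_2$.
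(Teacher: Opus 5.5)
The paper does not prove this statement; it only recalls it in the appendix, citing Stone and Engel--Nagel. There it is formulated for generators: an operator generates a unitary group iff it is skew-adjoint. Your individual steps are sound. The spectral-calculus construction with dominated convergence gives a strongly continuous unitary group. In the direct part, the skew-symmetry computation and the range argument (surjectivity of $\lambda\pm G$ for $\lambda>0$ via the Laplace transforms) correctly yield $G^\star=-G$, so $A=-iG$ is self-adjoint. That range argument is a good elementary substitute for the usual adjoint-semigroup reasoning.

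The gap is that you use $e^{itA}$ in two senses and never connect them. In the converse you define $e^{itA}:=\int_{\mathbb R}e^{it\lambda}\,dE(\lambda)$. In the direct part you write $U(t)=e^{tG}=e^{itA}$, where $e^{tG}$ is the semigroup generated by $G=iA$. In the uniqueness step you assert that $iA_j$ is the generator of $e^{itA_j}$.

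If $e^{itA}$ means the spectral group, the direct-part identity and the uniqueness step need the fact that the generator of $t\mapsto\int e^{it\lambda}\,dE(\lambda)$ is exactly $iA$, domain included. If $e^{itA}$ means the group generated by $iA$, your converse only produces \emph{some} unitary group and never shows that $iA$ generates it. Either way the same lemma is missing.

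The lemma is short. For $x\in\mathcal D(A)$,
\[
\Big\|\tfrac1t\big(U(t)x-x\big)-iAx\Big\|^2=\int_{\mathbb R}\Big|\tfrac{e^{it\lambda}-1}{t}-i\lambda\Big|^2\,d\langle E(\lambda)x,x\rangle\longrightarrow0
\]
by dominated convergence: the integrand is at most $4\lambda^2$, which is integrable because $x\in\mathcal D(A)$. Hence $iA\subset G_U$, the generator of $U$. Your direct-part computation applies to any unitary group and gives $G_U\subset-G_U^\star$. Therefore $iA\subset G_U\subset-G_U^\star\subset-(iA)^\star=iA$, i.e.\ $G_U=iA$.

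With this lemma, $U(t)=e^{itA}$ in the direct part follows from uniqueness of the $C_0$-semigroup with a given generator. This gives the identity for $t\ge0$, and for $t<0$ via $U(t)=U(-t)^{-1}$. Your uniqueness argument then becomes valid as written.
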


\subsection{Banach--Alaoglu compactness theorem}

\begin{theorem}[Banach--Alaoglu~\cite{banach1932theorie,alaoglu1940weak}]
\label{thm:banach-alaoglu}
Let $X$ be a normed vector space. Every bounded subset of the dual space $X^{\star}$ is relatively compact in the weak-$\star$ topology. Equivalently, the closed unit ball
\(
B_{X^{\star}}
=
\{\ell\in X^{\star}:\|\ell\|\le 1\}
\)
is weak-$\star$ compact.
\end{theorem}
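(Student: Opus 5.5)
The statement to prove is the Banach--Alaoglu theorem (Theorem~\ref{thm:banach-alaoglu}). I would follow the classical Tychonoff embedding argument.

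\textbf{Embedding.} For each $x\in X$ set $D_x=\{\lambda\in\mathbb{K}:|\lambda|\le\|x\|\}$, a compact disc (or interval if $\mathbb{K}=\mathbb{R}$). Form the product
\[
\mathcal{D}=\prod_{x\in X}D_x,
\]
equipped with the product topology. By Tychonoff's theorem, $\mathcal{D}$ is compact.

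Define $\Phi:B_{X^\star}\to\mathcal{D}$ by $\Phi(\ell)=(\ell(x))_{x\in X}$. This map is well defined because $|\ell(x)|\le\|\ell\|\|x\|\le\|x\|$. It is injective, since a functional is determined by its values. Moreover, $\Phi$ is a homeomorphism onto its image when $B_{X^\star}$ carries the weak-$\star$ topology. The reason is that both topologies are the coarsest ones making all evaluations $\ell\mapsto\ell(x)$ continuous.

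\textbf{Closedness of the image.} It suffices to show that $\Phi(B_{X^\star})$ is closed in $\mathcal{D}$. The image consists exactly of those points $\lambda=(\lambda_x)_{x\in X}$ satisfying
\[
\lambda_{\alpha x+\beta y}-\alpha\lambda_x-\beta\lambda_y=0
\quad\text{for all } x,y\in X,\ \alpha,\beta\in\mathbb{K}.
\]
Any such linear $\lambda$ satisfies $|\lambda_x|\le\|x\|$ by construction of $\mathcal{D}$. It therefore defines a bounded functional of norm at most $1$.

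Each of these conditions is the zero set of a continuous function of finitely many coordinates. Hence the image is an intersection of closed sets, and so is closed. A closed subset of a compact space is compact, so $B_{X^\star}$ is weak-$\star$ compact.

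\textbf{Relative compactness.} Now let $S\subset X^\star$ be bounded, say $S\subset rB_{X^\star}$. Scalar multiplication by $r$ is a weak-$\star$ homeomorphism, so $rB_{X^\star}$ is weak-$\star$ compact. It is also weak-$\star$ closed, since the weak-$\star$ topology is Hausdorff. Therefore the weak-$\star$ closure of $S$ is a closed subset of a compact set, hence compact.

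This also gives the equivalence of the two formulations. One direction is the special case of the unit ball, and the other is the scaling argument just given.

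\textbf{Main obstacle.} The only delicate point is verifying that $\Phi$ is a homeomorphism onto its image. This amounts to identifying subbasic open sets $\{\ell:\ell(x)\in U\}$ on both sides. I would check this directly from the definitions of the weak-$\star$ topology and the product topology.

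I would also note that this argument yields compactness, not sequential compactness. The later uses in the paper extract subsequences of $P_N$ and $K_N$. Justifying that requires separability of $\mathscr X$, which makes bounded sets weak-$\star$ metrizable, or a Hilbert-space weak-operator argument. I would record this as a remark rather than as part of the theorem.
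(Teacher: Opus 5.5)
Your proof is correct and complete. It is the standard Tychonoff-embedding argument (Alaoglu's), and the key steps all hold:
\begin{itemize}
\item Since $\pi_x\circ\Phi$ is evaluation at $x$, and both sides carry initial topologies, $\Phi$ is a homeomorphism onto its image.
\item The image is exactly the set of linear points of $\mathcal D$. This is an intersection of zero sets of continuous functions of three coordinates, hence closed.
\item The scaling and Hausdorff step correctly gives relative compactness of bounded sets.
\end{itemize}

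The paper does not prove this statement. It is recalled in the appendix as a classical result with citations to Banach and Alaoglu, so the only meaningful comparison is historical. Banach's 1932 result is the separable, \emph{sequential} version: bounded sequences in $X^\star$ have weak-$\star$ convergent subsequences, obtained by a diagonal argument over a countable dense set. Your net-compactness route is Alaoglu's general one.

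There is one small omission in your ``equivalence'' remark. Deducing compactness of $B_{X^\star}$ from relative compactness of bounded sets also requires $B_{X^\star}$ to be weak-$\star$ closed. This follows at once from $B_{X^\star}=\bigcap_{\|x\|\le 1}\{\ell:|\ell(x)|\le 1\}$.

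Your closing remark is apt. The paper's consistency proofs extract \emph{subsequences} of $P_N$, $K_N$, $L_N$, which is Banach's sequential form rather than the statement proved here. To justify it, one views $\mathcal L(\mathscr X)$, $\mathcal L(\mathscr X,U)$ and $\mathcal L(Y,\mathscr X)$ as duals of the corresponding trace-class spaces. These preduals are separable because the underlying Hilbert spaces are, so bounded sets are weak-$\star$ metrizable and subsequences can be extracted.
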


\section*{conflict of interest}
No potential conflict of interest was reported by the author(s).

\section*{Data availability statement}
The authors declare that this work is self-contained and there is no additional data is used.
\section*{ORCID}
Amadou Cissé https://orcid.org/0000-0002-8014-4180


\begin{thebibliography}{10}
\providecommand \doibase [0]{http://dx.doi.org/}%

\bibitem{GlassHanKwan2012}
Glass O, Han-Kwan D. On the controllability of the relativistic Vlasov--Maxwell
  system. {\it Journal de Math{\'e}matiques Pures et Appliqu{\'e}es.}
  2015\string;103(3)\string:695--740.

\bibitem{Weber2021}
Weber J. Optimal control of a two-dimensional Vlasov--Maxwell system. {\it
  ESAIM: Control, Optimisation and Calculus of Variations.}
  2021\string;S19\string:35.

\bibitem{cisse2020observers}
Cisse A, Boutayeb M. Observers of vlasov-poisson system. {\it
  IFAC-PapersOnLine.} 2020\string;53(2)\string:5946--5951.

\bibitem{cisse2024software}
Ciss{\'e} A, Boutayeb M. Software sensors design for a class of non linear
  coupled PDE systems: The Vlasov--Poisson dynamical system. {\it Mathematics
  and Computers in Simulation.} 2024\string;219\string:284--296.

\bibitem{cisse2024state}
Ciss{\'e} A, Boutayeb M. On state feedback control of a class of NonLinear PDE
  systems in finite dimension. {\it Communications in Nonlinear Science and
  Numerical Simulation.} 2024\string;130\string:107751.

\bibitem{knopf2020optimal}
Knopf P, Weber J. Optimal control of a Vlasov--Poisson plasma by fixed magnetic
  field coils. {\it Applied Mathematics \& Optimization.}
  2020\string;81(3)\string:961--988.

\bibitem{bartsch2024controlling}
Bartsch J, Knopf P, Scheurer S, Weber J. Controlling a Vlasov--Poisson Plasma
  by a Particle-in-Cell Method Based on a Monte Carlo Framework. {\it SIAM
  Journal on Control and Optimization.} 2024\string;62(4)\string:1977--2011.

\bibitem{chesi2005polynomially}
Chesi G, Garulli A, Tesi A, Vicino A. Polynomially parameter-dependent Lyapunov
  functions for robust stability of polytopic systems: an LMI approach. {\it
  IEEE transactions on Automatic Control.} 2005\string;50(3)\string:365--370.

\bibitem{a2019computational}
A.~Mozelli L, S.~Adriano RL. On computational issues for stability analysis of
  LPV systems using parameter-dependent Lyapunov functions and LMIs. {\it
  International Journal of Robust and Nonlinear Control.}
  2019\string;29(10)\string:3267--3277.

\bibitem{jetto2010efficient}
Jetto L, Orsini V. Efficient LMI-based quadratic stabilization of interval LPV
  systems with noisy parameter measures. {\it IEEE Transactions on Automatic
  Control.} 2010\string;55(4)\string:993--998.

\bibitem{wu2001lpv}
Wu F, Grigoriadis KM. LPV systems with parameter-varying time delays: analysis
  and control. {\it Automatica.} 2001\string;37(2)\string:221--229.

\bibitem{apkarian2000parameterized}
Apkarian P, Tuan HD. Parameterized LMIs in control theory. {\it SIAM journal on
  control and optimization.} 2000\string;38(4)\string:1241--1264.

\bibitem{coron2007strict}
Coron JM, Novel dB, Bastin G. A strict Lyapunov function for boundary control
  of hyperbolic systems of conservation laws. {\it IEEE Transactions on
  Automatic control.} 2007\string;52(1)\string:2--11.

\bibitem{mironchenko2017characterizations}
Mironchenko A, Wirth F. Characterizations of input-to-state stability for
  infinite-dimensional systems. {\it IEEE Transactions on Automatic Control.}
  2017\string;63(6)\string:1692--1707.

\bibitem{pezzi2018velocity}
Pezzi O, Servidio S, Perrone D, et al. Velocity-space cascade in magnetized
  plasmas: Numerical simulations. {\it Physics of Plasmas.} 2018\string;25(6).

\bibitem{kamaletdinov2024nonlinear}
Kamaletdinov SR, Vasko IY, Artemyev AV. Nonlinear electron scattering by
  electrostatic waves in collisionless shocks. {\it Journal of Plasma Physics.}
  2024\string;90(2)\string:905900201.

\bibitem{ghizzo2024collisionless}
Ghizzo A, Del~Sarto D, Betar H. Collisionless heating in Vlasov plasma and
  turbulence-driven filamentation aspects. {\it Physics of Plasmas.}
  2024\string;31(7).

\bibitem{brizard2007foundations}
Brizard AJ, Hahm TS. Foundations of Nonlinear Gyrokinetic Theory. {\it Reviews
  of Modern Physics.} 2007\string;79(2)\string:421--468.

\bibitem{arro2022spectral}
Arr{\`o} G, Califano F, Lapenta G. Spectral properties and energy transfer at
  kinetic scales in collisionless plasma turbulence. {\it Astronomy \&
  Astrophysics.} 2022\string;668\string:A33.

\bibitem{svidzinski2024full}
Svidzinski V, Zhao L, Kim J, Barov N. Full wave modeling of radio-frequency
  beams in tokamaks in the electron cyclotron frequency range. {\it Physics of
  Plasmas.} 2024\string;31(4).

\bibitem{zaar2025enhanced}
Zaar B, Johnson T, B{\"a}hner L, Bilato R, Ragona R, Vallejos P. Enhanced ion
  heating using a TWA antenna in DEMO-like plasmas. {\it Journal of Plasma
  Physics.} 2025\string;91(1)\string:E13.

\bibitem{kato1953integration}
Kato T. Integration of the equation of evolution in a Banach space. {\it J.
  Math. Soc. Japan.} 1953\string;5\string:208--234.

\bibitem{Stone1932}
Stone MH. On one-parameter unitary groups in {H}ilbert space. {\it Annals of
  Mathematics.} 1932\string;33(3)\string:643--648.

\bibitem{EngelNagel2000}
Engel KJ, Nagel R. {\it One-Parameter Semigroups for Linear Evolution
  Equations}. 194 of {\it Graduate Texts in Mathematics}.
\newblock Springer, 2000.

\bibitem{banach1932theorie}
Banach S. {\it Th{\'e}orie des Op{\'e}rations Lin{\'e}aires}.
\newblock Monografje Matematyczne, 1932.

\bibitem{alaoglu1940weak}
Alaoglu L. Weak topologies of normed linear spaces. {\it Annals of
  Mathematics.} 1940\string;41(1)\string:252--267.

\end{thebibliography}
\end{document}